\pdfoutput=1 
\documentclass[manuscript,screen]{acmart}
\AtBeginDocument{%
  \providecommand\BibTeX{{%
    \normalfont B\kern-0.5em{\scshape i\kern-0.25em b}\kern-0.8em\TeX}}}

\setcopyright{acmcopyright}
\copyrightyear{2018}
\acmYear{2018}
\acmDOI{XXXXXXX.XXXXXXX}

\acmConference[Conference acronym 'XX]{Make sure to enter the correct
  conference title from your rights confirmation emai}{June 03--05,
  2018}{Woodstock, NY}
%
%
\acmBooktitle{Woodstock '18: ACM Symposium on Neural Gaze Detection,
 June 03--05, 2018, Woodstock, NY} 
\acmPrice{15.00}
\acmISBN{978-1-4503-XXXX-X/18/06}

\usepackage{hyperref}

\usepackage[utf8]{inputenc}

\usepackage{tikz}
\usetikzlibrary{positioning,calc,arrows,shapes,  tikzmark, decorations.pathreplacing}

\usepackage{virginialake}
\usepackage{tikz-cd}
\usepackage{cleveref}
 \usepackage[textsize=tiny, disable]{todonotes}

\usepackage{caption}
\usepackage{subfig}

 \usepackage{enumerate}

\newcommand{\dfn}{:=}
\renewcommand{\emptyset}{\varnothing}
\renewcommand{\setminus}{-}

\newcommand{\red}[1]{{\color{red}#1}}
\newcommand{\blue}[1]{{\color{blue}#1}}
\newcommand{\purple}[1]{{\color{purple}#1}}
\definecolor{mygreen}{rgb}{0, 0.5, 0}
\newcommand{\green}[1]{{\color{mygreen}#1}}
\newcommand{\orange}[1]{{\color{orange}#1}}

\newcommand{\anupam}[1]{\todo{A: #1}}
\newcommand{\gianluca}[1]{\todo{G: #1}}



\newtheorem{thm}{Theorem}
\newtheorem{lem}[thm]{Lemma}
\newtheorem{prop}[thm]{Proposition}
\newtheorem{cor}[thm]{Corollary}

\newtheorem{fact}[thm]{Fact}

\theoremstyle{definition}
\newtheorem{defn}[thm]{Definition}
\newtheorem{exmp}[thm]{Example}
\newtheorem{conv}[thm]{Convention}

\theoremstyle{remark}
\newtheorem{rem}[thm]{Remark}


\newcommand{\Nat}{\mathbb{N}}


\newcommand{\wfpo}{\vartriangleleft}
\newcommand{\wfpoeq}{\trianglelefteq}

\newcommand{\permpref}{\subset}
\newcommand{\permprefeq}{\subseteq}

\newcommand{\n}{N}

\newcommand{\sq}{\Box}
\newcommand{\sqn}{{\sq\n}}

\newcommand{\sn}{\sqn}
\newcommand{\lists}[3]{{#2},\overset{#1}{\ldots},{#3} }


\newcommand{\ptime}{\mathbf{PTIME}}
\newcommand{\pspace}{\mathbf{PSPACE}}
\newcommand{\elementary}{\mathbf{ELEMENTARY}}

\newcommand{\f}{\mathbf{F}}
\newcommand{\fptime}{\f\ptime}
\newcommand{\fpspace}{\f\pspace}
\newcommand{\felementary}{\f\elementary}
\newcommand{\alogtime}{\mathbf{ALOGTIME}}
\newcommand{\nc}{\mathbf{NC}}
\newcommand{\flinspace}{\mathbf{FLINSPACE}}


\newcommand{\num}[1]{\underline{#1}}
\renewcommand{\succ}[1]{\mathsf{s}_{#1}}
\newcommand{\pred}{\mathsf{p}}
\newcommand{\floor}[1]{\lfloor {#1} \rfloor}
\newcommand{\hlf}[1]{\floor{#1 /2}}
\newcommand{\cnd}{\mathsf{cond}}
\newcommand{\proj}[2]{\pi^{#1}_{#2}}
\newcommand{\s}[1]{\vert {#1}\vert}

\newcommand{\model}[1]{\denot{#1}}
\newcommand{\denot}[1]{f_{#1}}
\newcommand{\ex}{\mathsf{ex}}

\newcommand{\sumlen}[1]{|\hspace{-.15em}|#1|\hspace{-.15em}|}

\newcommand{\sexp}{\mathcal{E}}
\newcommand{\cconc}{\mathcal{C}}

\newcommand{\incrementation}{\mathcal{I}}
\newcommand{\successor}{\mathcal{S}}
\newcommand{\predecessor}{\mathcal{P}}
\newcommand{\completeness}{\mathcal{F}}
\newcommand{\primrec}{\mathcal{R}}

\newcommand{\unbounded}{\mathcal{U}}
\newcommand{\binarytounary}{\mathcal{N}}
\newcommand{\gfunction}{\mathcal{G}}
\newcommand{\hfunction}{\mathcal{H}}

\newcommand{\numeral}[1]{\underline{#1}}
\newcommand{\permi}[1]{\vec{#1}}


\newcommand{\saferec}{\mathsf{srec}}
\newcommand{\srec}{\saferec}
\newcommand{\saferecwfpo}[1]{\saferec_{#1}}
\newcommand{\srecwfpo}[1]{\saferecwfpo{#1}}

\newcommand{\srecpp}{\srecwfpo{\permpref}}

\newcommand{\safenesrec}{\mathsf{snrec}}
\newcommand{\snrec}{\safenesrec}
\newcommand{\safenesrecwfpo}[1]{\safenesrec_{#1}}
\newcommand{\snrecwfpo}[1]{\safenesrecwfpo{#1}}

\newcommand{\snrecpp}{\snrecwfpo{\permpref}}

\newcommand{\ssrecwfpo}[1]{\mathsf{ssrec}_{#1}}
\newcommand{\ssnrecwfpo}[1]{\mathsf{ssnrec}_{#1}}

\newcommand{\ssrecpp}{\ssrecwfpo{\permpref}}
\newcommand{\ssnrecpp}{\ssnrecwfpo{\permpref}}


\newcommand{\bc}{\mathsf{B}}
\newcommand{\bcwfpo}[1]{\bc^{#1}}
\newcommand{\bcnorec}{\bc^{-}}

\newcommand{\nested}{\mathsf{N}}
\newcommand{\nbc}{\nested\bc}
\newcommand{\nbcwfpo}[1]{\nested\bcwfpo{#1}}

\newcommand{\shallow}{\mathsf{S}}
\newcommand{\sbc}{\shallow\bc}

\newcommand{\bcpp}{\bcwfpo{\permpref}}
\newcommand{\nbcpp}{\nbcwfpo{\permpref}}


\newcommand{\seqar}{\Rightarrow}
\newcommand{\der}{\mathcal{D}}
\newcommand{\pder}[1]{\der_{#1}}
\newcommand{\rd}{{R}_\der}
\newcommand{\derrd}{\langle \der, \rd \rangle}
\newcommand{\bud}{\mathit{Bud}}

\newcommand{\close}[1]{{C}_{#1}}
\newcommand{\open}[1]{{O}_{#1}}
 
 \newcommand{\hnu}{\nu_0}

\newcommand{\id}{\mathsf{id}}
\newcommand{\wk}{\mathsf{w}}
\newcommand{\exch}{\mathsf{e}}
\newcommand{\cntr}{\mathsf{c}}
\newcommand{\cut}{\mathsf{cut}}

\newcommand{\boxlef}{\sq_l}
\newcommand{\boxrig}{\sq_r}
\newcommand{\sql}{\boxlef}
\newcommand{\sqr}{\boxrig}
\newcommand{\zero}{0}
\newcommand{\com}{\mathsf{dis}}
\newcommand{\rules}{\mathsf{r}}


\newcommand{\cyclic}{\mathsf{C}}
\newcommand{\cbc}{\cyclic\bc}
\newcommand{\ncbc}{\cyclic\nested\bc}

\newcommand{\slr}{\mathsf{SLR}}

\newcommand{\tgodel}{\mathsf{T}}
\newcommand{\ntgodel}[1]{\tgodel_{#1}}
\newcommand{\ct}{\mathsf{CT}}

\begin{document}

\title{Cyclic Implicit Complexity}

\author{Gianluca Curzi}
\affiliation{%
  \institution{University of Gothenburg}
  \country{Sweden}}
  \affiliation{\institution{University of Birmingham} \country{UK}}
\email{gianluca.curzi@gu.se}
\orcid{57216760760}

\author{Anupam Das}
\affiliation{%
  \institution{University of Birmingham}
  \country{UK}}
\email{A.Das@bham.ac.uk}
\orcid{0000-0002-0142-3676}

\renewcommand{\shortauthors}{G.~Curzi and A.~Das}

\begin{abstract}
  \emph{Circular} (or \emph{cyclic}) proofs have received increasing attention in recent years, and have been proposed as an alternative setting for studying (co)inductive reasoning. 
    In particular, now several type systems based on circular reasoning have been proposed.
    However, little is known about the complexity theoretic aspects of circular proofs, which exhibit sophisticated loop structures atypical of more common `recursion schemes'. 
    
    This paper attempts to bridge the gap between circular proofs and \emph{implicit computational complexity} (ICC). 
    Namely we introduce a circular proof system based on Bellantoni and Cook's famous safe-normal function algebra, and we identify proof theoretical constraints, inspired by ICC, to characterise the polynomial-time and elementary computable functions.
    Along the way we introduce new recursion theoretic implicit characterisations of these classes that may be of interest in their own right.
\end{abstract}

\begin{CCSXML}
<ccs2012>
   <concept>
       <concept_id>10003752.10003777.10003787</concept_id>
       <concept_desc>Theory of computation~Complexity theory and logic</concept_desc>
       <concept_significance>500</concept_significance>
       </concept>
   <concept>
       <concept_id>10003752.10003790.10003792</concept_id>
       <concept_desc>Theory of computation~Proof theory</concept_desc>
       <concept_significance>500</concept_significance>
       </concept>
 </ccs2012>
\end{CCSXML}

\ccsdesc[500]{Theory of computation~Complexity theory and logic}
\ccsdesc[500]{Theory of computation~Proof theory}

\keywords{Cyclic proofs, implicit complexity, function algebras,  safe recursion, higher-order complexity}


\received{20 February 2007}
\received[revised]{12 March 2009}
\received[accepted]{5 June 2009}

\maketitle

\section{Introduction}
Formal proofs are traditionally seen as finite objects modelling  logical or mathematical reasoning. 
\emph{Non-wellfounded} proofs are a generalisation of this notion to an infinitary (but finitely branching) setting, in which consistency is maintained by a standard global condition: the `progressing' criterion. 
%
Special attention is devoted to \emph{regular} (or \emph{circular} or \emph{cyclic}) proofs, i.e.~those non-wellfounded proofs having only finitely many distinct sub-proofs, and which may thus be represented by finite (possibly cyclic) directed graphs. 
For such proofs the progressing criterion may be effectively decided by reduction to the universality problem for B\"uchi automata.



Non-wellfounded proofs have been employed to  reason about the modal $\mu$-calculus and fixed-point logics~\cite{niwinski1996games,dax2006proof}, first-order inductive definitions~\cite{brotherston2011sequent},  Kleene algebra~\cite{das2017cut,DP18},  linear logic~\cite{baelde2016infinitary,DeSaurin-infinets}, arithmetic~\cite{Simpson17-cyclic-arithmetic,BerardiTatsuta,das2018logical},  system $\tgodel$~\cite{Das2021-preprint,Kuperberg-Pous21, Das2021}, and continuous cut-elimination~\cite{mints1978finite, fortier2013cuts,bouncing-threads}. 
In particular,  \cite{Kuperberg-Pous21} and \cite{Das2021,Das2021-preprint} investigate the computational expressivity of circular proofs, with respect to the proofs-as-programs paradigm, in the setting of higher-order primitive recursion.


However little is known about the complexity-theoretic aspects of circular proofs.
Usual termination arguments for circularly typed programs are nonconstructive, proceeding by contradiction and using a non-recursive `totality' oracle (cf.~\cite{Kuperberg-Pous21,Das2021,Das2021-preprint}). 
As a result, these arguments are not appropriate for delivering feasible complexity bounds (cf.~\cite{das2018logical}).
%

The present paper aims to bridge this gap by proposing a circular foundation for Implicit Computational Complexity (ICC), a branch of computational complexity studying machine-free characterisations of complexity classes. 
Our starting point is  Bellantoni and Cook's famous function algebra $\bc$ characterising the polynomial time computable  functions ($\fptime$) using \emph{safe recursion} \cite{BellantoniCook}. 
The prevailing idea behind safe recursion (and its predecessor, `ramified' recursion \cite{Leivant91}) is to organise data into strata in a way that prevents recursive calls being substituted into recursive parameters of previously defined functions.
This approach has been successfully employed to give resource-bound-free characterisations of polynomial-time \cite{BellantoniCook}, levels of the polynomial-time hierarchy \cite{Bellantoni-fph}, and levels of the Grzegorczyk hierarchy \cite{Wirz99characterizingthe}, and has been extended to higher-order settings too \cite{Hofmann97,Leivant99}.


\subsection*{Circular systems for implicit complexity}
Construing $\bc$ as a type system, we consider non-wellfounded proofs, or \emph{coderivations},   generated by its recursion-free subsystem $\bcnorec$. 
The circular  proof system  $\ncbc$ is then obtained by considering the regular and progressing coderivations of $\bcnorec$ which satisfy a further criterion, \emph{safety}, motivated by the eponymous notion from Bellantoni and Cook's work (cf.~also `ramification' in Leivant's work \cite{Leivant91}). 
On the one hand, regularity and progressiveness ensure that coderivations of $\ncbc$ define total computable functions; on the other hand, the latter criterion ensures that the corresponding equational programs are `safe': the recursive call of a function is never substituted into the recursive parameter of a step function.
    
Despite $\ncbc$ having only ground types, it is able to define equational programs that nest   recursive calls, a property that  typically  arises only in higher-order recursion (cf., e.g., \cite{Hofmann97,Leivant99}).  
In fact, we show that this system defines precisely the elementary computable functions ($\felementary$).
Let us point out that the capacity of circular proofs to simulate some higher-order behaviour reflects an emerging pattern in the literature. 
For instance in~\cite{Das2021,Das2021-preprint} it is  shown that the number-theoretic functions definable by type level $n$  proofs of a circular version of system  $\tgodel$ are exactly those definable by  type level $n+1$ proofs of $\tgodel$. 

In the setting of ICC, Hofmann~\cite{Hofmann97} and Leivant~\cite{Leivant99} already observed that higher-order safe recursion mechanisms can be used to characterise $\felementary$. 
In particular, in~\cite{Hofmann97}, Hofmann presents the type system $\slr$ (Safe Linear Recursion) as a higher-order version of $\bc$  imposing a `linearity' restriction on the higher-order safe recursion operator. 
He shows that this system defines just the polynomial-time computable functions on natural numbers ($\fptime$). 

Inspired by~\cite{Hofmann97}, we too introduce a linearity requirement for $\ncbc$ that is able to control the interplay between cycles and the cut rule, called the \emph{left-leaning criterion}. 
The resulting circular  proof system is called $\cbc$, which we show defines precisely $\fptime$. 
%


\subsection*{Function algebras for safe nested recursion and safe recursion along well-founded relations}
As well as introducing the circular systems $\cbc$ and $\ncbc$ just mentioned, we also develop novel function algebras for $\fptime$ and $\felementary$ that allow us to prove the aforementioned complexity characterisations via a `sandwich' technique, cf.~\Cref{fig:picture-main-results}.
This constitutes a novel (and more direct) approach to reducing circularity to recursion, 
that crucially takes advantage of safety.

We give a relativised formulation of $\bc$, i.e.\ with \emph{oracles}, that allows us to define a form of safe \emph{nested} recursion.
The resulting function algebra is called $\nbc$ and is comparable to the type 2 fragment of Leivant's extension of ramified recursion to finite types~\cite{Leivant99}.
The algebras $\bc$ and $\nbc$ will serve as lower bounds for $\cbc$ and $\ncbc$ respectively.

The relativised formulation of function algebras also admits a robust notion of (safe) recursion along a well-founded relation.
We identify a particular well-founded preorder $\permpref$ (`permutation of prefixes') whose corresponding safe recursor induces algebras $\bcpp$ and $\nbcpp$ that will serve as upper bounds for $\cbc$ and $\ncbc$ respectively.


\subsection*{Outline}
This paper is structured as follows. Section~\ref{sec:preliminaries} presents $\bc$ as a proof system.
In Section~\ref{sec:two-tiered-circular-systems-on-notation} we define non-wellfounded proofs and their semantics
and present the circular proof systems $\ncbc$ and $\cbc$.
In Section~\ref{sec:some-variants} we present the function algebras  $\nbc$, $\bcpp$ and $\nbcpp$. 
Section~\ref{sec:characterization-results-for-function-algebras} shows that $\bcpp$ captures $\fptime$ (\Cref{cor:bcpp-bc-fptime}) and that both $\nbc$ and $\nbcpp$ capture $\felementary$ (Corollary~\ref{cor:nb-elementary-characterization}). 
These results require a delicate Bounding Lemma (Lemma~\ref{lem:boundinglemma}) and an encoding of the elementary functions into $\nbc$ (Theorem~\ref{thm:elementary-in-nbc}). In Section~\ref{sec:completeness} we show that any function definable in $\bc$ is also definable in $\cbc$ (Theorem~\ref{thm:bc-in-cbc}), and that any function definable in $\nbc$ is also definable in $\ncbc$ (Theorem~\ref{thm:nbc-in-ncbc}). 
Finally, in Section~\ref{sec:translation}, we present a translation of $\ncbc$ into $\nbcpp$ that maps $\cbc$ coderivations into $\bcpp$ functions (Lemma~\ref{lem:short-translation-lemma}), by reducing circularity to a form of simultaneous recursion on $\permpref$. 

The main results of this paper are summarised in Figure~\ref{fig:picture-main-results}. 

\begin{figure}[t]
    \centering
    \small
\begin{tikzcd}[column sep=normal]
&&&  \felementary  \arrow[rrrd,bend left=10, leftarrow, pos=0.3, "\text{Theorem}~\ref{thm:fp-soundness}"]&&& \\ 
\nbc \arrow[rrr, "\text{Theorem}~\ref{thm:nbc-in-ncbc}"]\arrow[rrru,bend left=10, leftarrow, pos=0.7, "\text{Theorem}~\ref{thm:elementary-in-nbc}"] && &\ncbc 
  \arrow[rrr,  "\text{Lemma}~\ref{lem:short-translation-lemma}"] &  && \nbcpp   \\
  \bc \arrow[rrr, "\text{Theorem}~\ref{thm:bc-in-cbc}"]\arrow[rrrd,bend right=10, leftrightarrow, dashed, pos=0.7, swap,  "\text{\cite{BellantoniCook}}"] &&& \cbc 
      \arrow[rrr, "\text{Lemma}~\ref{lem:short-translation-lemma}"]& && \bcpp  \\ 
      &&& \fptime \arrow[rrru,bend right=10, leftarrow, pos=0.3, swap, "\text{Theorem}~\ref{thm:fp-soundness}"]&& &
\end{tikzcd}
    \caption{Summary of the main results of the paper, where $\rightarrow$ indicates an inclusion ($\subseteq$) of function classes. 
    }
    \label{fig:picture-main-results}
\end{figure}

\subsection*{Comparison with conference version}
This article is a full version of the conference paper \cite{CurziDas:lics22}, which appeared in the proceedings of \emph{Logic in Computer Science 2022}. 
The current version expands upon \cite{CurziDas:lics22} by including full proofs, as well as many further examples and remarks to aid the narrative.
\section{Preliminaries}\label{sec:preliminaries}

Bellantoni and Cook introduced in \cite{BellantoniCook} an algebra of functions based on a simple two-sorted structure. This idea was itself inspired by Leivant's characterisations, one of the founding works in Implicit Computational Complexity (ICC) \cite{Leivant91}.
The resulting `tiering' of the underlying sorts has been a recurring theme in the ICC literature since, and so it is this structure that shall form the basis of the systems we consider in this work.

We consider functions on the natural numbers with inputs distinguished into two sorts: `safe' and `normal'.
We shall write functions explicitly indicating inputs, namely writing $f(x_1, \dots, x_{m}; y_1, \dots, y_{n})$ when $f$ takes $m$ normal inputs $\vec x$ and $n$ safe inputs $\vec y$.
Both sorts vary over the natural numbers, but their roles will be distinguished by the closure operations of the algebras and rules of the systems we consider. We will sometimes call these functions \emph{safe-normal}.

Throughout this work, we write $|x|$ for the length of $x$ (in binary notation), and if $\vec x = x_1, \dots, x_m$ we write $|\vec x|$ for the list $|x_1|,\dots, |x_m|$.

\subsection{Bellantoni-Cook characterisation of $\fptime$}
We first recall Bellantoni-Cook in its original guise.

\begin{defn}
[Bellantoni-Cook algebra]\label{defn:bc-function-algebra}
$\bc$ is defined as the smallest class of (two-sorted) functions containing,
\begin{itemize}
    \item $0(;) := 0 \in \Nat$.
    \item $\proj {m;n} {j;} (x_1, \dots, x_{m}; y_1, \dots, y_{n}) := x_j$, for $1 \leq j\leq m$.
    \item $\proj {m;n} {;j} (x_1, \dots, x_{m}; y_1, \dots, y_{n}) := y_j$, for $1 \leq j\leq n$.
    \item $\succ i (;x):= 2x+i$, for $i \in \{0,1\}$.
    \item $\pred (;x) := \hlf x$.
    \item $\cnd(;w,x,y,z) := \begin{cases}
    x & w = 0 \\
    y & w = 0 \mod 2, w\neq 0 \\
    z & w = 1 \mod 2
    \end{cases}$
\end{itemize}
and closed under the following:
\begin{itemize}
    \item (Safe composition) 
    \begin{itemize}
        \item If $f(\vec x, x; \vec y)\, ,\, g(\vec x;) \in \bc$ then $f(\vec x, g(\vec x;);\vec y) \in \bc$.
        \item If $f(\vec x; \vec y, y) \, , \, g(\vec x; \vec y) \in \bc$ then $f(\vec x; \vec y, g(\vec x; \vec y)) \in \bc$.
    \end{itemize}
     \item (Safe recursion on notation)
    If $g(\vec x;\vec y) \, , \, h_i (x, \vec x; \vec y, y) \in \bc$ for $i=0,1$ then so is $f(x, \vec x;\vec y) $ given by:
    \[
    \arraycolsep=2pt
    \begin{array}{rcll}
        f(0,\vec x; \vec y) & := & g(\vec x;\vec y) \\
        f(\succ 0 x, \vec x; \vec y) & := & h_0 (x, \vec x; \vec y, f(x, \vec x; \vec y)) & \text{if $x \neq 0$}\\
        f(\succ 1 x, \vec x; \vec y) & := & h_1 (x, \vec x; \vec y, f(x, \vec x; \vec y))
    \end{array}
    \]
\end{itemize}
\end{defn}
Intuitively, in a function $f(\vec{x}; \vec y)\in \bc$ only the  normal arguments $\vec x$  can be used as recursive parameters. The idea  behind safe recursion  is that recursive calls can only appear in safe position, and hence they can never be used as recursive parameters of other previously defined functions. Safe composition preserves the distinction between normal and safe arguments by requiring that, when composing  along a normal parameter,  the pre-composing function has no safe parameter at all. As a result, we can effectively substitute a normal parameter into a safe position but not vice-versa.

Writing $\fptime$ for the class of functions computable in polynomial-time, the main result of Bellantoni and Cook is:
\begin{thm}
[\cite{BellantoniCook}] \label{thm:bellantoni}
$f(\vec x;) \in \bc $ if and only if $ f(\vec x) \in \fptime$.
\end{thm}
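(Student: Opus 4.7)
The plan is to prove the two inclusions separately, both being standard results from \cite{BellantoniCook}.

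\textbf{Soundness ($\bc \subseteq \fptime$).} The cornerstone is a polynomial bounding lemma: for every $f(\vec x;\vec y)\in\bc$ there is a polynomial $p_f$ such that
\[
|f(\vec x; \vec y)| \;\leq\; p_f(|\vec x|) + \max_j |y_j|.
\]
I would prove this by induction on the $\bc$ derivation of $f$. Constants, projections, successors, predecessor and $\cnd$ are immediate: successors add $1$ to a safe input, $\pred$ only shortens, and $\cnd$ returns one of its safe arguments. Safe composition along a safe slot just composes bounds, and composition along a normal slot is where the restriction ``$g$ has no safe arguments'' becomes essential: the contribution of $g(\vec x;)$ lives in $|\vec x|$, not in the safe maximum, so the outer $f$'s polynomial is composed with $p_g(|\vec x|)$ and yields another polynomial in $|\vec x|$. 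The interesting case is safe recursion on notation: since the recursive call sits in a \emph{safe} position of $h_i$, the induction hypothesis gives $|h_i(x,\vec x;\vec y,u)| \leq p_h(|x|,|\vec x|) + \max(|\vec y|,|u|)$, so unrolling the recursion $|x|$ times produces only an \emph{additive} increase $|x|\cdot p_h$, keeping the final bound polynomial. Once this bound is in place, a second structural induction, using the fact that each intermediate value is bounded polynomially in $|\vec x|$ and hence writable in polynomial space, shows that $f$ is computable in polynomial time; specialising to $\vec y$ empty gives $f(\vec x;)\in\fptime$.

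\textbf{Completeness ($\fptime \subseteq \bc$).} I would go through Cobham's characterisation of $\fptime$ as the smallest class containing the basic functions and closed under composition and bounded recursion on notation (with a bounding term built from the smash function $x \# y = 2^{|x|\cdot|y|}$). It then suffices to simulate this class inside $\bc$. First, define within $\bc$ a hierarchy of polynomial ``length functions'': using safe recursion on $x$ with a suitable step, one can build, for each polynomial $p$, a function $\bar p(x;)\in\bc$ with $|\bar p(x)| \geq p(|x|)$, and in particular encode the smash. Next, given a $\fptime$ function $g$ defined by Cobham-style bounded recursion from $g_0$ and step functions $h_i$ with bound $b$, one translates the recursion into safe recursion on notation: recurse on the normal parameter, feed the recursive call to $h_i$ through a safe slot, and use the $\bc$-definable bound to truncate. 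The $\bc$ form of $\cnd$ and predecessor suffices to realise the ``minimum with $b$'' trick used to enforce Cobham's bound.

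\textbf{Main obstacle.} The delicate point, as always with Bellantoni--Cook, is the completeness direction: every recursive parameter must be normal while every recursive call must land in a safe slot, and polynomial-time iteration has to be set up without smuggling a safe value into a normal position. Getting the bookkeeping right (in particular, realising Cobham's explicit bound purely by the safe/normal discipline rather than by an external clamp) is where care is required; once this is done, the simulation of polynomial-time Turing machines, or equivalently of Cobham's algebra, goes through by straightforward induction.
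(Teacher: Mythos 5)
The paper does not prove this theorem: it is imported wholesale from \cite{BellantoniCook} and used as a black box (only the soundness-direction bounding idea resurfaces, generalised, in Lemma~\ref{lem:boundinglemma}, and the completeness-direction technique resurfaces in the proof of Theorem~\ref{thm:elementary-in-nbc}). So there is no in-paper argument to compare against; what follows assesses your sketch against the standard proof.

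Your soundness direction is correct and is the canonical argument: the poly-max bound $|f(\vec x;\vec y)| \leq p_f(|\vec x|) + \max|\vec y|$, proved by induction, with the observation that a safe recursive call only contributes additively to the max term, so $|x|$ unrollings cost $|x|\cdot p_h$ rather than an iterated composition. This is exactly the invariant the paper cites as \Cref{eq:bc-polymax-eqn} and generalises in \Cref{sec:bounding-lemma}.

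The completeness direction has a gap at precisely the point you flag as ``the main obstacle'' but do not resolve. If you translate Cobham's bounded recursion by ``recursing on the normal parameter and feeding the recursive call to $h_i$ through a safe slot,'' you need the translated step function to accept the previous value in a \emph{safe} position --- but the Cobham step function may itself be defined by recursion on that argument, and more generally any Cobham function, once translated, may later be plugged into a position that must be safe. The standard fix, which your sketch omits, is to strengthen the induction hypothesis: for every Cobham function $f(\vec x)$ one constructs $f'(w;\vec x)\in\bc$ with \emph{all} original arguments safe and a single fresh normal ``padding'' parameter $w$, together with a polynomial $p_f$ such that $f'(w;\vec x)=f(\vec x)$ whenever $|w|\geq p_f(|\vec x|)$; the recursion in the simulation is then driven by $w$, not by the original recursion variable, and at the very end one substitutes a $\bc$-definable function of sufficient growth for $w$. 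Without this witness/padding device the inductive translation does not go through. (This is exactly the technique the present paper spells out, in the elementary setting, as Lemma~\ref{lem:padding} in \Cref{app:proofs-of-section-completeness}, where it is attributed to \cite{BellantoniCook}.) The rest of your completeness sketch --- building $\bc$-definable functions of polynomial length growth to dominate the smash-based bounds --- is the right supporting machinery.
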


\subsection{Proof theoretic presentation of Bellantoni-Cook}
We shall work with a formulation of Bellantoni and Cook's algebra as a type system with modalities to distinguish the two sorts (similarly to \cite{Hofmann97}). 
In order to facilitate the definition of the circular  system that we present later, we here work with sequent-style typing derivations. 

We only consider \emph{types} (or \emph{formulas}) $\n$ (`safe') and $\sq \n$ (`normal') which intuitively vary over the natural numbers.
We write $A,B,$ etc.\ to vary over types.

A \emph{sequent} is an expression $\Gamma \seqar A$, where $\Gamma$ is a list of types (called the \emph{context} or \emph{antecedent}) and $A$ is a type (called the \emph{succedent}). For a list of types $\Gamma = \lists{k}{\n}{\n}$, we write $\sq\Gamma$ for $\lists{k}{\sn}{\sn}$.

In what follows, we shall essentially identify $\bc$ with the $S4$-style type system in \Cref{fig:bc-type-system}.
The colouring of type occurrences may be ignored for now, they will become relevant in the next section.
Derivations in this system are simply called \emph{$\bc$-derivations}, and will be denoted $\der, \sexp, \ldots$. We write $\der: \Gamma \seqar A$ if the derivation $\der$ has end-sequent  $ \Gamma \seqar A$. We may   write $\der= \rules (\der_1, \ldots, \der_n)$  
(for $ n\leq 3$)
if $\rules$ is the last inference step of $\der$ whose immediate subderivations are, respectively,  $\der_1, \ldots, \der_n$. 
Unless otherwise indicated, we assume that the $\rules$-instance is as typeset in \Cref{fig:bc-type-system}.

\begin{figure*}
\[
\hspace{1.3cm}
   \def\arraystretch{3}
   \begin{array}{c}
\vlinf{\id}{}{\n \seqar \n}{}
\quad 
\vliinf{\cut_\n}{}{\orange \Gamma \seqar B}{\orange \Gamma\seqar \n}{\orange \Gamma, \blue \n \seqar B}
\quad
\vliinf{\cut_\sq}{}{\orange \Gamma \seqar B}{\orange \Gamma\seqar \sn}{\blue{\sn},\orange \Gamma \seqar B}
\quad
\vlinf{\wk_\n}{}{\orange\Gamma, \blue \n \seqar B}{\orange \Gamma \seqar B}
\quad
\vlinf{\wk_{\sq}}{}{\blue{\sn},\orange\Gamma \seqar B}{\orange \Gamma \seqar B}
\quad
\vlinf\exch{}{\orange \Gamma, \blue B,\red A, \purple{\Gamma'} \seqar C}{\orange \Gamma, \red A, \blue B, \purple{\Gamma'} \seqar C}
\\
\vlinf{\sql}{}{\blue \sn, \orange \Gamma \seqar A}{\orange \Gamma, \blue \n \seqar A}
\quad
\vlinf{\sqr}{}{\orange {\sq \Gamma} \seqar \sn}{ \orange {\sq \Gamma} \seqar \n}
\quad
\vlinf{\zero}{}{\seqar \n}{}
\quad
\vlinf{\succ 0 }{}{\orange \Gamma \seqar A}{\orange \Gamma \seqar A}
\quad
\vlinf{\succ 1 }{}{\orange \Gamma \seqar A}{\orange \Gamma \seqar A}
\quad 
\vliiinf{\saferec}{}{\blue\sn, \orange \Gamma \seqar \n}{\orange \Gamma \seqar \n}{\blue\sn, \orange \Gamma  , \red \n \seqar \n}{\blue\sn, \orange \Gamma, \red \n \seqar \n}
\\ 
\vliiinf{\cnd_\n}{}{\orange \Gamma, \blue \n \seqar \n}{\orange \Gamma \seqar \n}{\orange \Gamma, \blue \n \seqar \n}{\orange \Gamma , \blue \n \seqar \n}
\qquad
\vliiinf{\cnd_\sq}{}{\blue{\sn},\orange \Gamma\seqar \n}{\orange \Gamma \seqar \n}{\blue{\sn},\orange \Gamma \seqar \n}{\blue{\sn}, \orange \Gamma  \seqar \n}
\end{array}
\]
    \caption{System $\bc$, as a sequent-style type system.}
    \label{fig:bc-type-system}
\end{figure*}

\begin{conv}
[Left normal, right safe]
In what follows, we  assume that sequents have shape ${\lists{}{\sn}{\sn}, \lists{}{\n}{\n} \seqar A}$, i.e.\ in the LHS all $\sn$ occurrences are placed before all $\n$ occurrences.
Note that this invariant is maintained by the typing rules of \Cref{fig:bc-type-system}, as long as we insist that $A=B$ in the exchange rule $\exch$.
This effectively means that exchange steps have one of the following two forms:
\[
\vlinf{\exch_\n}{}{\orange \Gamma, \blue \n, \red \n, \purple{\vec \n'} \seqar A }{\orange \Gamma, \red \n, \blue \n, \purple{\vec \n'}\seqar A}
\qquad
\vlinf{\exch_\sq}{}{\orange{\sq \vec \n} , \blue \sn , \red \sn, \purple{\Gamma'} \seqar A}{\orange{\sq \vec \n}, \red \sn, \blue \sn, \purple{\Gamma'} \seqar A}
\]
Let us point out that this convention does not change the class of definable functions with only normal inputs, under the semantics we are about to give.

{Henceforth, we shall often refrain from indicating explicit instances of the exchange rule when constructing derivations.}
\end{conv}

We construe the system of $\bc$-derivations as a class of safe-normal functions by identifying each rule instance as an operation on safe-normal functions.
Formally:

\begin{defn}
[Semantics]
\label{defn:semantics-bc}
Given a $\bc$-derivation $\der$ with conclusion  $\lists{m}{\sn}{\sn}, \lists{n}{\n}{\n} \seqar A$ 
we define a two-sorted function $\denot \der (x_1, \dots, x_m;y_1, \dots, y_n)$ by induction on the structure of $\der$ as follows (all rules as typeset in \Cref{fig:bc-type-system}):
\begin{itemize}
    \item If $\der = \id$ then $\denot{\der} (;y) \dfn y $.
    \item If $\der =\wk_\n(\der_0)$ then  $\denot {\der} (\vec x; \vec y, y) \dfn \denot{\der_0} (\vec x;\vec y)$.
    \item If $\der=\wk_\sq(\der_0)$ then $\denot {\der} (x, \vec x; \vec y) \dfn \denot{\der_0} (\vec x;\vec y)$.
    \item If $\der=\exch_\n(\der_0)$ then $\denot {\der} (\vec x; \vec y, y,y', \vec y') \dfn \denot{\der_0} (\vec x; \vec y, y',y, \vec y')$.
    \item If $\der=\exch_\sq(\der_0)$ then $\denot {\der} (\vec x, x, x', \vec x'; \vec y) \dfn \denot{\der_0} (\vec x, x',x , \vec x'; \vec y)$.
    \item If $\der=\sql(\der_0)$ then $\denot { \der} (x, \vec x; \vec y) \dfn \denot{\der_0}(\vec x; \vec y, x)$.
    \item If $\der=\sqr(\der_0)$ then $\denot {\der} (\vec x;) \dfn \denot{\der_0} (\vec x;)$.
    \item If $\der=0$ then $\denot {\der}(;) \dfn 0$.
    \item If $\der=\succ 0(\der_0)$ then $\denot {\der} (\vec x;\vec y) \dfn \succ 0 (; \denot{\der_0}(\vec x;\vec y))$.
    \item  If $\der=\succ 1(\der_0)$ then $\denot {\der} (\vec x;\vec y) \dfn \succ 1 (; \denot{\der_0}(\vec x;\vec y))$.
    \item If $\der = \cut_\n(\der_0, \der_1)$ then $\denot {\der} (\vec x;\vec y) \dfn \denot{\der_1}(\vec x; \vec y, \denot{\der_0} (\vec x; \vec y))$.
    \item If $\der = \cut_\sq(\der_0, \der_1)$ then $\denot {\der} (\vec x;\vec y) \dfn \denot{\der_1}(\denot{\der_0} (\vec x; \vec y), \vec x; \vec y)$.
    \item If $\der=\cnd_\n(\der_0, \der_1, \der_2)$ then:
    \[
        \arraycolsep=2pt
    \begin{array}{rcll}
         \denot {\der} (\vec x;\vec y,0) & \dfn &  \denot{\der_0} (\vec x;\vec y) \\
         \denot {\der} (\vec x; \vec y, \succ 0 y) & \dfn & \denot{\der_1} (\vec x; \vec y, y)  & \text{if $y\neq 0$} \\
         \denot {\der} (\vec x; \vec y, \succ 1 y) & \dfn & \denot{\der_2} (\vec x; \vec y, y) 
    \end{array}
    \]
    \item  If $\der=\cnd_\sq(\der_0, \der_1, \der_2)$  then:
    \[
            \arraycolsep=2pt
    \begin{array}{rcll}
         \denot \der (0, \vec x;\vec y) & \dfn &  \denot{\der_0} (\vec x;\vec y) \\
         \denot \der (\succ 0 x, \vec x; \vec y) & \dfn & \denot{\der_1} (x, \vec x; \vec y) & \text{if $x \neq 0$} \\
         \denot \der (\succ 1 x , \vec x; \vec y) & \dfn & \denot{\der_2} (x, \vec x; \vec y) 
    \end{array}
    \]
    \item If $\der =\srec(\der_0, \der_1, \der_2)$ then:
    \[
     \arraycolsep=2pt
    \begin{array}{rcll}
         \denot \der (0, \vec x;\vec y) & \dfn &  \denot{\der_0} (\vec x;\vec y) \\
         \denot \der (\succ 0 x, \vec x; \vec y) & \dfn & \denot{\der_1} (x, \vec x; \vec y, \denot\der(x, \vec x; \vec y)) & \text{if $x \neq 0$} \\
         \denot \der (\succ 1 x , \vec x; \vec y) & \dfn & \denot{\der_2} (x, \vec x; \vec y,  \denot\der(x, \vec x; \vec y)) 
    \end{array}
    \]
\end{itemize}
\end{defn}

This formal semantics exposes how $\bc$-derivations and $\bc$ functions relate. The rule $\saferec$ in~\Cref{fig:bc-type-system} corresponds to safe recursion, and safe composition along safe parameters is expressed by means of the rules $\cut_\n$. 
Note, however, that the function $f_\der$ is not quite defined according to function algebra $\bc$, due to the  interpretation of the $\cut_\sq$ rule apparently not satisfying the required constraint on safe composition along a normal parameter.
However, this admission turns out to
be harmless, as exposited in the following proposition:

\begin{prop}\label{prop:modal-succedent-derivations}
 Given a $\bc$-derivation $\der: \sq \Gamma , \vec \n \seqar \sn$, there is a $\bc$-derivation   $\der^*:\sq\Gamma \seqar \sn$ no larger than $\der$ such that, for any $\vec x$, $\vec y$:
\begin{equation*}
 \denot\der(\vec{x}; \vec{y})= \denot{\der^*}(\vec{x};).
\end{equation*}
\end{prop}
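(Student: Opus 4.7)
The plan is to induct on the structure of $\der$, case-analysing the last rule. The driving observation is that having $\sn$ in the succedent severely restricts the possibilities: inspecting \Cref{fig:bc-type-system}, only $\sqr$, the structural rules ($\wk_\n, \wk_\sq, \exch, \sql$), the successor rules ($\succ 0, \succ 1$), and the cuts $\cut_\n, \cut_\sq$ (with cut formula of type $\sn$) can conclude a sequent with $\sn$ on the right. The rules $\id$, $0$, $\cnd_\n$, $\cnd_\sq$ and $\saferec$ all produce succedent $\n$, so they are ruled out automatically.

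The base case is $\sqr$, whose conclusion has the form $\sq\Gamma \seqar \sn$, forcing $\vec{\n}$ to be empty; then one takes $\der^* \dfn \der$ and there is nothing to reduce. The easy inductive cases are the rules that preserve the succedent and act only on the boxed part of the context or the overall shape, namely $\wk_\sq$, $\exch$, $\succ 0$, $\succ 1$: I would apply the IH to the premise and then reinstate the last rule on top of $\der_0^*$. The case $\sql$ is similar, except that after eliminating the additional safe $\n$ in the premise via IH there is no longer an $\n$ to modalise, so I would prepend a $\wk_\sq$ instead. The case $\wk_\n$ is where the reduction actually bites: the premise has one fewer safe argument, and IH directly yields $\der^* \dfn \der_0^*$, whose denotation by definition does not depend on the weakened input. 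The case $\cut_\sq$ is also smooth: both premises have succedent $\sn$, so IH applies to each, and one reassembles by $\cut_\sq$.

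The main conceptual hurdle is the case $\cut_\n$, where the left premise has succedent $\n$ and so IH does not apply to it. The idea is that IH on the left is not needed at all: applying IH to the right premise $\sq\Gamma, \vec\n, \n \seqar \sn$ yields $\der_1^* : \sq\Gamma \seqar \sn$ whose denotation equals $\denot{\der_1}(\vec x; \vec y, z)$ \emph{uniformly in $\vec y$ and $z$}; instantiating $z \dfn \denot{\der_0}(\vec x; \vec y)$ gives $\denot{\der_1^*}(\vec x;) = \denot{\der_1}(\vec x; \vec y, \denot{\der_0}(\vec x; \vec y)) = \denot\der(\vec x; \vec y)$, so $\der^* \dfn \der_1^*$ works. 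The ‘smaller size’ clause is maintained throughout, because in every non-trivial case the IH supplies a strictly smaller derivation and the reconstruction on top either reuses a single rule or drops one entirely; the only non-strict case is the base $\sqr$ with $\vec\n$ already empty, where no reduction is needed.
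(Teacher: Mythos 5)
Your proposal is correct and follows essentially the same route as the paper's own proof: induction on the derivation with a case analysis on the last rule, observing that rules with non-modal succedent cannot occur, discarding the weakened subderivation at $\wk_\n$, discarding the left premise at $\cut_\n$ and taking $\der^*\dfn\der_1^*$, and recursing on both premises at $\cut_\sq$. The only cosmetic differences are that you make explicit the $\wk_\sq$ repair in the $\sql$ case and flag the non-strictness of the size claim at the $\sqr$ base case, both of which the paper glosses over.
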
 
\begin{proof}
The proof is by induction on the size of the derivation. The case where the last rule of $\der$ is an instance of $\id, \zero, \sqr, \cnd_{\n}, \cnd_\sq$, or $\srec$ holds vacuously. If  the last rule of $\der$ is an instance of $\exch_{\n},\exch_{\sq},  \sql,\succ i$, and $\wk_\sq$ then we apply the induction hypothesis. Let us now suppose that $\der$ has been obtained from a derivation $\der_0$ by applying an instance of  $\wk_\n$. By induction hypothesis, there exists a derivation  $\pder0^*:\lists{n}{\sn}{\sn} \seqar \sn$ such that $\model{\pder0}(\vec x; \vec y)= \model{\pder0^*}(\vec x; )$. Since $\model{\der}(\vec x; \vec y, y)= \model{\pder0}(\vec x; \vec y)=\model{\pder0^*}(\vec x; )$ we just set $\der^*= \pder{0}^*$.  Suppose now that $\der$ is obtained from two derivations $\der_0$ and $\der_1$ by applying an instance of  $\cut_\n$. By induction hypothesis, there exists $\pder{1}^*$ such that $\model{\pder{1}}(\vec x;\vec y,y)=\model{\pder{1}^*}(\vec x;)$. Since 
 $\model{\pder{1}}(\vec x; \vec y,  \model{\pder{0}}(\vec x; \vec y))= \model{\pder{1}^*}(\vec x;)$, we set $\der^*=\pder{1}^*$. As for the case where the last rule is $\cut_\sq$, by induction hypothesis, there exist derivations $\pder{0}^*$ and $\pder{1}^*$ such that $\model{\pder{0}}(\vec x; \vec y)= \model{\pder{0}^*}(\vec x;)$ and $\model{\pder{1}}(\vec x,x; \vec y)= \model{\pder{1}^*}(\vec x, x;)$, so that we define $\der^*$ as the derivation obtained from $\pder{0}^*$ and $\pder{1}^*$ by applying the rule $\cut_{\sq}$.
\end{proof}



Our overloading of the notation $\bc$, for both a function algebra and for a type system, is
now justified by:

\begin{prop}
\label{prop:bc-type-system-characterisation}
 $f(\vec x;\vec y)\in \bc$ iff there is a $\bc$-derivation $\der$ for which $\denot \der (\vec x; \vec y) = f(\vec x;\vec y)$.
\end{prop}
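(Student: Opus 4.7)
The proposition asserts equality of two function classes, so I would prove both inclusions by induction. Neither direction is surprising, but the forward one needs \Cref{prop:modal-succedent-derivations} to handle one rule.

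\emph{From derivations to the algebra.} Proceed by induction on the size of $\der$. For each rule one reads off a $\bc$-construction of $\denot\der$: $\id$, $\zero$, $\succ 0$, $\succ 1$ are (essentially) base functions; $\sqr$ is semantically invisible since both its premise and conclusion have only modal antecedents; $\wk_\n$, $\wk_\sq$, $\exch$ and $\sql$ correspond to reindexing an input or introducing a dummy one, i.e.\ safe composition with projections of the appropriate sort; $\cnd_\n$ and $\cnd_\sq$ are the base function $\cnd$ composed with $\pred$ (to retrieve the predecessor of the case-analysed input) via safe compositions; $\cut_\n$ is literally safe composition along a safe parameter; and $\saferec$ is literally safe recursion on notation. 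The only genuinely subtle case is $\cut_\sq$: here $\denot{\der_0}(\vec x; \vec y)$ is substituted into a \emph{normal} slot of $\denot{\der_1}$, so the substituted function must have no safe arguments. This is exactly what \Cref{prop:modal-succedent-derivations} provides: replace $\der_0$ by the strictly smaller derivation $\der_0^*$ of end-sequent $\sq\Gamma' \seqar \sn$ produced by that proposition, apply the induction hypothesis to $\der_0^*$ to obtain a safe-free $\bc$-function, and compose along normal in $\bc$ as usual.

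\emph{From the algebra to derivations.} Proceed by induction on the formation of $f$ in $\bc$. For the base functions: $0$ is the rule $\zero$; $\succ i$ is that rule applied to $\id$; the projections $\proj{m;n}{j;}$ and $\proj{m;n}{;j}$ are obtained from $\id$ by a sequence of weakenings and, where necessary, a single $\sql$ to transfer a normal variable to the safe zone; the three-way conditional $\cnd$ is $\cnd_\n$ applied to three projection derivations (after dummy weakenings to align contexts); and $\pred$ is $\cnd_\n$ with premises $\zero, \id, \id$. For safe composition along a safe argument we apply $\cut_\n$ to the derivations of $f$ and $g$, using $\wk$ and $\exch$ to align contexts under the ``left normal, right safe'' convention. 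For safe composition along a normal argument, observe that $g(\vec x;)$ has only normal inputs, so its derivation has end-sequent $\sq\vec\n \seqar \n$; apply $\sqr$ to promote it to $\sq\vec\n \seqar \sn$, and then use $\cut_\sq$ against the derivation of $f(\vec x, x; \vec y)$ (with the required weakenings/exchanges). Safe recursion is a direct application of $\saferec$ to the derivations of $g, h_0, h_1$.

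The only conceptual obstacle is the forward $\cut_\sq$ case, which would otherwise violate the no-safe-argument restriction on $\bc$'s normal-composition clause; \Cref{prop:modal-succedent-derivations} was stated precisely to dispose of this. Everything else is bookkeeping, namely threading in weakenings and exchanges to align contexts under the left-normal--right-safe convention, so the pair of inductions above goes through without further incident.
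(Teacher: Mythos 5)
Your proposal is correct and follows essentially the same route as the paper: both directions by induction, with the base functions, cuts and $\saferec$ handled by the evident rule-to-scheme correspondences, and the one delicate case — $\cut_\sq$ in the derivations-to-algebra direction — disposed of exactly as the paper does, by invoking \Cref{prop:modal-succedent-derivations} to replace the modal-succedent premise by a smaller derivation with no safe inputs before applying the induction hypothesis. The paper's appendix proof matches your case analysis in detail (including $\pred$ as $\cnd_\n$ over $\zero,\id,\id$ and normal composition via $\sqr$ followed by $\cut_\sq$).
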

\begin{proof}
Let us first prove the left-right implication by induction on $f \in \bc$. The cases where $f$ is $\zero$ or $\succ i$ are straightforward. If $f$ is a projection then we construct $\der$ using the rules $\id$, $\wk_{\n}$, $\wk_\sq$, $\exch_\n$, and $\exch_{\sq}$. If $f=\pred(; x)$ then $\der$ is as follows: 
\[
\small
\vliiinf{\cnd_\n}{}{\n \seqar \n}{\vlinf{\zero}{}{\seqar \n}{\vlhy{\ }}}{\vlinf{\id}{}{\n \seqar \n}{\vlhy{\ }}}{\vlinf{\id}{}{\n \seqar \n}{\vlhy{\ }}}
\]
If $f$ is $\cnd(;x,y,z,w)$ then $\der$ is constructed using $\id$, $\wk_\n$, $\exch_\n$ and $\cnd_\n$.
Suppose now that $f(\vec x; \vec y)= h(\vec x, g(\vec x;); \vec y)$. Then $\der$ is as follows:
\[
\small
\vlderivation{
\vliin{\cut_\sq}{}
{\sq \vec \n, \vec \n \seqar \n}
{\vlin{\sqr}{}{\sq \vec \n\seqar \red{\sn}}{\vltr{\der_0}{\sq \vec \n\seqar \n}{\vlhy{\ }}{\vlhy{\ }}{\vlhy{\ }}}}
{\vltr{\der_1}{ \sq \vec \n, \red{\sn}, \vec \n \seqar \n}{\vlhy{\ }}{\vlhy{\ }}{\vlhy{\ }}}
}
\]
where $\der_0$ and $\der_1$ are such that $\denot {\der_0}=g$ and $\denot{\der_1}=h$. If $f(\vec x; \vec y)= h(\vec x; \vec y, g(\vec x; \vec y))$, then $\der$ is as follows:
\[
\small
\vlderivation{
\vliin{\cut_\n}{}
{\sq \vec \n, \vec \n \seqar \n}
{\vltr{\der_0}{\sq \vec \n, \vec \n \seqar \red{\n}}{\vlhy{\ }}{\vlhy{\ }}{\vlhy{\ }}}
{\vltr{\der_1}{ \sq \vec \n, \vec \n, \red{\n} \seqar \n}{\vlhy{\ }}{\vlhy{\ }}{\vlhy{\ }}}
}
\]
where $\der_0$ and $\der_1$ are such that $\denot {\der_0}=g$ and $\denot{\der_1}=h$. Last, suppose that $f(x, \vec x; \vec y)$ has been obtained by safe recursion from $g(\vec x; \vec y)$ and $h_i(x, \vec x; \vec y, y)$ with $i=0,1$. Then,  $\der$ is as follows:
\[
\small
\vlderivation{
\vliiin{\srec}{}
{\sn, \sq \vec \n, \vec \n \seqar \n}
{\vltr{\der_{0}}{\sq \vec \n, \vec \n \seqar \n}{\vlhy{\ }}{\vlhy{\ }}{\vlhy{\ }}}
 {\vltr{\der_{1}}{\sn, \sq \vec \n, \vec \n, \n   \seqar \n}{\vlhy{\ }}{\vlhy{\ }}{\vlhy{\ }}}
 {\vltr{\der_{2}}{\sn, \sq \vec \n, \vec \n, \n   \seqar \n}{\vlhy{\ }}{\vlhy{\ }}{\vlhy{\ }}}
}
\]
where $\pder0$, $\pder1$, and $\pder2$ are such that $\denot{\der_0}=g$, $\denot{\der_1}=h_0$ and $\denot{\der_2}=h_1$. 

For the right-left implication, we prove by  induction on the size of derivations that $\der: \lists{n}{\sn}{\sn},\lists{m}{\n}{\n} \seqar C$ implies $\denot \der\in \bc$. The cases where the last rule of $\der$ is $\id, \zero, \succ i, \wk_{\n}, \wk_\sq, \exch_\n,\exch_\sq, \sql, \sqr$ are all straightforward using constants, successors and projections. If $\der$ has been obtained from two derivations $\pder0$ and $\pder1$ by applying an instance of $\cut_\n$ then, by applying the induction hypothesis,  $\denot \der(\vec x; \vec y)= \denot{\pder1}(\vec x; \vec y, \denot{\pder0}(\vec x; \vec y))\in \bc$. As for the case where the last rule is $\cut_\sq$, by  Proposition~\ref{prop:modal-succedent-derivations}  there exists a derivation $\pder0^*$ with size no larger than $\der_0$ such that  $\denot{\pder0}(\vec x;\vec y)=\denot{\pder0^*}(\vec x;)$. By applying the induction hypothesis, we have $\denot\der(\vec x; \vec y)= \denot{\pder1}(\vec x, \denot{\pder0^*}(\vec x;); \vec y)\in \bc$. If $\der$ has been obtained from derivations $\pder0$, $\pder1$, $\pder2$ by applying an instance of  $\cnd_\n$ then, by using the induction hypothesis,  $\denot\der(\vec x; \vec y, y)= \cnd(; y,\denot{\pder0}(\vec x; \vec y), \denot{\pder1}(\vec x; \vec y,\pred(;y)), \denot{\pder2}(\vec x; \vec y,\pred(;y)) )\in \bc$. If last rule is $\cnd_{\sq}$, by  induction hypothesis, 
$\denot\der(x,\vec x; \vec y)= \cnd(;  x, \denot{\pder0}(\vec x; \vec y), \denot{\pder1}(\pred(x;),\vec x; \vec y), \denot{\pder2}(\pred(x;),\vec x; \vec y))$, where    $\pred(x;)= \pred(;\proj{1;0}{1;}(x; )) $, and so $\denot\der(x,\vec x; \vec y)\in \bc$. Last, if $\der$ has been obtained from derivations $\pder0$, $\pder1$, $\pder2$ by applying an instance of $\srec$, then  $\denot\der\in \bc$  using the induction hypothesis and the safe recursion scheme.
\end{proof}


\begin{conv}
Given \Cref{prop:bc-type-system-characterisation} above, 
we shall henceforth freely write $f(\vec x ; \vec y) \in \bc$ if there is a derivation $\der:\sq\Gamma, \vec \n \seqar \n$ with $\denot \der (\vec x; \vec y) = f(\vec x;\vec y) $.
\end{conv}

\section{Two-sorted circular systems on notation}\label{sec:two-tiered-circular-systems-on-notation}

In this section we  introduce a `coinductive' version of $\bc$,  and we study global  criteria that tame its computational strength. 
This proof-theoretic investigation will lead us to two relevant circular systems: $\ncbc$, which morally permits  `nested' versions of safe recursion, and $\cbc$, which will turn out to be closer to usual safe recursion.

Throughout this section we shall work with the set of typing rules $\bcnorec \dfn \bc \setminus \{\saferec\}$.

\subsection{Non-wellfounded typing derivations}
To begin with, we define the notion of `coderivation', which is the fundamental formal object of this section.

\begin{defn}
[Coderivations]
A ($\bcnorec$-)\emph{coderivation} $\der$ is a possibly infinite {rooted} tree (of height $\leq \omega$) generated by the rules of $\bcnorec$.
Formally, we identify $\der$ with a prefix-closed subset of $ \{0,1,2\}^*$ (i.e.\ a ternary tree) where each node  is labelled by an inference step from $\bcnorec$ such that, whenever $\nu\in \der$ is labelled by a step $\vliiinf{}{}{S}{S_1}{\cdots}{S_n}$, for $n\leq 3$, $\nu$ has $n$ children in $\der$ labelled by steps with conclusions $S_1, \dots, S_n$ respectively. Sub-coderivations of a coderivation $\der$ rooted at position $\nu \in \{0,1,2 \}^*$ are typically denoted $\pder\nu$, so that $\pder\epsilon=\der$. 
We write  $\nu \sqsubseteq \mu$ (or  ~$\nu \sqsubset \mu$) if $\nu$ is a prefix (respectively, a strict prefix) of $\mu$, and in this case we say that $\mu$ is \emph{above} (respectively, \emph{strictly above}) $\nu$ or that $\nu$ is \emph{below} (respectively, \emph{strictly below}) $\mu$. 
\anupam{this is unclear to me: check if this is ever used, if not delete. }

We say that a coderivation is \emph{regular} (or \emph{circular}) if it has only finitely many distinct sub-coderivations.
\end{defn}

Note that, while usual derivations may be naturally written as finite trees or dags, regular coderivations may be naturally written as finite directed (possibly cyclic) graphs.
Some examples of regular coderivations can be found in~\Cref{fig:examples-regular-coderivations}, employing the following writing conventions:

\begin{conv}[Representing coderivations]\label{conv:convention}
Henceforth, we may mark steps by $\bullet$ (or similar) in a regular coderivation to indicate roots of identical sub-coderivations. Moreover, to avoid ambiguities and  to ease parsing of (co)derivations, we shall often underline principal formulas of a  rule instance in a given coderivation and omit instances of $\wk_{\sq}$ and $\wk_\n$ as well as certain structural steps, e.g.\ during a cut step. 

Finally, when the sub-coderivations $\der_0$ and $\der_1$ above the second and the third premise of the conditional rule (from left) are similar (or identical), we may compress them into a single parametrised sub-coderivation $\der_i$ (for $i=0,1$).
\end{conv}

\begin{figure*}
    \centering
\[
\small
\vlderivation{
\vliin{\cut_{\sq}}{\bullet}{\blue{\sn} \seqar \n}{
    \vlin{ \sqr}{}{\blue{\sn} \seqar \sn}{\vlin{\sql}{}{\blue{\sn} \seqar \n}{\vlin{ \succ 1}{}{\n \seqar \n}{\vlin{\id}{}{\n \seqar \n}{\vlhy{}}}}}
}{
   \vlin{\cut_{\sq}}{\bullet}{\sn \seqar \n}{\vlhy{\vdots}}
}
}
\qquad \qquad 
\vlderivation{
\vliin{\cnd_\sq}{\bullet}{\blue{\underline{\sn}}, \sq \vec \n \seqar \n}
{
\vltr{\gfunction}{\sq \vec \n \seqar \n}{\vlhy{\ }}{\vlhy{\ }}{\vlhy{\ }}
}
{
  \vliin{\cut_\sq}{{\scriptstyle{i=0,1}}}{\blue{\sn}, \sq \vec \n \seqar \n}
  {
  \vlin{\sqr}{}{\blue{\sn}, \sq \vec \n  \seqar \red{\sn}}{
    \vlin{\cnd_{\sq}}{\bullet}{\blue{\sn}, \sq \vec \n  \seqar {\n}}{\vlhy{\vdots}}
  }
  }{
  \vltr{\hfunction_i}{\blue{\sn}, \sq \vec \n, \red{\sn}  \seqar \n}{\vlhy{\ }}{\vlhy{\ }}{\vlhy{\ }}
  }
}
{
}
}
\]
\[
\small
\vlderivation{
\vliin{\cnd_\sq}{\bullet}{\blue{\underline{\sn}}, \red{\sn} , \n \seqar \n }
 {
   \vliin{\cnd_\sq}{\circ}{\red{\underline{\sn}}, \n \seqar \n }
   {
    \vlin{\id}{}{\n \seqar \n}{\vlhy{}}
   }
   {
   \vlin{\succ i}{{\scriptstyle{i=0,1}}}{\red{\sn},  \n \seqar \n}{\vlin{\cnd_\sq}{\circ}{ \red{\sn}, \n \seqar \n}{\vlhy{\vdots}}}
   }
 }
 {
  \vlin{\succ i}{{\scriptstyle{i=0,1}}}{\blue{\sn}, \red{\sn}, \n \seqar \n}{\vlin{\cnd_\sq}{\bullet}{\blue{\sn}, \red{\sn}, \n \seqar \n}{\vlhy{\vdots}}}
 }
}
\]
    \caption{Examples of regular coderivations $\incrementation$, $\primrec$ and $ \cconc$, from left (assuming $\gfunction$, $\hfunction_0$ and $\hfunction_1$ regular).}
    \label{fig:examples-regular-coderivations}
\end{figure*}

As discussed in~\cite{Das2021,Das2021-preprint,Kuperberg-Pous21}, coderivations can be identified with Kleene-Herbrand-G\"odel style equational programs, in general computing partial recursive functionals  (see, e.g., \cite[\S 63]{Kleene71:intro-to-metamath} for further details).
We shall specialise this idea to our two-sorted setting.

\begin{defn}
[Semantics of coderivations]
\label{defn:semantics-coderivations}
To each $\bcnorec$-coderivation $\der$ we associate a two-sorted Kleene-Herbrand-G\"odel partial function $\denot \der$ obtained by construing the semantics of \Cref{defn:semantics-bc} as a (possibly infinite) equational program.
Given a two-sorted function $f(\vec x; \vec y)$, we say that $f$ is \emph{defined} by a $\bcnorec$-coderivation $\der$ if $\denot\der(\vec x; \vec y)=f(\vec x; \vec y)$. 
\end{defn}

\begin{rem}\label{rem:regular-finite-equation}
Note, in particular, that from a regular coderivation $\der$ we obtain a \emph{finite} equational program determining $\denot \der$.
Of course, our overloading of the notation $\denot \der$ is suggestive since it is consistent with that of \Cref{defn:semantics-bc}.
\end{rem}

\begin{exmp}[Regular coderivations, revisited] Let us consider the semantics of coderivations $\incrementation$, $\primrec$ and $\cconc$ from~\Cref{fig:examples-regular-coderivations}.  
\begin{itemize}
    \item The  partial functions $\denot{\incrementation \nu}$  are given by the following equational program:
\[
 \arraycolsep=2pt
\begin{array}{rcll}
    \denot{\incrementation_{\epsilon}}  (x;) & = & \denot{\incrementation_{1}}(\denot{\incrementation_{0}}(x;);) \\
   \denot{\incrementation_{0}}(x;) & = & \denot{\incrementation_{00}}(x;)\\
   \denot{\incrementation_{00}}(x;) & = & \denot{\incrementation_{000}}(;x)\\
   \denot{\incrementation_{000}}(;x) & = & \succ 1(;x)\\
   \denot{\incrementation_{1}}(x;) & = & \denot{\incrementation_{\epsilon}}(x;)
\end{array}
\]
By purely equational reasoning, we can simplify this program to obtain $\denot{\incrementation_{\epsilon}}(x;)  =  \denot{\incrementation_{\epsilon}}(\succ 1x;)$. 
 Since the above equational program keeps increasing the input, the function $\denot\incrementation=\denot{\incrementation_{\epsilon}}$ is always undefined. 
\item Let $\denot{\gfunction}(\vec x; \vec y)$ and $\denot{\hfunction_i}( x, \vec x, z; \vec y)$ ($i=0,1$) be the functions defined by the  regular  $\bcnorec$-coderivations $\gfunction$ and $\hfunction_i$, respectively. Then the equational program for $\primrec$  can be rewritten as follows:
\begin{equation}
    \label{eq:norm-prim-rec-scheme}
     \arraycolsep=2pt
    \begin{array}{rcll}
 \denot{\primrec}(0, \vec x;) & = & \denot{\gfunction}(\vec x; )\\
  \denot{\primrec}(\succ ix,\vec x; ) & = &\denot{\hfunction_i}( x, \vec x, \denot{\primrec}(x, \vec x; ); )  
\end{array}
\end{equation}
which is an instance of a \emph{non-safe}  recursion scheme (on notation).
    \item The equational program of  $\cconc$ can be written as:
\[
 \arraycolsep=2pt
\begin{array}{rcll}
    \denot{\cconc_{\epsilon}}  (0, 0;z) & = & z\\
     \denot{\cconc_{\epsilon}}  ( 0, \succ i y;z) & = &   \succ i \denot{\cconc_{\epsilon}} ( x, y;z) & \neq 0 \\ 
     \denot{\cconc_{\epsilon}}  (\succ i x, y;z) & = &   \succ i\denot{\cconc_{\epsilon}}  ( x, y;z) & \neq 0 
\end{array}
\]
which computes concatenation of the binary representation of three natural numbers.
\end{itemize}

\end{exmp}
The above examples illustrate two undesirable features of regular $\bcnorec$-coderivations, from the point of view of implicit complexity:
\begin{enumerate}[I.]
    \item \label{enum:problem1} on the one hand, despite being finitely presentable, they can define \emph{partial functions};
    \item \label{enum:problem2} on the other hand, despite the presence of modalities implementing the normal/safe distinction of function arguments, they  can define \emph{non-safe recursion schemes}.
\end{enumerate}

\subsection{The progressing criterion}

To address Problem~\ref{enum:problem1} we shall adapt to our setting a well-known `termination criterion' from non-wellfounded proof theory.
First, let us recall some standard  proof theoretic concepts about (co)derivations, similar to those in \cite{Das2021,Das2021-preprint,Kuperberg-Pous21}.
\begin{defn}
[Ancestry]
\label{defn:ancestry}
Fix a coderivation $\der$. We say that a type occurrence $A$ is an \emph{immediate ancestor} of a type occurrence $B$ in $\der$ if they are types in a premiss and conclusion (respectively) of an inference step and, as typeset in \Cref{fig:bc-type-system}, have the same colour.
If $A$ and $B$ are in some $\orange \Gamma$ or $\purple{ \Gamma'}$, then furthermore they must be in the same position in the list.
\end{defn}

Being a binary relation, immediate ancestry forms a directed graph upon which our correctness criterion is built. 

\gianluca{I guess that the figure with the examples must be expanded for readability.}
\begin{defn}
[Progressing coderivations]
\label{defn:progressing}
A \emph{thread} is a maximal path in the graph of immediate ancestry.
We say that a (infinite) thread is \emph{progressing} if it is eventually constant $\sn$ and infinitely often principal for a $\cnd_{\sq}$ rule.

A coderivation is \emph{progressing} if each of its infinite branches has a progressing thread.
\end{defn}

\begin{exmp}[Regular coderivations, re-revisited] In~\Cref{fig:examples-regular-coderivations},  $\incrementation$ has precisely one infinite branch (that loops on $\bullet$) which contains no instances of  $\cnd_{\sq}$ at all. Therefore, $\incrementation$ is not progressing. 
On the other hand, $\cconc$ has two simple loops, one on $\bullet$ and the other one on $\circ$. For any infinite branch $B$ we have two cases:
\begin{itemize}
    \item if $B$ crosses the bottommost conditional infinitely many times, it contains a progressing \blue{blue} thread;
    \item otherwise, $B$ crosses the topmost conditional infinitely many times, so that it  contains  a progressing \red{red} thread.
\end{itemize}
Therefore, $\cconc$ is progressing. By similar reasoning, we can conclude that $\primrec$ is progressing whenever $\gfunction$ and $\hfunction_i$ are.
\end{exmp}

Like in \cite{Kuperberg-Pous21,Das2021-preprint,Das2021}, the progressing criterion is sufficient to guarantee that the partial function computed is, in fact, a well-defined total function:

\begin{prop}
\label{prop:prog-total}
If $\der$ is progressing then $\denot \der$ is total.
\end{prop}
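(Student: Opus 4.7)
The plan is to argue by contrapositive: assume $\denot{\der}(\vec{x}; \vec{y})$ is undefined for some input, and I shall derive a contradiction by exhibiting an infinite strictly descending sequence of natural numbers.

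The first step is to make ``undefined at $(\vec{x};\vec{y})$'' operational. Viewing $\der$ as the (possibly infinite) equational program of \Cref{defn:semantics-coderivations}, I would define an evaluation strategy that, at each call state $(\mu, \vec{v})$ consisting of a node $\mu$ of $\der$ and an assignment $\vec{v}$ of values to the sequent at $\mu$, produces a well-defined next call: for unary rules the choice is forced; for $\cnd$-rules the current value of the principal formula selects one of three premises; for $\cut$-rules one first evaluates the left premise and only after it returns a value proceeds to the right. Non-termination of this strategy is equivalent to $\denot{\der}(\vec{x};\vec{y})$ being undefined; using K\"onig's lemma on the binary tree of nested cut-calls to isolate a looping premise, it yields an infinite branch $B = \mu_0, \mu_1, \ldots$ of $\der$ together with value assignments $\vec{v}_i$ to the sequent at each $\mu_i$.

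By the progressing hypothesis, $B$ carries a progressing thread $\tau$ that is eventually constant $\sn$ and principal for $\cnd_\sn$ infinitely often. To each $\sn$-occurrence in $\tau$ I would attach the corresponding value read off from the $\vec{v}_i$. A case analysis on the inference rules, following the immediate-ancestry relation of \Cref{defn:ancestry}, shows that this attached value is preserved from conclusion to premise for every rule except $\cnd_\sn$: there, the principal $\sn$ carries some value $w$ in the conclusion, and the thread continues into a step-case premise only when $w = \succ{0} x$ with $x \neq 0$ or $w = \succ{1} x$, in which case the immediate ancestor in the premise carries the strictly smaller value $\hlf{w}$. Hence along $\tau$ we obtain a weakly descending sequence of natural numbers that strictly decreases at infinitely many positions, which is impossible.

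The main obstacle I anticipate is the first step: witnessing non-termination of the equational program by a single infinite branch rather than by an infinite tree of calls branching at every $\cut$. A fair-scheduling formulation, or equivalently a direct K\"onig's lemma argument on the tree of unresolved sub-calls, should cleanly produce such a branch, after which the descent argument of the second paragraph goes through uniformly.
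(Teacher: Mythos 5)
Your proposal is correct and follows essentially the same route as the paper: assume non-totality, build an infinite branch along which the defined function stays non-total, follow a progressing thread, and derive a contradiction from a non-increasing sequence of naturals that must strictly decrease at the infinitely many $\cnd_\sq$ steps. The only real difference is that the paper attaches to each $\sn$-occurrence on the thread the \emph{least} value making the corresponding sub-coderivation non-total, rather than the actual value arising in a concrete diverging run, which neatly sidesteps the operational-semantics and scheduling bookkeeping you flag as your main obstacle.
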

\begin{proof}
[Proof sketch]
We proceed by contradiction.
If $\denot \der$ is non-total then, since each rule preserves totality top-down, we must have that $\denot{\der'}$ is non-total for one of $\der$'s immediate sub-coderivations $\der'$.
Continuing this reasoning we can build an infinite leftmost `non-total' branch $B=(\der^{i})_{i<\omega}$.
Let $(\sn^i)_{i\geq k}$ be a progressing thread along $B$, and assign to each $\sn^i$ the least natural number $n_i \in \Nat$ such that $\denot{\der^{i}}$ is non-total when $n_i$ is assigned to the type occurrence $\sn^i$.

Now, notice that:
\begin{itemize}
    \item $(n_i)_{i\geq k}$ is monotone non-increasing, by inspection of the rules and their interpretations from \Cref{defn:semantics-bc}.
    \item $(n_i)_{i\geq k}$ does not converge, since $(\sn^i)_{i\geq k}$ is progressing and so is infinitely often principal for $\cnd_\sq$, where the value of $n_i$ must strictly decrease (cf., again, \Cref{defn:semantics-bc}).
\end{itemize}
This contradicts the well-ordering property of the natural numbers.
\end{proof}

One of the most appealing features of the progressing criterion is that, while being rather expressive and admitting many natural programs, e.g.~as we will see in the next subsections, it remains effective (for regular coderivations) thanks to well known arguments in automaton theory:

\begin{fact}[Folklore]
 It is decidable whether a regular coderivation is progressing.
\end{fact}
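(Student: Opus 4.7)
The plan is to reduce the question to an emptiness/inclusion problem for B\"uchi automata on $\omega$-words, following the standard line of argument for cyclic proof systems.

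First I would exploit regularity: since $\der$ has only finitely many distinct sub-coderivations, it admits a finite directed graph presentation $G_\der$ whose vertices are those sub-coderivations and whose edges record the ``immediate premiss'' relation. Infinite branches of $\der$ are then in bijection with infinite paths of $G_\der$ starting at the root, and in particular the set of infinite branches is an $\omega$-regular language $L_\der$ over the finite alphabet of nodes of $G_\der$. Similarly, the immediate ancestry relation of \Cref{defn:ancestry} lifts to a finite relation on pairs (node of $G_\der$, type occurrence in the conclusion sequent of that node), because the conclusions of the finitely many sub-coderivations mention only finitely many type positions.

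Next I would build a nondeterministic B\"uchi automaton $\mathcal{A}_{\mathrm{prog}}$ over the same alphabet that recognises precisely those infinite branches which admit a progressing thread. Its states are pairs $(\nu,i)$ consisting of a node $\nu$ of $G_\der$ and a position $i$ in the conclusion sequent at $\nu$, plus an initial ``pre-thread'' mode used to guess nondeterministically the starting point of the thread. Transitions follow the (finite) immediate ancestry relation, and accepting states are those $(\nu,i)$ where the formula at position $i$ is $\sn$ and is principal for an instance of $\cnd_\sn$ at $\nu$. The ``eventually constant $\sn$'' clause of \Cref{defn:progressing} is enforced by allowing transitions from the initial guessing mode into a ``locked'' mode whose states are restricted to $\sn$-typed positions; once locked, a B\"uchi acceptance condition on $\cnd_\sn$-principal states captures precisely the progressing threads. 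Standard closure properties guarantee that $\mathcal{A}_{\mathrm{prog}}$ can be built effectively from the finite data of $\der$.

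Finally, the coderivation $\der$ is progressing iff $L_\der \subseteq L(\mathcal{A}_{\mathrm{prog}})$, i.e.\ every infinite branch of $\der$ is accepted by $\mathcal{A}_{\mathrm{prog}}$. Since $L_\der$ is itself $\omega$-regular and B\"uchi language inclusion is decidable (e.g.\ by complementing $\mathcal{A}_{\mathrm{prog}}$ via Safra's construction and intersecting with a B\"uchi automaton for $L_\der$, then testing emptiness), the problem is decidable. The main technical obstacle is the correct handling of the ``eventually constant $\sn$'' clause inside a B\"uchi condition, which is why I separate the pre-thread guessing mode from the locked tracking mode; once this is done, the decidability is a routine consequence of well-known results on $\omega$-automata.
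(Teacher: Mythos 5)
Your proposal is correct and is essentially the paper's own (sketched) argument: the paper likewise reduces progressiveness of a regular coderivation to a decidable B\"uchi-automaton problem (it phrases the target as universality of a single automaton built from the finite representation of $\der$, which is interreducible with your inclusion $L_\der \subseteq L(\mathcal{A}_{\mathrm{prog}})$ via complementation). Your extra detail on guessing the thread's starting point and locking onto the eventually-constant $\sn$ positions is exactly the standard construction the paper leaves implicit.
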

This well-known result (see, e.g.,~\cite{DaxHL06} for an exposition for a similar circular system) follows from the fact that the progressing criterion is equivalent to the universality of a  B\"{u}chi automaton of size determined by the (finite) representation of the input coderivation. This problem is decidable in polynomial space, though the correctness of this algorithm requires nontrivial infinitary combinatorics, as formally demonstrated in~\cite{KolodziejczykMP19}.

Let us finally observe that the progressing condition turns out to be sufficient to restate~\Cref{prop:modal-succedent-derivations} in the setting of  non-wellfounded coderivations:

\begin{prop} \label{prop:cutbox} Given a progressing  $\bcnorec$-coderivation  $\der: \sq \Gamma , \vec \n \seqar \sn$, there is a progressing  $\bcnorec$-coderivation $\der^*:\sq\Gamma \seqar \sn$   such that, for any $\vec x$, $\vec y$:
\begin{equation*}
    \model{\der}(\vec{x}; \vec{y})= \model{\der^*}(\vec{x};).
\end{equation*}
\end{prop}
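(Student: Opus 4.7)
The plan is to reduce the proposition to a \emph{semantic independence lemma}: for any progressing $\bcnorec$-coderivation $\der: \sq\Gamma, \vec\n \seqar \sn$, the function $\model\der(\vec x; \vec y)$ does not depend on $\vec y$. Given this lemma, I define $\der^*$ by capping $\der$ below with a chain of $\cut_\n$ steps, each fed on the left by a $\zero$ axiom, so as to collapse each element of $\vec\n$ to $0$. The resulting $\der^* : \sq\Gamma \seqar \sn$ then satisfies $\model{\der^*}(\vec x;) = \model\der(\vec x; 0, \ldots, 0) = \model\der(\vec x; \vec y)$, the last equality by the lemma. Progressiveness transfers directly: the added cap is finite, so any infinite branch of $\der^*$ is one of $\der$ extended below, and any progressing thread --- which necessarily lies within $\sq\Gamma$, since $\vec\n$-types are non-modal and cannot host an $\sn$-thread --- is preserved.

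To prove the lemma, I argue by contradiction. Assuming $\model\der$ depends on $\vec y$, I build an infinite \emph{dependent branch} $B = (\pder^i)_{i<\omega}$ in which each $\pder^i$ has modal succedent $\sn$ and depends on some safe input, with $\pder^{i+1}$ always a premise of $\pder^i$. Propagation from $\pder^i$ to $\pder^{i+1}$ is a case analysis on the bottommost rule of $\pder^i$: for structural rules, $\succ i$, and $\sql$, the unique premise (which has modal succedent) inherits the dependence; for $\cut_\sq$, at least one of its two modal-succedent premises must; for $\cut_\n$, the right (modal-succedent) premise inherits dependence, since any dependence of $\pder^i$ on $\vec y$ manifests as dependence of the right premise on one of its safe arguments (namely $\vec y$ or the slot receiving the left premise's value). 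The rule $\sqr$ cannot occur on $B$, as it forces $\vec\n$ to be empty and hence rules out any dependence altogether.

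The contradiction is that rules with non-modal succedent $\n$ --- notably $\cnd_\sn$ --- can never be the bottommost rule of any $\pder^i$ on $B$. Hence no progressing thread along $B$ is ever principal for $\cnd_\sn$, contradicting the progressiveness of $\der$. The main obstacle lies in the case analysis for propagation of dependence, which is routine but must be done carefully; the entire argument ultimately rests on the same observation used in the proof sketch of Proposition~\ref{prop:modal-succedent-derivations}, namely that $\cnd_\n$ and $\cnd_\sn$ have non-modal succedents, now transferred to the coinductive setting via the above well-foundedness-style contradiction.
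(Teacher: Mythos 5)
Your proof is correct, but it takes a genuinely different route from the paper's. The paper argues syntactically: since the root has modal succedent while every infinite branch must (by progressiveness) reach a $\cnd_\sq$-step, whose succedent is non-modal, every infinite branch crosses a $\sqr$-step; the conclusions of these $\sqr$-steps form a bar, K\"onig's lemma makes the region below the bar finite, and the finite-case transformation of Proposition~\ref{prop:modal-succedent-derivations} is then applied by induction on that finite region, rewriting the bottom part of $\der$. You instead isolate a \emph{semantic} independence lemma and prove it by the same style of argument the paper uses for totality (Proposition~\ref{prop:prog-total}): a genuine dependence on a safe input would propagate upward to yield an infinite branch all of whose sequents have modal succedent, hence containing no $\cnd_\sq$-step and so no progressing thread. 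Your case analysis is sound --- in particular the observations that $\sqr$ is excluded because its conclusion has an all-modal antecedent, that both premises of $\cut_\sq$ have modal succedent so the dependence lands on one of them, and that for $\cut_\n$ the dependence always lands on the right (modal-succedent) premise --- and the branch cannot terminate since $\id$ and $\zero$ have non-modal succedents. What each approach buys: your $\der^*$ is trivial to construct (a finite cap of $\cut_\n$ against $\zero$, which manifestly preserves regularity, progressiveness, safety and left-leaningness, so it would also serve for Proposition~\ref{prop:cutbox-for-left-leaning}), and the independence lemma is a clean reusable statement; the paper's version reuses the finite-case machinery verbatim and exposes the bar of $\sqr$-steps, which is structurally more informative and keeps $\der^*$ free of artificial cuts. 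Both are complete arguments.
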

\begin{proof} By progressiveness, any infinite branch contains a $\cnd_\sq$-step, which has non-modal succedent. 
Thus there is a set of $(\cnd_{\sq}, 0, \id)$-occurrences  that forms a bar across $\der$.
By K\"{o}nig Lemma, the set of all nodes of $\der$ below this bar, say $X_\der$, is finite. The proof now follows by induction on the size of $X_\der$, and proceeds analogously to~\Cref{prop:modal-succedent-derivations}. If the size is $0$ then the last rule of $\der$ must be an instance of $\cnd_{\sq}$, $0$, or $\id$, in which case the statement holds vacuously. Otherwise, we do case analysis. If the last rule of $\der$ is an instance of $\sqr$ or $\cnd_{\n}$ then the statement holds vacuously. If  the last rule of $\der$ is an instance of $\exch_{\n},\exch_{\sq},  \sql,\succ i$, and $\wk_\sq$ then we apply the induction hypothesis. Let us now suppose that $\der$ has been obtained from a coderivation $\der_0$ by applying an instance of  $\wk_\n$. By induction hypothesis, there exists a coderivation  $\pder0^*:\lists{n}{\sn}{\sn} \seqar \sn$ such that $\model{\pder0}(\vec x; \vec y)= \model{\pder0^*}(\vec x; )$. Since $\model{\der}(\vec x; \vec y, y)= \model{\pder0}(\vec x; \vec y)=\model{\pder0^*}(\vec x; )$ we just set $\der^*= \pder{0}^*$.  Suppose now that $\der$ is obtained from two coderivations $\der_0$ and $\der_1$ by applying an instance of  $\cut_\n$. By induction hypothesis, there exists $\pder{1}^*$ such that $\model{\pder{1}}(\vec x;\vec y,y)=\model{\pder{1}^*}(\vec x;)$. Since 
 $\model{\pder{1}}(\vec x; \vec y,  \model{\pder{0}}(\vec x; \vec y))= \model{\pder{1}^*}(\vec x;)$, we set $\der^*=\pder{1}^*$. As for the case where the last rule is $\cut_\sq$, by induction hypothesis, there exist coderivations $\pder{0}^*$ and $\pder{1}^*$ such that $\model{\pder{0}}(\vec x; \vec y)= \model{\pder{0}^*}(\vec x;)$ and $\model{\pder{1}}(\vec x,x; \vec y)= \model{\pder{1}^*}(\vec x, x;)$, so that we define $\der^*$ as the coderivation obtained from $\pder{0}^*$ and $\pder{1}^*$ by applying the rule $\cut_{\sq}$.
\end{proof}

Note that the above proof only depends on progressiveness, and so it holds for non-regular progressing $\bcnorec$-coderivations as well.

\subsection{Computational expressivity of coderivations} \label{subsec:definability in typing derivations.}

Problem~\ref{enum:problem2} indicates that the modal/non-modal distinction for (progressing) $\bcnorec$-coderivations is somewhat redundant and does not suffice, by itself, to control complexity.  
Indeed, by inspection of~\Cref{fig:bc-type-system} it is not hard to see that we can stepwise replace each $\n$ by $\sn$ in the coderivation while preserving progressiveness, so that all definable functions with normal and safe inputs can be defined by using only normal inputs. 

\begin{prop}\label{prop:modal-nonmodal-distinction}
Let $\der: \lists{n}{\sn}{\sn},  \lists{m}{\n}{\n}\seqar \n$ be a  $\bcnorec$-coderivation. Then,    there exists   a  $\bcnorec$-coderivation $\der^\sq: \lists{n+m}{\sn}{\sn}\seqar \n$    s.t.:
\begin{itemize}
    \item $\denot\der(\vec x; \vec y)= \denot{\der^\sq}(\vec x ,\vec y;)$;
    \item $\der^\sq$ does not contain instances of $\wk_\n$, $\exch_\n$, $\cut_\n$, $\cnd_\n$.
\end{itemize}
Moreover, $\der^\sq$ is regular (resp.~progressing) if $\der$ is.
\end{prop}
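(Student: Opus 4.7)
The plan is to define $\der^\sq$ by coinduction on $\der$, locally rewriting each inference of $\der$ into a small block of inferences that promote all non-modal LHS occurrences to modal ones. The only extra `plumbing' rules I need are $\sql$, $\sqr$, and $\exch_\sn$ for moving formulas between the two tiers; no forbidden rule is ever introduced by the translation.

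First, I would write out the local translation rule by rule. Most cases are routine: $\wk_\n, \exch_\n, \cnd_\n$ become $\wk_\sn, \exch_\sn, \cnd_\sn$ (preceded or followed by a bounded number of $\exch_\sn$ steps to keep the newly-promoted $\sn$ in the correct LHS position), while the `safe-blind' rules $\wk_\sn, \exch_\sn, \cnd_\sn, \sqr, \succ i, \zero$ are essentially preserved. The base $\id:\n \seqar \n$ is translated as $\sql$ above $\id$. For an $\sql$ with premise $\Gamma,\n \seqar A$ and conclusion $\sn,\Gamma \seqar A$, the translated premise $\sn\Gamma, \sn \seqar A$ is brought to the translated conclusion $\sn, \sn\Gamma \seqar A$ by iterated $\exch_\sn$. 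The crucial case is $\cut_\n$ with premises $\der_0:\Gamma\seqar\n$ and $\der_1:\Gamma,\n\seqar B$: the recursively translated $\der_0^\sq : \sn\Gamma \seqar \n$ can be promoted to $\sn\Gamma \seqar \sn$ by $\sqr$ (applicable precisely because the entire LHS is now modal, which is the invariant maintained by the construction), and then combined with $\der_1^\sq : \sn\Gamma,\sn \seqar B$ (after an exchange to $\sn,\sn\Gamma \seqar B$) via $\cut_\sn$.

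Next I would verify the three listed properties. The absence of $\wk_\n,\exch_\n,\cut_\n,\cnd_\n$ in $\der^\sq$ is immediate from the construction. Regularity transfers because the translation is local and deterministic: if $\der$ has two identical sub-coderivations rooted at nodes $\nu,\mu$, the co-recursive translation produces identical sub-coderivations in $\der^\sq$, so finitely many distinct sub-coderivations of $\der$ yield finitely many distinct sub-coderivations of $\der^\sq$. The semantic identity $\denot{\der}(\vec x;\vec y) = \denot{\der^\sq}(\vec x, \vec y;)$ is verified by inspecting \Cref{defn:semantics-coderivations} block by block: each inserted $\sql, \sqr, \exch_\sn$ acts as the identity on function denotations, amounting only to shuffling an argument between safe and normal positions, while $\cut_\sn$ at the translated $\cut_\n$ step computes the same safe-composition as in $\der$. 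Formally one checks that the Kleene--Herbrand--G\"odel equational programs of $\der$ and $\der^\sq$ are bisimilar.

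The main obstacle I expect is preservation of progressiveness. Since every translation block is finite (of size bounded by the context length), every infinite branch $B^\sq$ of $\der^\sq$ projects by contracting blocks to an infinite branch $B$ of $\der$. A progressing thread along $B$ is eventually constant $\sn$ and infinitely often principal for a $\cnd_\sn$ step; its image along $B^\sq$ traverses the same $\sn$ ancestry, possibly extended through the auxiliary $\exch_\sn$ (and $\sqr$) steps in the translation blocks, and remains principal for the same $\cnd_\sn$ inferences, which survive translation unchanged. Hence the lifted thread is still progressing, and any additional threads generated by promoting $\n$-types to $\sn$-types (for instance those passing through the translated $\cnd_\n$ steps) can only help. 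Thus $\der^\sq$ is progressing whenever $\der$ is.
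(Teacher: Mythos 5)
Your construction is exactly the paper's: a corecursive, rule-by-rule translation whose only non-trivial cases are $\id$ (handled by $\sql$) and $\cut_\n$ (handled by promoting the left premise with $\sqr$ — licensed by the all-modal-context invariant — and cutting with $\cut_\sq$ after repositioning the promoted formula). The paper leaves the preservation of semantics, regularity and progressiveness as unstated routine checks, which you spell out correctly, so the proposal is sound and matches the intended argument.
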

\begin{proof}
We construct $\der^\sq$ coinductively. We only consider two  interesting cases,  when $\der$ is $\id$ or it is  obtained from $\pder0:\Gamma \seqar \n$ and $\pder1:\Gamma, \n\seqar A$ by applying a $\cut_\n$-step. Then, $\der^\sq$ is constructed,  respectively, as follows:
\[
\small
\vlderivation{
\vlin{\sql}{}{\sn \seqar \n}{\vlin{\id}{}{\n \seqar \n}{\vlhy{\ }}}}\qquad  \quad 
  \vlderivation{
   \vliin{\cut_\sq}{}{\sq \Gamma \seqar \n}{
        \vlin{\sqr}{}{\sq \Gamma \seqar \sn}{ \vltr{\der_0^\sq}{\sq \Gamma \seqar \n}{\vlhy{\ }}{\vlhy{\ }}{\vlhy{\ }}}
    }{
        \vlin{\sql}{}{\sq \Gamma, \sn\seqar \n}{\vltr{\der_1^\sq}{\sq \Gamma, \n\seqar \n}{\vlhy{\ }}{\vlhy{\ }}{\vlhy{\ }}}
    }
    }
\]
the remaining interesting cases are treated similarly.
\end{proof}

Consequently, we may view (regular, progressing) $\bcnorec$-coderivations as the type 0 (regular, progressing) fragment of the system $\ct$ from \cite{Das2021-preprint,Das2021,Kuperberg-Pous21}.

As a result we inherit the following characterisations:

\begin{prop}
\label{prop:computational-props-coderivations}
We have the following:
\begin{enumerate}
    \item\label{item:type1-complete} Any function $f(\vec x;)$ is defined by a progressing $\bcnorec$-coderivation.
    \item\label{item:turing-complete} The class of regular $\bcnorec$-coderivations is Turing-complete, i.e.\ they define every partial recursive function.
    \item\label{item:type1-prim-rec} $f(\vec x;)$ is defined by a regular progressing $\bcnorec$-coderivation
if and only if 
$f(\vec x)$ is type-1-primitive-recursive, i.e.\ it is in the level 1 fragment $\ntgodel 1$ of G\"odel's $\tgodel$.
\end{enumerate}
\end{prop}

The  proof of the above proposition can be found in~\Cref{sec:computational-strength}.

Given the computationally equivalent system $\ct_0$ with contraction from \cite{Das2021}, we can view the above result as a sort of `contraction admissibility' for regular progressing $\bcnorec$-coderivations. 
Call $\bcnorec+ \{\cntr_\n, \cntr_\sn \}$ the extension of $\bcnorec$ with the rules $\cntr_\n$ and $\cntr_\sn$ below:
\begin{equation} \label{eqn:explicit-contraction}
  \vlinf{\cntr_\n}{}{\orange \Gamma, \blue \n\seqar B}{\orange \Gamma, \blue \n , \blue \n \seqar B}
\qquad 
\vlinf{\cntr_\sq}{}{\blue \sn , \orange \Gamma\seqar B}{\blue \sn,\blue \sn , \orange \Gamma \seqar B}
\end{equation}
 where  the semantics for the new system extends the one for $\bcnorec$ in the obvious way, and the notion of (progressing) thread is induced by the given colouring.\footnote{Note that the totality argument of \Cref{prop:prog-total} still applies in the presence of these rules, cf.~also \cite{Das2021}.} 
We have:
\begin{cor} $f(\vec x; )$ is definable by a regular progressing  $\bcnorec+ \{\cntr_\n, \cntr_\sn \}$-coderivation  if and only if it is definable by a regular progressing $\bcnorec$-coderivation.
\end{cor}

\subsection{Proof-level conditions motivated by implicit complexity}\label{sec:Proof-level conditions motivated by implicit complexity}
$\bc$-derivations  locally introduce safe recursion  by means of the rule $\saferec$, and Proposition~\ref{prop:modal-succedent-derivations} ensures that the composition schemes defined by the cut rules are safe. As suggested by Problem~\ref{enum:problem2}, a different situation arises when we move to   $\bcnorec$-coderivations, where the lack of further constraints 
means that we can define `non-safe' equational programs.
We may recover safety by a natural proof-level condition:


\begin{defn}
[Safety]
\label{defn:safety}
A $\bcnorec$-coderivation is \emph{safe} if each infinite branch crosses only finitely many $\cut_\sq$-steps.
\end{defn}

The corresponding equational programs of safe coderivations indeed only have safe inputs in hereditarily safe positions, as we shall soon see.
Let us illustrate this by means of examples.

\begin{exmp}
The coderivation $\incrementation$ in~\Cref{fig:examples-regular-coderivations} is not safe, as there is an instance of $\cut_{\sq}$ in the loop on $\bullet$, which means that there is an infinite branch crossing infinitely many $\cut_{\sq}$-steps. By contrast,  the coderivation  $\cconc$ is safe because it does not contain instances of the rule $\cut_\sq$.  Finally, by inspecting the coderivation $\primrec$  of~\Cref{fig:examples-regular-coderivations}, we notice that the infinite branch that loops on $\bullet$  contains infinitely many $\cut_{\sq}$ steps, 
so it  is not safe.
\end{exmp}

Perhaps surprisingly, however, the safety condition is not enough to restrict the set of $\bcnorec$-definable functions to $\fptime$, as the following example shows.

\begin{exmp}
[Safe exponentiation]
\label{exmp:safe-exponentiation}
Consider the following coderivation $\sexp$,
\[
\vlderivation{
    \vliin{\cnd_{\sq}}{\bullet}{\blue{\sn}, \orange \n \seqar \n}{
        \vlin{\succ{0}}{}{\orange \n \seqar \n}{
        \vlin{\id}{}{\orange \n \seqar \n}{\vlhy{}}
        }
    }{
        \vliin{\cut_\n}{\scriptstyle{i=0,1}}{\blue{\sn}, \orange \n \seqar \n}{
            \vlin{\cnd_{\sq}}{\bullet}{\blue{\sn}, \orange \n \seqar \red \n}{\vlhy{\vdots}}
        }{
            \vlin{\cnd_{\sq}}{\bullet}{\blue{\sn} , \red \n \seqar \n}{\vlhy{\vdots}}
        }
    }
}
\]
where we identify the sub-coderivations above the second and the third premises of the conditional. The coderivation is clearly progressing. Moreover it is safe, as $\sexp$ has no instances of $\cut_{\sq}$. Its  associated equational program can be written as follows:
\begin{equation}
 \arraycolsep=2pt
    \label{eq:sexp-eq-prog}
    \begin{array}{rcll}
    \denot \sexp (0;y) & = & \succ 0 (;y) \\
    \denot \sexp (\succ 0 x;y) & = & \denot \sexp (x; \denot \sexp(x;y))  &\quad  x \neq 0\\
    \denot \sexp (\succ 1 x ; y) & = & \denot \sexp (x; \denot \sexp(x;y))
\end{array}
\end{equation}
The above equational program has already appeared in \cite{Hofmann97,Leivant99}. It is not hard to show, by induction on $x$, that  $\denot\sexp(x;y)=2^{2^{\s{x}}} \cdot  y$.
Thus $f_\sexp$ has exponential growth rate (as long as $y\neq 0$), despite being defined by a `safe' recursion scheme. 
\end{exmp}

The above coderivation exemplifies a  safe recursion scheme that is able to \emph{nest} one recursive call inside another in order to obtain exponential growth rate. This is in fact a peculiar feature of circular proofs, and it is worth discussing.

\begin{rem}
[On nesting and higher-order recursion]
As we
have seen, namely in~\Cref{prop:computational-props-coderivations}.\ref{item:type1-prim-rec},
(progressing) $\bcnorec$-coderivations are able to simulate some sort of higher-order recursion, namely at type 1 (cf.~also \cite{Das2021}).
In this way it is arguably not so surprising that the sort of `nested recursion' in \Cref{eq:sexp-eq-prog} is definable since type 1 recursion, in particular, allows such nesting of the recursive calls. 
To make this point more apparent, consider the following higher-order `safe' recursion operator:
\begin{equation*}
  \mathsf{rec}_A: \sq{\n}\rightarrow (\sq{\n}\rightarrow A  \rightarrow     A)\rightarrow A \rightarrow A   
\end{equation*}
with $A= \n \rightarrow \n$, and      $f(x)=\mathsf{rec}_A(x,  h, g)$ is defined as $  f(0)=g$ and    $f(\succ i x)=h(x,f (x) )$  for $x>0$. By setting $g\dfn \lambda y:\n. \succ 0y$ and $h\dfn  \lambda x: \sn . \lambda u: \n \rightarrow \n . \lambda y: \n . u(u\, y)$ we can easily check that $\denot\sexp(x; y)= \mathsf{rec}_A(x,  h, g)(y)$, where $\sexp$ is as in Example~\ref{exmp:safe-exponentiation}. Hence, the function $\denot\sexp(x; y)$ can be defined by means of a higher-order version of safe recursion. 

As noticed by Hofmann~\cite{Hofmann97}, and formally proved by Leivant~\cite{Leivant99}, $\felementary$ can be characterised using higher-order safe recursion, thanks to this capacity to nest recursive calls.
Moreover, Hofmann showed in~\cite{Hofmann97} that by introducing a `linearity' restriction on the operator $ \mathsf{rec}_A$, which prevents duplication of recursive calls, it is possible to recover the class $\fptime$. The resulting type system, called $\slr$ (`Safe Linear Recursion'), can thus be regarded as a higher-order formulation of $\bc$.
\end{rem}

Following~\cite{Hofmann97}, we shall impose a linearity criterion to rule out those coderivations that nest recursive calls. This is achieved by observing that the duplication of the recursive calls of $\denot\sexp$ in Example~\ref{exmp:safe-exponentiation} is due to the presence in $\sexp$ of loops on $\bullet$ crossing both premises of a $\cut_\n$ step. Hence, our circular-proof-theoretic counterpart of Hofmann's linearity restriction  can be  obtained  by demanding that only the left premiss of each $\cut_\n$ step is crossed by such loops. 
Again, we rather give a more natural proof-level criterion which does not depend on our intuitive notion of loop.

\begin{defn}
[Left-leaning]
\label{defn:left-leaning}
A $\bcnorec$-coderivation is said to be \emph{left-leaning} if each infinite branch goes right at a $\cut_\n$-step only finitely often.
\end{defn}
\begin{exmp}
In~\Cref{fig:examples-regular-coderivations}, $\incrementation$ is trivially left-leaning, as it contains no instances of $\cut_\n$ at all. The  coderivations $\cconc$  and $\primrec$ are also left-leaning, since no  infinite branch can go right at the $\cut_\n$ steps. By contrast, the coderivation $\sexp$ in Example~\ref{exmp:safe-exponentiation} is not left-leaning, as there is an infinite branch looping at $\bullet$ and crossing infinitely many times the rightmost premise of the $\cut_\n$-step.
\end{exmp}

We are now ready to present our circular systems:
\begin{defn}
[Circular Implicit Systems]
\label{defn:cyclic-systems}
$\ncbc$ is the class of safe regular progressing $\bcnorec$-coderivations. 
$\cbc$ is the restriction of $\ncbc$ to only left-leaning coderivations. 
A two-sorted function $f(\vec x;\vec y)$ is \emph{$\ncbc$-definable} (or \emph{$\cbc$-definable}) if there is a coderivation $\der\in \ncbc$ (resp., $\der \in \cbc$) such that $\denot \der (\vec x;\vec y) = f(\vec x;\vec y)$.
\end{defn}

Let us point out that Proposition~\ref{prop:cutbox} can be strengthened to preserve safety and left-leaningness:

\begin{prop} \label{prop:cutbox-for-left-leaning} Let    $\der: \sq \Gamma , \vec \n \seqar \sn$ be a  coderivation in $\ncbc$ (or $\cbc$). There exists a $\ncbc$-coderivation (resp., $\cbc$-coderivation) $\der^*:\sq\Gamma \seqar \sn$  such that
\(
    \model{\der}(\vec{x}; \vec{y})= \model{\der^*}(\vec{x};)
\).
\end{prop}

\subsection{On the complexity of proof-checking}

Note that both the safety and the left-leaning conditions above are defined at the level of arbitrary coderivations, not just regular and/or progressing ones. Moreover, since these conditions are defined at the proof-level rather than the thread-level, they are easy to check on regular coderivations:
\begin{prop} \label{prop:safety-left-leaning-NL} The safety and the left-leaning condition are $\mathbf{NL}$-decidable for regular coderivations.
\end{prop}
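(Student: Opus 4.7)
The plan is to reduce both conditions to reachability in the finite ``dependency graph'' of the regular coderivation, and then invoke Immerman--Szelepcs\'enyi.

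First, given a regular coderivation $\der$, let its distinct sub-coderivations be $\pder{\nu_1},\dots,\pder{\nu_k}$ and form the finite labelled digraph $G_\der$ whose vertices are these $\pder{\nu_i}$ and whose edges are labelled by premise positions (left/middle/right) pointing from $\pder{\nu_i}$ to each of its immediate sub-coderivations. The key observation is that infinite branches of $\der$ correspond exactly to infinite walks in $G_\der$. Since $G_\der$ is finite, every such walk must eventually be confined to a single strongly connected component, so an infinite branch crosses a given node (or edge) label infinitely often if and only if some cycle of $G_\der$ traverses that node (or edge). This yields the following equivalent characterisations:
\begin{itemize}
    \item $\der$ fails safety iff there is some vertex $v$ of $G_\der$ labelled by $\cut_\sq$ that lies on a cycle, i.e.\ $v$ is reachable from itself by a nonempty path;
    \item $\der$ fails the left-leaning condition iff there is some vertex $v$ labelled by $\cut_\n$ whose right child $w$ reaches $v$ in $G_\der$.
\end{itemize}

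Next, I would observe that each of these ``witnessing'' conditions is an $\mathbf{NL}$ question: a logspace machine can nondeterministically guess the $\cut_\sq$-vertex (resp.\ $\cut_\n$-vertex and its right child), and then perform a standard nondeterministic reachability search in $G_\der$, storing only a pointer to the current vertex (which takes $O(\log k)$ bits). Formally this is a reduction to $\mathsf{STCON}$, which is $\mathbf{NL}$-complete. To deduce $\mathbf{NL}$-decidability of the \emph{positive} conditions (safety and left-leaning themselves), I would invoke the Immerman--Szelepcs\'enyi theorem, $\mathbf{NL}=\mathbf{coNL}$, giving a logspace nondeterministic algorithm for the complement.

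The only point that requires care is the correspondence between infinite branches of the (possibly unwound) tree and cycles in the finite graph $G_\der$; this is standard for regular objects but worth spelling out briefly. Apart from that, the argument is essentially routine graph reachability and I do not expect any real obstacle. In a final version one could additionally remark that the representation of $\der$ naturally gives $G_\der$ as input (its size is polynomial in the size of the cycle normal form), so no preliminary computation is needed.
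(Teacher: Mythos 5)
Your proposal is correct and follows essentially the same route as the paper: both reduce failure of safety (resp.\ left-leaning) to the existence of a cycle in the finite dependency graph $G_\der$ passing through a $\cut_\sq$-labelled node (resp.\ through the rightmost-child edge of a $\cut_\n$-labelled node), and then place this cycle-detection problem in $\mathbf{NL}$. The only difference is presentational — the paper verifies a guessed simple cycle with a read-once certificate while you reduce to $\mathsf{STCON}$ — and you are in fact slightly more careful than the paper in explicitly invoking $\mathbf{NL}=\mathbf{coNL}$ to pass from the complement back to the positive conditions.
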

\begin{proof} We can represent a regular coderivation $\der$ as a  finite directed (possibly cyclic) graph $G_\der$ labelled with inference rules. Then,  the problem of deciding whether $\der$ is not safe (resp.~left leaning) comes down to the problem of deciding whether  a cycle $\pi$ of $G_\der$  exists crossing a node labelled $\cut_\sq$ (resp.~crossing the rightmost child of a node labelled $\cut_\n$).  W.l.o.g.~we can assume that $\pi$ is a simple cycle. Given (an encoding of)  $G_\der$ as read-only input and (an encoding of) $\pi$ as a read-only  certificate, we can easily construct a deterministic Turing machine $M$ verifying that the cycle $\pi$ crosses $\cut_\sq$ (resp. the rightmost child of a node $\cut_\n$) in $G_\der$. More specifically:
\begin{itemize}
    \item the size of the certificate is smaller than the size of the description of $G_\der$;
    \item  $M$ reads once from left to right the addresses of the certificate which provide the information about where to move the pointers;
    \item $M$ works in logspace as it only stores addresses in memory. \qedhere
\end{itemize}
\end{proof}

The idea here is that, for regular coderivations, checking that no branch has infinitely many occurrences of a particular rule can be reduced to checking acyclicity of a certain subgraph, which is well-known to be in $\mathbf{coNL}=\mathbf{NL}$.

Recall that progressiveness of regular coderivations is decidable by reduction to universality of B\"uchi automata, a $\pspace$-complete problem.
Indeed progressiveness itself is $\pspace$-complete in many settings, cf.~\cite{NolletST19}.
It is perhaps surprising, therefore, that the safety of a regular coderivation also allows us to decide progressiveness efficiently too, thanks to the following reduction:
\begin{prop}\label{prop:regular-safe-progressing-iff-infinite-branch-infinite-cond}
A  safe  $\bcnorec$-coderivation is progressing iff every infinite branch has infinitely many $\cnd_\sq$-steps.
\end{prop}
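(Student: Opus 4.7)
The forward implication is immediate from the definitions: if $\der$ is progressing then every infinite branch $B$ carries a thread that is principal for $\cnd_\sq$ infinitely often, and hence $B$ itself passes through infinitely many $\cnd_\sq$-steps. No use of safety is required here.

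For the converse, fix a safe $\der$ in which every infinite branch contains infinitely many $\cnd_\sq$-steps, and let $B$ be such a branch. By safety only finitely many $\cut_\sq$-steps occur along $B$; pick a position $k$ past all of them. The idea now is to inspect how the number of $\sn$-formulas in the antecedent evolves along $B$ past $k$. Going up the branch, this count can only \emph{decrease} (precisely at $\wk_\sq$, at $\sql$, or at a $\cnd_\sq$-step entered via its leftmost premise) or \emph{remain constant} (for all other rules of $\bcnorec$, once $\cut_\sq$ is excluded). Being a non-negative, non-increasing integer sequence, it must stabilise at some position $k'\geq k$ to some value $n$.

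From position $k'$ onwards no $\wk_\sq$ or $\sql$ occurs on $B$, and every $\cnd_\sq$-step is entered via its second or third premise. None of the remaining rules destroys an antecedent $\sn$-formula or alters its type, so each of the $n$ $\sn$-occurrences present at $k'$ admits an immediate ancestor at every subsequent position of $B$. This yields $n$ disjoint infinite threads along $B$, each eventually constantly typed $\sn$. By hypothesis $B$ has infinitely many $\cnd_\sq$-steps past $k'$, and the principal $\sn$-formula of each such step must lie on exactly one of these $n$ threads. A pigeonhole argument then extracts a thread principal for $\cnd_\sq$ infinitely often, which is exactly a progressing thread on $B$.

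The main technical step is the rule-by-rule verification of the monotonicity claim for the $\sn$-count, and of the fact that once the count stabilises the thread structure is preserved pointwise (so the $n$ threads truly persist and remain of type $\sn$). Safety is used solely to block the only rule, $\cut_\sq$, that could increase the $\sn$-count and thereby spoil stabilisation; the pigeonhole step then converts the global hypothesis about $\cnd_\sq$-frequency on $B$ into the local existence of a progressing thread.
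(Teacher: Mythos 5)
Your proposal is correct and follows essentially the same route as the paper's proof: use safety to move past all $\cut_\sq$-steps on the branch, observe that the remaining rules give boundedly many infinite $\sn$-threads, and apply the infinite pigeonhole principle to the infinitely many $\cnd_\sq$-steps. Your extra stabilisation step for the antecedent $\sn$-count is a minor (and welcome) refinement that makes explicit why every $\cnd_\sq$-principal occurrence far enough up the branch lies on one of the finitely many persistent threads, but it does not change the substance of the argument.
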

\begin{proof}
The left-right implication is trivial. For the right-left implication, let us consider an infinite branch $B$ of a  safe $\bcnorec$-coderivation $\der$. By safety, there exists  a node $\nu$ of $B$ such that any sequent  above $\nu$ in $B$ is not the conclusion of a $\cut_\sq$-step. Now, by inspecting the rules of $\bcnorec \setminus \{ \cut_\sq \}$ we  observe that:
\begin{itemize}
    \item every modal formula occurrence in $B$ has a unique thread along $B$;
    \item infinite threads along $B$ cannot start strictly above $\nu$. 
\end{itemize}
Hence, setting $k$ to be the number of $\sn$ occurrences in the antecedent of $\nu$, $B$ has (at most) $ k$ infinite threads. Moreover, since $B$ contains infinitely many $\cnd_\sq$-steps, by the Infinite Pigeonhole Principle we conclude that  one of these  threads is infinitely often principal for the $\cnd_\sq$ rule. 
\end{proof}

Thus, using similar reasoning to that of~\Cref{prop:safety-left-leaning-NL}  we may conclude from~\Cref{prop:regular-safe-progressing-iff-infinite-branch-infinite-cond} the following:
\begin{cor} \label{rem:NL} Given a regular  $\bcnorec$-coderivation $\der$, the problem of deciding if $\der$ is in $\ncbc$ (resp.~$\cbc$)  is in  $\mathbf{NL}$.
\end{cor}

Let us point out that the reduction above is similar to (and indeed generalises) an analogous one for cut-free extensions of Kleene algebra, cf.~\cite[Proposition 8]{DP18}.


\section{Some variants of safe recursion}\label{sec:some-variants}
In this section we shall introduce various extensions of $\bc$ to ultimately classify the expressivity of the circular systems $\cbc$ and  $\ncbc$. 
First, starting from the analysis of   Example~\ref{exmp:safe-exponentiation} and  the subsequent system $\ncbc$, we shall define a version of $\bc$ with safe \emph{nested} recursion, called $\nbc$. 
Second, motivated by the more liberal way of defining functions in both $\cbc$ and $\ncbc$, we shall endow the function algebras $\bc$ and $\nbc$ with forms of safe recursion over a well-founded relation $\permpref$ on lists of normal parameters.  Figure~\ref{fig:function-algebras-table} summarises the function algebras considered and their relations.

\begin{figure}[t]
    \[
    \small
   \def\arraystretch{1.3}
    \begin{array}{|c|c|c|}
    \hline
     \textit{\textbf{safe recursion}} & \textit{ on notation} &\textit{ on $\permpref$ }\\       \hline
        \textit{ unnested}& \qquad \bc\qquad  & \qquad \bcpp \qquad \\
        \textit{ nested} &  \nbc  & \nbcpp \\ \hline 
    \end{array}
    \]
        \caption{The function algebras considered in \Cref{sec:some-variants}. Any algebra is included in one below it and to the right of it.}
    \label{fig:function-algebras-table}
\end{figure}

\subsection{Relativised algebras and nested recursion}
One of the key features of the Bellantoni-Cook algebra $\bc$ is that `nesting' of recursive calls is not permitted.
For instance, let us recall the equational program from \Cref{exmp:safe-exponentiation}:
\begin{equation}
 \arraycolsep=2pt
    \label{eq:safe-exp-by-snrec}
    \begin{array}{rcl}
    \ex(0;y) & = & \succ 0 y \\
    \ex(\succ i x ; y) & = & \ex(x;\ex(x;y))
\end{array}
\end{equation}
Recall that $\ex(x;y)=f_\sexp (x;y) = 2^{2^{\s{x}}}\cdot y$.
The `recursion step' on the second line is compatible with safe composition, in that safe inputs only occur in hereditarily safe positions, but one of the recursive calls takes another recursive call among its safe inputs.
In Example~\ref{exmp:safe-exponentiation}  we showed how  the above function  $\ex(x;y)$ can be   $\ncbc$-defined. 
We thus seek a suitable extension of $\bc$ able to formalise such nested recursion to serve as a function algebraic counterpart to $\ncbc$. 

It will be convenient for us to work with generalisations of  $\bc$ including \emph{oracles}.  

\begin{defn}\label{defn:oracles-functions}
For all sets of oracles $\vec a = a_1, \dots, a_k$, we define the algebra of $\bcnorec$ functions \emph{over} $\vec a$ to include all the initial functions of $\bcnorec$ and,
\begin{itemize}
    \item (oracles). $a_i(\vec x;\vec y)$ is a function over $\vec a$, for $1\leq i\leq k$, (where $\vec x$,$\vec y$ have appropriate length for $a_i$).
\end{itemize}
and closed under:
\begin{itemize}
    \item (Safe Composition). 
    \begin{enumerate}[(1)]
        \item \label{enum:safe-composition-condition-1} from $g(\vec x;\vec y), h(\vec x;\vec y,y)$ over $\vec a$ define $f(\vec x;\vec y)$ over $\vec a$  by $f(\vec x;\vec y) = h(\vec x; \vec y, g(\vec x;\vec y))$.
        \item \label{enum:safe-composition-condition} from $g(\vec x;)$ over $\emptyset$ and $h(\vec x, x;\vec y) $ over $\emptyset$ define $f(\vec x; \vec y)$ over $\vec a$ by $f(\vec x; \vec y) = h(\vec x, g(\vec x;); \vec y)$.
    \end{enumerate}
\end{itemize}
We write $\bcnorec(\vec a)$ for the class of functions over $\vec a$ generated in this way.
\end{defn}

Henceforth, we may write $f(\vec a)(\vec x; \vec y)$ to stress that the function $f(\vec x; \vec y)$ is over oracles $\vec a$.

\begin{rem}\label{rem:variables-vs-functions}
Formally, a function $f(\vec a)(\vec x; \vec y)$ can be understood as a type 2 functional that takes type 1 inputs $\vec a$ (safe-normal functions), and type 0 inputs $\vec x$ (normal) and $\vec y$ (safe). With little abuse of notation, we may represent this functional as $\lambda \vec a. f(\vec a)(\vec x; \vec y)$, where $\vec a$ are now \emph{variables} ranging over first-order functions (with appropriate sorts). Notice that this justifies replacing an oracle with another first-order function (with matching sorts). However, to avoid any reference to higher-order functions, in this paper we prefer to adopt the parametric notation $f(\vec a)(\vec x; \vec y)$ where oracles are arbitrary but fixed \emph{functions}. 
\end{rem}

Note that Safe Composition along normal parameters (\Cref{enum:safe-composition-condition} above) comes with the condition that $g(\vec x; )$ is oracle-free. This restriction prevents oracles appearing in normal position. Since our recursion schemes replace oracles with recursive calls, the latter will also be required to appear in safe position. 
The same condition on $h(\vec x, x; \vec y)$ being oracle-free is not strictly necessary for the complexity bounds we are after, as we shall see in the next section when we define more expressive algebras, but is convenient in order to facilitate the `grand tour' strategy of this paper (cf.~\Cref{fig:picture-main-results}).

We shall write, say, $\lambda \vec v .f(\vec x; \vec v)$ for the function taking only safe arguments $\vec v$ with $(\lambda \vec v .f(\vec x; \vec v))(;\vec y) = f(\vec x; \vec y)$ (here $\vec x$ may be seen as parameters).
Nested recursion can be formalised in the setting of algebras-with-oracles as follows:

\begin{defn}
[Safe Nested Recursion]
\label{defn:safe-nested-recursion}
We write $\snrec$ for the scheme:
\begin{itemize}
    \item from $g(\vec x; \vec y)$ over $\emptyset$ and $h_i(a)(x,\vec x;\vec y)$ over $a,\vec a$, define $f(x, \vec x;\vec y) $ over $\vec a$ by:
    \[
     \arraycolsep=2pt
    \begin{array}{rcll}
     f(0, \vec x; \vec y) &=&g(\vec x;\vec y)
    \\
        f(\succ 0 x, \vec x; \vec y) &=&h_0(\lambda \vec v . f(x,\vec x;\vec v))(x,\vec x;\vec y) &\quad  x \neq 0 \\
        f(\succ 1 x, \vec x; \vec y) &=&h_1(\lambda \vec v . f(x,\vec x;\vec v))(x,\vec x;\vec y)
    \end{array}
    \]
\end{itemize}
We write $\nbc(\vec a)$ for the class of functions over $\vec a$ generated from $\bcnorec$ under $\snrec$ and Safe Composition (from \Cref{defn:oracles-functions}), and write simply $\nbc$ for $\nbc(\emptyset)$.
\end{defn}

\begin{rem}[Safe Composition During Safe Recursion]
\label{rem:errata}
    Note that Safe Nested Recursion also admits variants that are not morally `nested' but rather use a form of `composition during recursion':
\begin{itemize}
    \item from $g(\vec x; \vec y), \left(\vec g_j(x, \vec x; \vec y)\right)_{1 \leq j \leq k}$ and  $h_i\left(x, \vec x; \vec y, (z_j)_{1 \leq j \leq k}\right)$  define:
    \begin{equation}
        \label{eq:safe-rec-w-comp-dur-rec-no-oracles}
         \arraycolsep=2pt
    \begin{array}{rcll}
         f(0, \vec x; \vec y) &=&g(\vec x;\vec y)\\
         f(\succ i x , \vec x; \vec y) &=& 
    h_i \left(x, \vec x; \vec y, \left(f(x, \vec x; \vec g_j(x, \vec x; \vec y))\right)_{1 \leq j \leq k}\right)
        \end{array}
    \end{equation}
\end{itemize}
In this case, note that we have allowed the safe inputs of $f$ to take arbitrary values given by previously defined functions, but at the same time $f$ never calls itself in a safe position, as in~\eqref{eq:safe-exp-by-snrec}. 
In the conference version of this paper \cite[p.8]{CurziDas:lics22} we made an unsubstantiated claim that a function algebra extending $\bcnorec$ by \eqref{eq:safe-rec-w-comp-dur-rec-no-oracles} above is equivalent to a restriction $\sbc$ of $\nbc$ requiring that oracles are \emph{unnested} in Safe Composition: in \cref{enum:safe-composition-condition-1}, one of $g(\vec x;\vec y)$ and $h(\vec x;\vec y,y)$ must be oracle-free, i.e.\ over $\emptyset$. 
Such an equivalence is in fact not immediate, as our Safe Composition scheme allows substitution only along a single parameter, so there does not seem to be a direct way of expressing~\Cref{eq:safe-rec-w-comp-dur-rec-no-oracles} in $\nbc$ without nesting oracles, in the sense just described. 
While we did not rely on this equivalence for any of the main results of \cite{CurziDas:lics22}, in this work we simply omit reference to $\sbc$ and its variants to avoid any confusion.
\end{rem}

\subsection{Safe recursion on well-founded relations}
Relativised function algebras may be readily extended by recursion on arbitrary well-founded relations.
For instance, given a well-founded preorder $\wfpoeq$, and writing $\wfpo$ for its strict variant,\footnote{To be precise, for a preorder $\wfpoeq$ we write $x \wfpo y  $ if $x\wfpoeq y$ and $ y \not\wfpoeq x$. As abuse of terminology, we say that $\wfpoeq$ is well-founded just when $\wfpo$ is.}
`safe recursion on $\wfpo$' is given by the scheme:
\begin{itemize}
    \item from $h(a)(x,\vec x;\vec y)$ over $a,\vec a$, define $f(x, \vec x;\vec y)$ over $\vec a$ by: 
    \[ f(x,\vec x; \vec y) = h(\lambda v \wfpo x . f(v,\vec x;\vec y ))(x,\vec x; \vec y )\] 
\end{itemize}
Note here that we employ the notation $\lambda v \wfpo x$ for a `guarded abstraction'. Formally:
\[
(\lambda v \wfpo x . f(v,\vec x; \vec y))(z)
\dfn
\begin{cases}
f(z, \vec x; \vec y) & z \wfpo x \\
0 & \text{otherwise}
\end{cases}
\]
It is now not hard to see that total functions (with oracles) are closed under the recursion scheme above, by reduction to induction on the well-founded relation $\wfpo$.

Note that such schemes can be naturally extended to preorders on \emph{tuples} of numbers too, by abstracting several inputs.
We shall specialise this idea to a particular well-founded preorder that will be helpful later to bound the complexity of definable functions in our systems $\cbc$ and $\ncbc$.

Recall that we say that $x$ is a \emph{prefix} of $y$ if $y$ has the form $xz$ in binary notation, i.e.\ $y $ can be written $x2^n + z$ for some $n\geq 0$ and some $z<2^n$.
We say that $x$ is a \emph{strict prefix} of $y$ if $x$ is a prefix of $y$ but $x \neq y$.

\anupam{Some example earlier should highlight why the permuation of prefixes order is important for our considerations}
\gianluca{Recall example about "max" function, which merge two standard recursions into 1 on two parameters. Perhaps not the best example but it is nstructive.}

\begin{defn}
[Permutations of prefixes]
\label{defn:perm-pref}
Let $[n]$ denote $\{0,\dots, n-1\}$.
We write $(x_0, \dots, x_{n-1}) \permprefeq (y_0, \dots, y_{n-1})$ if, for some permutation $\pi:[n] \to [n] $, we have that $x_i$ is a prefix of $y_{\pi i}$, for all $i<n$.
We write $\vec x \permpref \vec y$ if $\vec x \permprefeq \vec{y} $ but $ \vec y \not\permprefeq \vec{x}$, i.e.\ there is a permutation $\pi : [n]\to [n]$ with $x_i$ a prefix of $y_i$ for each $i<n$ and, for some $i<n$, $x_i$ is a strict prefix of $y_i$.
\end{defn}

It is not hard to see that $\permprefeq$ is a well-founded preorder, by reduction to the fact that the prefix relation is a well-founded partial order.
As a result, we may duly devise a version of safe (nested) recursion on $\permpref$:

\begin{defn}
[Safe (nested) recursion on permutations of prefixes]
\label{defn:saferec-pp}
We write $\nbcpp(\vec a)$ for the class of functions over $\vec a$ generated from $\bcnorec$ under Safe Composition, the scheme $\snrecpp$,
\begin{itemize}
    \item from $h(a)(\vec x; \vec y)$ over $a, \vec a$ define $f(\vec x;\vec y)$ over $\vec a$ by:
    \[
    f(\vec x; \vec y) = 
    h(\lambda \vec u \permpref \vec x, \lambda \vec v. f(\vec u; \vec v))(\vec x; \vec y)
    \]
\end{itemize}
and
the following generalisation of Safe Composition along a Normal Parameter:
\begin{itemize}
 \item[(2)$'$] \label{item:safe-composition-normal-param-with-nonempty-oracles} from $g(\vec x;)$ over $\emptyset$ and $h(\vec x, x;\vec y) $ over $\vec a$ define $f(\vec x; \vec y)$ over $\vec a$ by $f(\vec x; \vec y) = h(\vec x, g(\vec x;); \vec y)$.
\end{itemize}
We define $\bcpp(\vec a)$ to be the restriction of $\nbcpp(\vec a)$
where every instance of $\snrecpp$ has the form:
\begin{itemize}
    \item from $h(a)(\vec x; \vec y)$ over $a, \vec a$, define $f(\vec x;\vec y) $ over $\vec a$:
    \[
    f(\vec x; \vec y) = 
    h(\lambda \vec u \permpref \vec x , \lambda \vec v \permprefeq \vec y . f(\vec u; \vec v))(\vec x ; \vec y)
    \]
\end{itemize}
We call this latter recursion scheme $\srecpp$, e.g.\ if we need to distinguish it from $\snrecpp$.
\end{defn}
Note that the version of safe composition along a normal parameter above differs from the previous one, \Cref{enum:safe-composition-condition} from \Cref{defn:oracles-functions}, since the function $h$ is allowed to use oracles.
Again, this difference is inessential in terms of computational complexity, as we shall see. However, as we have mentioned, the greater expressivity of $\bcpp$ and $\nbcpp$ will facilitate our overall strategy for characterising $\cbc$ and $\ncbc$, cf.~\Cref{fig:picture-main-results}.

%

Let us take a moment to point out that $\nbcpp(\vec a)\supseteq  \bcpp(\vec a)$ indeed contain only well-defined total functions over the oracles $\vec a$, by reduction to induction on $\permpref$.

\subsection{Simultaneous recursion schemes}
For our main results, we will ultimately need a typical property of function algebras, that $\bcpp$ and $\nbcpp$ are closed under simultaneous versions of their recursion schemes.

\begin{defn}[Simultaneous schemes]
\label{defn:simultaneous-schemes}
We define schemes $\ssrecpp$ and $\ssnrecpp$, respectively, as follows, for arbitrary $\vec a = a_1, \dots, a_{k}$:
\begin{itemize}
    \item from $h_i(\vec a)(\vec x; \vec y)$ over $\vec a, \vec b$, for $1 \leq i\leq k$, define $f_i(\vec x;\vec y)$ over $\vec b$, for $1 \leq i \leq k$, by:
    \[
    f_i(\vec x; \vec y) = 
    h_i((\lambda \vec u \permpref \vec x, \lambda \vec v\subseteq \vec y. f_j(\vec u; \vec v))_{1 \leq j\leq  k})(\vec x; \vec y)
    \]
    \item from $h_i(\vec a)(\vec x; \vec y)$ over $\vec a, \vec b$, for $1\leq i \leq k$, define $f_i(\vec x;\vec y)$ over $\vec b$, for $1 \leq i \leq  k$, by:
    \[
    f_i(\vec x; \vec y) = 
    h_i((\lambda \vec u \permpref \vec x, \lambda \vec v. f_j(\vec u; \vec v))_{1 \leq j \leq k})(\vec x; \vec y)
    \]
\end{itemize}
\end{defn}
\begin{prop} \label{prop:simultaneous-recursion-admissible}
We have the following:
\begin{enumerate}
    \item\label{item:sim-rec-admiss-bcpp} If $\vec f(\vec x;\vec y)$ over $\vec b$ are obtained by applying  $\ssrecpp$ to $\vec h(\vec a)(\vec x;\vec y) \in \bcpp(\vec a,\vec b)$, then also $\vec f(\vec x;\vec y) \in \bcpp (\vec b)$.
\item\label{item:sim-rec-admiss-nbcpp-sbcpp}
If $\vec f(\vec x;\vec y)$ over $\vec b$ are obtained by applying  $\ssnrecpp$ to $\vec h(\vec a)(\vec x;\vec y) \in \nbcpp(\vec a,\vec b)$, then also $\vec f(\vec x;\vec y) \in \nbcpp (\vec b)$.
\end{enumerate}
\end{prop}

\begin{proof}
We only prove \Cref{item:sim-rec-admiss-bcpp}, i.e. that $\bcpp$ is closed under $\ssrecpp$, but the same argument works for \Cref{item:sim-rec-admiss-nbcpp-sbcpp}: just ignore all guards on safe inputs in recursive calls.

Let $f_i(\vec x;\vec y)$ and $h_i(a_1, \dots, a_k)(\vec x; \vec y)$ be as given in \Cref{defn:simultaneous-schemes}, and temporarily write $f_j^{\vec x;\vec y}$ for $\lambda \vec u \permpref \vec x, \lambda \vec v \permprefeq \vec y . f_j (\vec u;\vec v)$, so we have:
\[
f_i(\vec x;\vec y) = h_i (f_1^{\vec x; \vec y}, \dots, f_k^{\vec x;\vec y})(\vec x; \vec y)
\]
For $i\in \Nat$, let us temporarily write $\numeral i$ for $i$ in binary notation\footnote{In fact, any notation will do, but we pick one for concreteness.}, and $\permi i$ for the list $\numeral i, \numeral{i+1}, \dots, \numeral k, \numeral 1, \numeral 2, \dots, \numeral{i-1}$.
Note that, for all $i=1, \dots, k$, $\permi i$ is a permutation (in fact a rotation) of $\numeral 1, \dots, \numeral k$.

Now, let $f(\vec x; \vec y, \vec z)$ over oracles $\vec b$ be given as follows:
\begin{equation}
    \label{eq:sim-rec-reduction}
    f(\vec x; \vec y, \vec z) \dfn \begin{cases}
h_1 (f_1^{\vec x;\vec y}, \dots, f_k^{\vec x; \vec y})(\vec x;\vec y) & \vec z = \permi 1 \\
\vdots \\
h_k (f_1^{\vec x;\vec y}, \dots, f_k^{\vec x; \vec y})(\vec x;\vec y) & \vec z = \permi k \\
0 & \text{otherwise}
\end{cases}
\end{equation}
Note that this really is a finite case distinction since each of the boundedly many $\permi i$ has bounded size, both bounds depending only on $k$, and so is computable in $\bcnorec$ over $\vec h$. 

By definition, then, we have that $f(\vec x;\vec y, \permi i) = f_i (\vec x; \vec y)$.
Moreover note that, for each $j=1,\dots,k$, we have,
\[
\arraycolsep=2pt
\begin{array}{rcl}
   f_j^{\vec x;\vec y} (\vec u'; \vec v')  &=& (\lambda \vec u \permpref \vec x, \lambda \vec v \permprefeq \vec y. f_j(\vec u;\vec v)) (\vec u';\vec v') \\
    & = & (\lambda \vec u \permpref \vec x, \lambda \vec v \permprefeq \vec y. f(\vec u;\vec v,\permi j ))(\vec u';\vec v') \\
    & = & (\lambda \vec u \permpref \vec x, \lambda \vec v \permprefeq \vec y, \lambda \vec w \permprefeq \vec z . f(\vec u;\vec v, \vec w)) (\vec u';\vec v',\permi j)
\end{array}
\]
as long as $\vec z$ is some $\vec i$,
so indeed~\eqref{eq:sim-rec-reduction} has the form,
\[
f(\vec x; \vec y, \vec z) = h (\lambda \vec u \permpref \vec x, \lambda \vec v \permprefeq \vec y, \lambda \vec w \permprefeq \vec z . f(\vec u; \vec v, \vec w)) (\vec x;\vec y)
\]
and $f(\vec x;\vec y,\vec z) \in \bcpp(\vec b)$ by $\srecpp$.
Finally, since  $f_i(\vec x;\vec y) = f(\vec x;\vec y, \permi i)$, we indeed have that each $f_i(\vec x;\vec y) \in \bcpp(\vec b)$.
\end{proof}

\section{Characterisations for function algebras}\label{sec:characterization-results-for-function-algebras}
In this section we characterise the complexities  of the function algebras we introduced in the previous section. 
Namely, despite apparently extending $\bc$, $\bcpp$ still contains just the polynomial-time functions, whereas  both  $\nbc$ and $\nbcpp$ are shown to contain just the elementary functions. 
All such results rely on a `bounding lemma' inspired by~\cite{BellantoniCook}.

\subsection{A relativised Bounding Lemma}
\label{sec:bounding-lemma}
Bellantoni and Cook showed in \cite{BellantoniCook} that any function $f(\vec x;\vec y) \in \bc$ satisfies the `poly-max bounding lemma': there is a polynomial $p_f(\vec n)$ such that:\footnote{Recall that, for $\vec x = x_1, \dots, x_n$, we write $|\vec x|$ for $|x_1|, \dots, |x_n|$.}
\begin{equation}
    \label{eq:bc-polymax-eqn}
    |f(\vec x;\vec y)| \leq p_f (|\vec x|) + \max |\vec y| 
\end{equation}
This provided a suitable invariant for eventually proving that all $\bc$-functions were polynomial-time computable.

In this work, inspired by that result, we generalise the bounding lemma to a form suitable to the relativised algebras from the previous section. 
To this end we establish in the next result a sort of `elementary-max' bounding lemma that accounts for the usual poly-max bounding as a special case, by appealing to the notion of (un)nested safe composition.
Both the statement and the proof are quite delicate due to our algebras' formulation using oracles; we must assume an appropriate bound for the oracles themselves, and the various (mutual) dependencies in the statement are subtle.

To state and prove the Bounding Lemma, let us employ the notation $\sumlen{\vec x} \dfn \sum |\vec x|$. 

\begin{lem}[Bounding Lemma]
\label{lem:boundinglemma}
Let $f(\vec a)(\vec x;\vec y) \in \nbcpp(\vec a)$, with $\vec a = a_1, \dots, a_k$, and suppose $g_1, \ldots, g_k$ are safe-normal functions. Then there is an elementary function $e_f(n)$ and a constant $d_f \geq 1$ such that whenever there are constants $\vec c=c_1, \dots, c_k$ satisfying, for $1\leq i \leq k$,
\begin{equation}
    \label{eq:oracle-const-max-bound}
    |g_i (\vec u;\vec v)| \leq c_i + \left(d_f \sum_{j\neq i} c_j \right) + \max|\vec v| \qquad \qquad \text{for all inputs } \vec u,\vec v
\end{equation}
then we have the following, for all inputs $\vec u, \vec v$\footnote{To be clear,  we write $|\vec y| \leq m_f(\vec c,\vec u,\vec v)  $ here as an abbreviation for $\{|y_{j}| \leq m_f(\vec c,\vec u,\vec v)\}_j$.}:
\begin{equation}
    \label{eq:elem-const-max-bound}
    |f(\vec g)(\vec u;\vec v)| \leq m_f(\vec c, \vec u,\vec v)
\end{equation}
\begin{equation}
    \label{eq:input-bounding-eqn}
    f(\vec g)(\vec u; \vec v) \ = \ f\left(\lambda \vec x.\lambda |\vec y| \leq m_f(\vec c, \vec u,\vec v) . g_i(\vec x;\vec y)\right)_{1 \leq i \leq k}(\vec u;\vec v)
\end{equation}
where 
\[
m_f(\vec c,\vec x,\vec y) = 
e_f(\sumlen{\vec x}) + d_f\sum\vec c + \max |\vec y|
\]
Moreover, if in fact $f(\vec x;\vec y) \in \bcpp(\vec a) $, then $d_f=1$ and $e_f(n)$ is a polynomial. 
\end{lem}

Notice that in the above statement we employ the notation $\lambda \vec x, \lambda|\vec y|\leq n$, where $\vec y=y_1, \ldots, y_k$, for the guarded abstraction defined by:
\[
(\lambda \vec x, \lambda|\vec y|\leq n. h(\vec x; \vec y))(\vec w; \vec z):= \begin{cases}
    h(\vec w; \vec z) &\text{if } |z_j|\leq n \text{ for all }1 \leq j \leq k\\
    0 &\text{otherwise.}
\end{cases}
\]

Unwinding the statement above, note that $e_f$ and $d_f$ depend \emph{only} on the function $f$ itself, not on the constants $\vec c$ given for the (mutual) oracle bounds in \Cref{eq:oracle-const-max-bound}.
This is crucial for the proof, namely in the case when $f$ is defined by recursion, substituting different values for $\vec c$ during an inductive argument.

While the role of the elementary bounding function $e_f$ is a natural counterpart of $p_f$ in Bellantoni and Cook's bounding lemma, cf.~\Cref{eq:bc-polymax-eqn}, the role of $d_f$ is perhaps slightly less clear.
Intuitively, $d_f$ represents the amount of `nesting' in the definition of $f$, increasing whenever oracle calls are substituted into arguments for other oracles. 
Hence, if $f$ uses only unnested Safe Composition, then $d_f=1$ as required. 
In fact, it is only important to distinguish whether $d_f=1$ or not, since $d_f$ forms the base of an exponent for defining $e_f$ when $f$ is defined by safe recursion.

Finally let us note that~\Cref{eq:elem-const-max-bound,eq:input-bounding-eqn} are somewhat dual:  while the former bounds the \emph{output} of a function (serving as a modulus of \emph{growth}), the latter bounds the \emph{inputs} (serving as a modulus of \emph{continuity}).

\begin{rem}
Let us also point out that we may relativise the statement of the lemma to any set of oracles including those which $f(\vec x;\vec y)$ is over.
In particular, if $f(\vec x;\vec y)$ is over no oracles, then we may realise \Cref{eq:oracle-const-max-bound} vacuously by choosing $\vec a = \emptyset$ and we would obtain that $|f(\vec x;\vec y)| \leq e_f(\sumlen{\vec x}) + \max |\vec y|$.
More interestingly, in the case when $f(\vec x;\vec y)$ is just, say, $a_i(\vec x;\vec y)$, we may choose to set $\vec a = a_i$ or $\vec a = a_1,\dots, a_n$ in the above lemma, yielding different bounds in each case.
We shall exploit this in inductive hypotheses  in the proof that follows (typically when we write `WLoG').
\end{rem}

\begin{proof}[Proof of~\Cref{lem:boundinglemma}.]
\label{prf:boundinglemma}
We prove~\Cref{eq:elem-const-max-bound} and~\Cref{eq:input-bounding-eqn} by induction on the definition of $f(\vec x;\vec y)$, always assuming that we have $\vec c$ satisfying \Cref{eq:oracle-const-max-bound}.

Throughout the argument we shall actually construct $e_f( n)$ that are \emph{monotone} elementary functions (without loss of generality) and exploit this invariant.
In fact $e_f( n)$ will always be generated by composition from $0, \succ{}, +, \times,  x^y$ and projections.
If $f(\vec x;\vec y) \in \bcpp(\vec a)$ then $e_f(n)$ will be in the same algebra without exponentiation, $ x^y$, i.e.\ it will be a polynomial with only non-negative coefficients.
This property will be made clear by the given explicit definitions of $e_f( n)$ throughout the argument.

Let us start with~\Cref{eq:elem-const-max-bound}. If $f(\vec x;\vec y)$ is an initial function then it suffices to set $e_f(n) \dfn 1 +  n$ and $d_f\dfn 1$.

If $f(\vec x;\vec y) = a_i(\vec x;\vec y)$ then it suffices to set $e_f(n) \dfn 0 $ and $d_f \dfn 1$.

If $f(\vec x;\vec y) = h(\vec x;\vec y, g(\vec x;\vec y))$, let $e_h, e_g, d_h, d_g$ be obtained from the inductive hypothesis.
We have,
\[
\arraycolsep=2pt
\begin{array}{rcl}
    |f(\vec x;\vec y)| & = & |h(\vec x; \vec y, g(\vec x; \vec y))| \\
    & \leq & e_h(\sumlen{\vec x}) + d_h\sum\vec c + \max(|\vec y|, |g(\vec x;\vec y)|) \\
    & \leq & e_h(\sumlen{\vec x}) + d_h\sum\vec c + e_g(\sumlen{\vec x}) + d_g\sum \vec c + \max|\vec y|
\end{array}
\]
so we may set $e_f(n) \dfn e_h(n) + e_g(n)$ and $d_f\dfn d_h+d_g$.

For the `moreover' clause, note that if $f(\vec x;\vec y)$ uses only unnested Safe Composition, then one of $g$ or $h$ does not have oracles, and so we can assume WLoG that either the term $d_h\sum \vec c$ or the term $d_g\sum \vec c$ above does not occur, and we set $d_f \dfn d_g$ or $d_f\dfn d_h$, respectively. In either case we obtain $d_f=1$ by the inductive hypothesis, as required.

If $f(\vec x;\vec y) = h(\vec x, g(\vec x;);\vec y)$, let $e_h,e_g,d_h, d_g$ be obtained from the inductive hypothesis. Note that, by definition of Safe Composition along a normal parameter, we must have that $g$ has no oracles, and so in fact $|g(\vec x;)| \leq e_g(\sumlen{\vec x})$.
We thus have,
\[
\arraycolsep=2pt
\begin{array}{rcl}
    |f(\vec x;\vec y)| & = & |h(\vec x, g(\vec x;);\vec y)| \\
     & \leq & e_h(\sumlen{\vec x}, |g(\vec x;)|) + d_h\sum \vec c + \max|\vec y| \\
     & \leq & e_h(\sumlen{\vec x}, e_g(\sumlen{\vec x})) + d_h\sum \vec c + \max|\vec y| 
\end{array}
\]
so we may set $e_f(n)\dfn e_h(n, e_g(n))$ and $d_f \dfn d_h$.
For the `moreover' clause, note that by the inductive hypothesis $e_h,e_g$ are polynomials and $d_h=1$, so indeed $e_f$ is a polynomial and $d_f=1$.

Finally, if $f(\vec x;\vec y) = h(\lambda \vec u \permpref \vec x, \lambda \vec v. f(\vec u;\vec v) ) (\vec x;\vec y)$, let $e_h,d_h$ be obtained from the inductive hypothesis.
We claim that it suffices to set $d_f \dfn d_h$ and $e_f(n) \dfn nd_h^n e_h(n)$.
Note that, for the `moreover' clause, if $d_h=1$ then also $d_h^n = 1$ and so indeed $e_f(n)$ is a polynomial if $d_h=1$ and $e_h(n)$ is a polynomial.

First, let us calculate the following invariant, for $n>0$:
\begin{equation*}
    \arraycolsep=2pt
    \begin{array}{rcll}
    e_f(n) & = & nd_h^n e_h(n) &  \\
        & = & d_h^n e_h(n) + (n-1)d_h^n e_h(n) & \\
        & = & d_h^n e_h(n) + d_h (n-1) d_h^{n-1}e_h(n) \\
        & \geq & e_h(n) + d_h (n-1) d_h^{n-1}e_h(n) & \text{$d_h\geq 1$} \\
        & \geq & e_h(n) + d_h (n-1) d_h^{n-1}e_h(n-1) \  &  \text{$e_h$~monotone} \\
        & \geq & e_h(n) + d_h e_f(n-1) & \text{def.~of $e_f$}
\end{array}
\end{equation*}
Hence:
\begin{equation}
 \label{eq:bnd-lem-rec-fn-inv}
     e_f(n) \geq  e_h(n) + d_h e_f(n-1) 
\end{equation}

Now, to show that \Cref{eq:elem-const-max-bound} bounds the $f,e_f,d_f$ at hand, we proceed by a sub-induction on $\sumlen{\vec x}$.
For the base case, when $\sumlen{\vec x} = 0 $ (and so, indeed, $\vec x = \vec 0$), note simply that $\lambda \vec u\permpref \vec x, \lambda \vec v . f(\vec u;\vec v)$ is the constant function $0$, and so we may appeal to the main inductive hypothesis for $h(a)$ setting the corresponding constant $c$ for $a$ to be $0$ to obtain,
\[
\arraycolsep=2pt
\begin{array}{rcl}
    |f(\vec 0;\vec y)| & = & |h(0)(\vec 0;\vec y)| \\
        & \leq & e_h(0) + d_h\sum\vec c + \max |\vec y| \\
        & \leq & e_f(0) + d_f\sum\vec c + \max |\vec y|
\end{array}
\]
as required.
For the sub-inductive step,  let $\sumlen{\vec x} > 0$. 
Note that, whenever $\vec u \permpref \vec x$ we have $\sumlen{\vec u} < \sumlen{\vec x} $ and so, by the sub-inductive hypothesis and monotonicity of $e_f$ we have:
\[
|f(\vec u;\vec v)| \leq e_f(\sumlen {\vec x} -1) + d_f \sum \vec c + \max |\vec v|
\]
Now we may again appeal to the main inductive hypothesis for $h(a)$ by setting $c= e_f(\sumlen{\vec x}  -1)$ to be the corresponding constant for $a = \lambda \vec u \permpref \vec x, \lambda \vec v. f(\vec u;\vec v)$. 
We thus obtain:
\[
\arraycolsep=2pt
\begin{array}{rclll}
     |f(\vec x;\vec y)| & = & |h(\lambda \vec u \permpref \vec x, \lambda \vec v . f(\vec u;\vec v)) (\vec x;\vec y)| \\
         & \leq & e_h (\sumlen{\vec x}) + d_h c + d_h\sum\vec c + \max |\vec y| & & \text{main IH} \\
         & \leq & ( e_h (\sumlen{\vec x}) + d_h e_f(\sumlen{\vec x} -1) ) +\\
         & &  + d_h\sum\vec c + \max |\vec y| & & \text{def.~of $c$}\\
         & \leq & e_f(\sumlen{\vec x}) + d_h \sum \vec c + \max|\vec y| && \eqref{eq:bnd-lem-rec-fn-inv} \\ & \leq & e_f(\sumlen{\vec x}) + d_f \sum \vec c + \max|\vec y| && \text{$d_f = d_h$}
\end{array}
\]
Let us now prove~\Cref{eq:input-bounding-eqn}, and let $e_f$ and $d_f$ be constructed by induction on $f$ as above. We proceed again by induction on the definition of $f(\vec a)(\vec x;\vec y)$, always making explicit the oracles of a function.
\gianluca{Is it better to define $e_f$ and $d_f$ in advance before proving the two inequations?}

The initial functions and oracle calls are immediate, due to the `$\max |\vec y|$' term in \Cref{eq:input-bounding-eqn}. 

If $f(\vec a)(\vec x; \vec y) = h(\vec a)(\vec x; \vec y, g(\vec a)(\vec x; \vec y))$ then, by the inductive hypothesis for $h(\vec a)$, any oracle call from $h(\vec a)$ only takes safe inputs of lengths:
    \[
    \begin{array}{rll}
        \leq & e_h(\sumlen{\vec x}) + d_h \sum \vec c + \max (|\vec y|, |g(\vec a)(\vec x;\vec y)|) \\
        \leq & e_h(\sumlen{\vec x})+ d_h \sum \vec c + e_g(\sumlen{\vec x}) + d_g\sum \vec c + \max |\vec y| & \text{\eqref{eq:oracle-const-max-bound}} \\
        \leq &  (e_h(\sumlen{\vec x} + e_g(\sumlen{\vec x})) + (d_h +d_g)\sum\vec c + \max |\vec y|\\
         \leq & e_f(\sumlen{\vec x}) + d_f\sum\vec c + \max|\vec y|
    \end{array}
    \]
    Note that any oracle call from $g(\vec a)$ will still only take safe inputs of lengths $\leq e_g(\sumlen{\vec x}) + d_g\sum \vec c + \max |\vec y|$, by the inductive hypothesis, and $e_g$ and $d_g$ are bounded above by $e_f$ and $d_f $ respectively.
    
If $f(\vec a)(\vec x;\vec y) = h(\vec a)(\vec x, g(\emptyset)(\vec x;);\vec y)$ then, by the inductive hypothesis, any oracle call will only take safe inputs of lengths:
\[
\arraycolsep=2pt
\begin{array}{rlll}
    \leq & e_h(\sumlen{\vec x} + |g(\emptyset)(\vec x;)|) + d_h\sum\vec c + \max |\vec y| \\
    \leq & e_h(\sumlen{\vec x} + e_g(\sumlen{\vec x})) + d_h\sum\vec c + \max |\vec y| & & \text{\Cref{lem:boundinglemma}} \\
    \leq & e_f(\sumlen{\vec x}) + d_f\sum \vec c + \max |\vec y|
\end{array}
\]

Last, suppose $f(\vec a)(\vec x;\vec y)= h(\vec a, \lambda \vec u\permpref \vec x, \lambda \vec v . f(\vec a)(\vec u;\vec v))(\vec x;\vec y)$. 
We proceed by a sub-induction on $\sumlen{\vec x}$.
Note that, since $\vec u \permpref \vec x \implies \sumlen{\vec u} <\sumlen{\vec x}$, we immediately inherit from the inductive hypothesis the appropriate bound on safe inputs for oracle calls from $\lambda \vec u\permpref \vec x, \lambda \vec v. f(\vec a)(\vec u;\vec v)$.

Now, recall from the Bounding \Cref{lem:boundinglemma}, whenever $\vec u \permpref \vec x$ (and so $\sumlen{\vec u} < \sumlen{\vec x}$), we have $|f(\vec u;\vec v)| \leq e_f(\sumlen{\vec x}-1) + d_f\sum \vec c + \max |\vec v|$.
So by setting $c = e_f(\sumlen{\vec x}-1)$ in the inductive hypothesis for $h(\vec a,a)$, with $a = \lambda \vec u\permpref \vec x, \lambda \vec v. f(\vec a)(\vec u;\vec v)$, any oracle call from $h(\vec a,a)$ will only take safe inputs of lengths:
\[
\arraycolsep=2pt
\begin{array}{rlll}
    \leq & e_h(\sumlen{\vec x}) + d_h e_f(\sumlen{\vec x}-1) + d_h\sum \vec c + \max |\vec y| \\
    \leq & e_f(\sumlen{\vec x}) + d_f \sum \vec c + \max |\vec y| & & \text{\eqref{eq:bnd-lem-rec-fn-inv}}
\end{array}
\]
This completes the proof.  
\end{proof}

\subsection{Soundness results}
In this subsection we  show that the function algebras $\bcpp$ and $\nbcpp$ (as well as $\nbc$)  capture precisely the classes  $\fptime$ and $\felementary$, respectively. We start with $\bcpp$:

\begin{thm}
\label{thm:fp-soundness}
Suppose $\vec a$ satisfies \Cref{eq:oracle-const-max-bound} for some constants $\vec c$.
We have the following:
\begin{enumerate}
    \item\label{item:soundness-fp} If $f(\vec x;\vec y) \in \bcpp (\vec a)$ then $f(\vec x,\vec y) \in \fptime(\vec a)$.
    \item\label{item:soundness-felementary} If $f(\vec x;\vec y) \in \nbcpp (\vec a)$ then $f(\vec x,\vec y) \in \felementary(\vec a)$.
\end{enumerate}
\end{thm}

Note in particular that, for $f(\vec x;\vec y)$ in $\bcpp$ or $\nbcpp$, i.e.\ not using any oracles, we immediately obtain membership in $\fptime$ or $\felementary$, respectively.
However, the reliance on intermediate oracles during a function definition causes some difficulties that we must take into account.
At a high level, the idea is to use the Bounding Lemma   (namely \Cref{eq:input-bounding-eqn}) to replace certain oracle calls with explicit appropriately bounded functions computing their graphs. 
From here we compute $f(\vec x;\vec y)$ by a sort of `course-of-values' recursion on $\permpref$, storing previous values in a lookup table.
In the case of $\bcpp$, it is important that this table has polynomial-size, since there are only $m! \prod |\vec x|$ permutations of prefixes of a list $\vec x= x_1, \dots, x_m$ (which is a polynomial of degree $m$).

\begin{proof}[Proof of~\Cref{thm:fp-soundness}.]
We proceed by induction on the definition of $f(\vec x;\vec y)$.

Each initial function is polynomial-time computable, and each (relativised) complexity class considered is under composition, so it suffices to only consider the respective recursion schemes.
We shall focus first on the case of $\bcpp (\vec a)$, \Cref{item:soundness-fp} above, so that $e_f$ is a polynomial and $d_f=1$.

Suppose we have $h(a)(\vec x;\vec y) \in \bcpp(a, \vec a)$ and let:
\[
f(\vec x;\vec y) = h(\lambda \vec u \subset \vec x, \lambda \vec v \subseteq \vec y. f(\vec u; \vec v))(\vec x;\vec y)
\]
We start by making some observations:
\begin{enumerate}
    \item\label{item:poly-growth-rate} First, note that $|f(\vec x;\vec y)| \leq e_f(|\vec x|) +d_f\sum \vec c + \max |\vec y| $, by the Bounding \Cref{lem:boundinglemma}, and so $|f(\vec x;\vec y)|$ is polynomial in $|\vec x,\vec y|$.
\item\label{item:poly-many-pp} Second, note that the set $[\vec x;\vec y] \dfn \{(\vec u,\vec v) \ | \ \vec u \permpref \vec x , \vec v \permprefeq \vec y\} $ has size polynomial in $|\vec x,\vec y|$:
\begin{itemize}
    \item write $\vec x = x_1 , \dots, x_m$ and $\vec y = y_1, \dots, y_n$.
    \item Each $x_i$ and $y_j$ have only linearly many prefixes, and so there are at most $|x_1|\cdot \cdots \cdot |x_m||y_1| \cdot \cdots \cdot |y_n| \leq \sumlen{\vec x,\vec y}^{m+n}$ many choices of prefixes for all the arguments $\vec x, \vec y$. 
    (This is a polynomial since $m$ and $n$ are global constants).
    \item Additionally, there are $m!$ permutations of the arguments $\vec x$ and $n!$ permutations of the arguments $\vec y$.
    Again, since $m$ and $n$ are global constants, we indeed have
    $|[\vec x;\vec y]| = O(\sumlen{\vec x, \vec y}^{m+n})$, which is polynomial in $|\vec x,\vec y|$.
\end{itemize}
\end{enumerate}

We describe a polynomial-time algorithm for computing $f(\vec x;\vec y)$ (over oracles $\vec a$) by a sort of `course-of-values' recursion on the order $\permpref \times \permprefeq $ on $[\vec x;\vec y]$.

First, for convenience, temporarily extend $\permpref \times \permprefeq$ to a total well-order on $[\vec x;\vec y]$, and write $S$ for the associated successor function.
    Note that $S$ can be computed in polynomial-time from $[\vec x;\vec y]$.

Define $F(\vec x,\vec y) \dfn \langle f(\vec u;\vec v)\rangle_{\vec u \permpref \vec x, \vec v \permprefeq \vec y}$, i.e.\ it is the graph of $\lambda \vec u \permpref \vec x, \lambda \vec v \permprefeq \vec y . f(\vec u;\vec v)$ that we shall use as a `lookup table'.
    Note that $|F(\vec x,\vec y)| $ is polynomial in $|\vec x,\vec y|$ by \Cref{item:poly-growth-rate} and \Cref{item:poly-many-pp} above.
    Now, we can write:\footnote{Here, as abuse of notation, we are now simply identifying $F(\vec x;\vec y)$ with $\lambda \vec u \permpref \vec x, \lambda \vec v \permprefeq \vec y . f(\vec x;\vec y)$.}
    \[
    \arraycolsep=2pt
    \begin{array}{rcl}
        F(S(\vec x,\vec y)) & =&  \langle f(S(\vec x,\vec y)), F(\vec x,\vec y) \rangle  \\
        & = & \langle h(F(\vec x ,\vec y))(\vec x;\vec y),F(\vec x,\vec y)\rangle
    \end{array}
    \]
    Again by \Cref{item:poly-many-pp} (and since $F$ is polynomially bounded), this recursion terminates in polynomial-time.
    We may now simply calculate $f(\vec x;\vec y)$ as $h(F(\vec x,\vec y))(\vec x;\vec y)$.
    
The argument for $\nbcpp$ is similar, though we need not be as careful about computing the size of the lookup tables ($F$ above) for recursive calls.
The key idea is to use the Bounding Lemma (\Cref{eq:elem-const-max-bound}) to bound the safe inputs of recursive calls so that we can adequately store the lookup table for previous values.
\end{proof}

\subsection{Completeness and characterisations}
We are now ready to give our main function algebraic characterisation results for polynomial-time:

\begin{cor}\label{cor:bcpp-bc-fptime}
The following are equivalent:
\begin{enumerate}
    \item\label{item:complfp-bc} $f(\vec x;) \in \bc$.
    \item\label{item:complfp-bcpp} $f(\vec x;) \in \bcpp$.
    \item\label{item:complfp-fp} $f(\vec x) \in \fptime$.
\end{enumerate}
\end{cor}
\begin{proof}
$\eqref{item:complfp-bc}\implies \eqref{item:complfp-bcpp}$ is trivial, and $\eqref{item:complfp-bcpp} \implies \eqref{item:complfp-fp}$ is given by \Cref{thm:fp-soundness}.\eqref{item:soundness-fp}.
Finally, $\eqref{item:complfp-fp} \implies \eqref{item:complfp-bc}$ is from \cite{BellantoniCook}, stated in \Cref{thm:bellantoni} earlier.
\end{proof}

The remainder of this subsection is devoted to establishing a similar characterisation for $\nbc$, $ \nbcpp$ and $\felementary$.

 To begin with, we recall the definition of the class $\felementary$: 

\begin{defn} \label{defn:felementary}  $\felementary$ is the smallest set of functions containing:
\begin{itemize}
    \item $0()\dfn 0 \in \Nat$,  
    \item $\pi^n_i(x_1, \ldots, x_{n})\dfn x_j$, whenever $1 \leq j \leq n$;
    \item $\mathsf{s}(x)\dfn x+1$;
    \item the function $E_2$ defined as follows:
    \begin{equation*}
    \begin{aligned}
            E_1(x)&=x^2+2\\
             E_2(0)&=2\\
             E_2(x+1)&=E_1(E_2(x))
    \end{aligned}
    \end{equation*}
    \end{itemize}
    and closed under the following:
    \begin{itemize}
    \item (Composition) If $f(\vec x, x), g(\vec x)\in \felementary$ then so is $f(\vec x, g(\vec x))$; 
    \item (Bounded recursion) If $g(\vec x), h (x,\vec x,y),  j(x, \vec x)$ are functions in $ \felementary$ then so is $f(x, \vec x)$ given by:
    \begin{equation*}
    \arraycolsep=2pt
        \begin{array}{rcl}
              f(0, \vec{x})&\dfn& g(\vec{x})\\
              f(x+1, \vec{x})&\dfn& h(x, \vec{x}, f(x, \vec{x}))
        \end{array}
    \end{equation*}
    provided that $f(x, \vec{x})\leq j(x, \vec{x})$.
\end{itemize}
\end{defn}

\begin{prop}[\cite{rose1984subrecursion}]\label{prop:rose} Let $f\in \felementary$ be a $k$-ary function. Then, there exists an integer $m$ such that:
\begin{equation*}
    f(\vec{x})\leq E^m_{2}(\max_k(\vec{x}))
\end{equation*}
where $E^0_{2}(x)=x$ and $E^{m+1}_{2}(x)=E_2(E^{m}_{2}(x))$.
\end{prop}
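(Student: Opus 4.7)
The plan is to prove the bound by induction on the construction of $f \in \felementary$ following Definition~\ref{defn:felementary}. First I would record two monotonicity facts about $E_2$ as a preliminary lemma: (i) $E_2$ is strictly increasing, and in particular $E_2(x) \geq x + 1 \geq 2$ for all $x \geq 0$ (which follows immediately from $E_1(x) = x^2 + 2 \geq x+1$ once $x \geq 1$, and direct computation for the first few values); and (ii) consequently each iterate $E_2^m$ is monotone increasing, satisfies $E_2^m(x) \geq x$, and composes as $E_2^{m_1}(E_2^{m_2}(x)) = E_2^{m_1+m_2}(x)$. These give the arithmetic identities that drive the inductive cases.

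For the base cases: the constant $0$ is bounded by $E_2^1(\max_0()) = E_2(0) = 2$; each projection $\pi^n_j(\vec x) = x_j$ is bounded by $E_2^0(\max_n(\vec x)) = \max_n(\vec x)$; the successor satisfies $\mathsf{s}(x) = x+1 \leq E_2(x)$ by (i); and $E_2(x) = E_2^1(x)$ trivially.

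For composition, suppose by the inductive hypothesis that $f(\vec x, x) \leq E_2^{m_f}(\max_{k+1}(\vec x, x))$ and $g(\vec x) \leq E_2^{m_g}(\max_k(\vec x))$. Using (ii), we have $\max_k(\vec x) \leq E_2^{m_g}(\max_k(\vec x))$, so
\begin{equation*}
f(\vec x, g(\vec x)) \;\leq\; E_2^{m_f}\bigl(\max(\max_k(\vec x), g(\vec x))\bigr) \;\leq\; E_2^{m_f}\bigl(E_2^{m_g}(\max_k(\vec x))\bigr) \;=\; E_2^{m_f + m_g}(\max_k(\vec x)),
\end{equation*}
so $m_f + m_g$ witnesses the claim. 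For bounded recursion, since $f(x,\vec x) \leq j(x,\vec x)$ and $j \in \felementary$, the inductive hypothesis gives some $m_j$ with $j(x,\vec x) \leq E_2^{m_j}(\max_{k+1}(x,\vec x))$, and the same $m_j$ bounds $f$.

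The only piece requiring care is the composition case, where one must absorb $\max_k(\vec x)$ into $E_2^{m_g}(\max_k(\vec x))$ before feeding it to $E_2^{m_f}$; this uses precisely the fact that $E_2(x) \geq x$. The bounded recursion case is essentially free because the scheme already provides an elementary majorant $j$, which is why this presentation of $\felementary$ (with an explicit bounding function in the recursion scheme) is chosen. No truly hard step arises, so the main purpose of writing out the argument is to fix the explicit recipe $m_f + m_g$ (and $m_j$ for recursion) that the proof of Theorem~\ref{thm:elementary-in-nbc} can later match against the iterates $\ex^m$.
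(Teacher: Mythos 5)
Your proposal is correct. Note, however, that the paper does not prove this proposition at all: it is stated with an attribution to Rose's monograph on subrecursion and used as a black box, so there is no in-paper argument to compare against. Your structural induction over the clauses of Definition~\ref{defn:felementary} is the standard textbook argument: the base cases are immediate, the composition case is handled exactly as you say by absorbing $\max_k(\vec{x})$ into $E_2^{m_g}(\max_k(\vec{x}))$ via $E_2(x)\geq x$ and monotonicity of the iterates, and the bounded recursion case is discharged for free by the explicit majorant $j$ supplied by the scheme. The only point worth flagging is cosmetic: for the nullary constant $0$ you should make explicit the convention $\max_0()=0$ (or simply observe $0\leq E_2^0(0)$), but this does not affect the argument.
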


For our purposes we shall consider a  formulation of this class in binary notation, that we call  $\felementary_{0,1}$. 

\begin{defn} \label{defn:E3}  $\felementary_{0,1}$ is the smallest set of functions containing:
\begin{itemize}
    \item $0()\dfn 0 \in \Nat$,  
    \item $\pi^n_i(x_1, \ldots, x_{n})\dfn x_j$, whenever $1 \leq j \leq n$;
    \item $\succ i(x)\dfn 2x+i$, for $i \in \{ 0,1\}$
    \item the function $\varepsilon(x,y)$ defined as follows:
    \begin{equation*}
    \arraycolsep=2pt
    \begin{array}{rcl}
             \varepsilon(0,y)&\dfn& \succ 0 (y)\\
             \varepsilon(\succ ix,y)&\dfn& \varepsilon(x,  \varepsilon(x,y))
    \end{array}
    \end{equation*}
    \end{itemize}
    and closed under the following:
    \begin{itemize}
    \item (Composition) If $f(\vec x, x), g(\vec x)\in \felementary_{0,1}$ then so is $f(\vec x, g(\vec x))$; 
    \item (Bounded recursion on notation) If $g(\vec x), h_i (x,\vec x,y)$, $j(x, \vec x)$ are functions in  $\felementary_{0,1}$  then so is $f(x, \vec x)$ given by:
    \begin{equation*}
    \arraycolsep=2pt
        \begin{array}{rcl}
              f(0, \vec{x})&\dfn &g(\vec{x})\\
              f(\succ ix, \vec{x})&\dfn & h_i(x, \vec{x}, f(x, \vec{x}))
        \end{array}
    \end{equation*}
    provided that $f(x, \vec{x})\leq j(x, \vec{x})$.
\end{itemize}
\end{defn}

It is easy to show that the unary and the binary definition of the class of elementary time computable functions coincide. To see this, we first define $\varepsilon^n(x)$ as
\begin{equation}\label{eqn:vaerpsilon}
\arraycolsep=2pt
    \begin{array}{rcl}
   \varepsilon^1(x)  &\dfn&  \varepsilon(x,1)  \\
   \varepsilon^{m+1}(x)& \dfn & \varepsilon^1(\varepsilon^{m}(x))   
\end{array}
\end{equation}
which allows us to prove that $\varepsilon^n(x)$ plays the role of rate growth function as the function $E_2^n(x)$ (see Proposition~\ref{prop:rose}). 

\gianluca{Stress the equivalence between  binary notation and unary notation for elementary time functions is folklore?}

\begin{prop}\label{prop:EsubseteqEbin}{\ }
\begin{enumerate}
    \item \label{enum:EsubseteqEbin1} $\felementary = \felementary_{0,1}$;
    \item \label{enum:EsubseteqEbin2} for any $f \in \felementary_{0,1}$ $k$-ary function there is an integer $m$ such that:
\begin{equation*}
    f(\vec{x})\leq \varepsilon^m(\max_k(\vec{x}))
\end{equation*}
\end{enumerate}
\end{prop}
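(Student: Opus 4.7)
The plan is to prove part~\ref{enum:EsubseteqEbin2} first, via a standard bounding-lemma-style induction on the definition of $f \in \felementary_{0,1}$, and then use this bound to handle the non-trivial direction of part~\ref{enum:EsubseteqEbin1}. A preparatory step is to establish monotonicity of $\varepsilon$ in both arguments (by induction on $x$), and the composition law $\varepsilon^m \circ \varepsilon^n = \varepsilon^{m+n}$. I would also check, by induction on $x$, the closed-form bound $\varepsilon(x,y) \leq 2^{2^{|x|}} \cdot \max(2,y)$, from which one extracts $\varepsilon(x,y) \leq \varepsilon^1(\max(x,y,2))$ (or a similar small-constant variant).

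For part~\ref{enum:EsubseteqEbin2}, I would proceed by induction on the definition of $f$. The initial functions $0$, projections, and $\succ i$ are all bounded by $\succ 1(\max(\vec x)) \leq \varepsilon^1(\max(\vec x))$. For $\varepsilon$ itself, use the preparatory bound above. For composition $f(\vec x, g(\vec x))$, if $f \leq \varepsilon^m \circ \max$ and $g \leq \varepsilon^n \circ \max$, then by monotonicity $f(\vec x, g(\vec x)) \leq \varepsilon^m(\max(\vec x, \varepsilon^n(\max \vec x))) \leq \varepsilon^{m+n+1}(\max \vec x)$, using that $\varepsilon^1$ dominates its argument. For bounded recursion on notation, the function $j(x,\vec x)$ bounding $f$ is itself in $\felementary_{0,1}$, so by the inductive hypothesis $j(x,\vec x) \leq \varepsilon^p(\max(x,\vec x))$, and this bound transfers to $f$ immediately.

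For part~\ref{enum:EsubseteqEbin1}, I would handle both inclusions separately. For $\felementary_{0,1} \subseteq \felementary$, it suffices to show each initial function and scheme of $\felementary_{0,1}$ is available in $\felementary$. The successors $\succ i(x) = 2x + i$ are definable by composition of $\mathsf{s}$ with addition, which is a standard elementary function. The function $\varepsilon$ is definable by bounded recursion on (unary) $x$, using the bound from part~\ref{enum:EsubseteqEbin2} translated via $E_2^m$ as witness. Bounded recursion on notation reduces to bounded recursion by iterating the step function $|x|$ times and using a case distinction on the parity of the remaining argument. Conversely, for $\felementary \subseteq \felementary_{0,1}$, the successor $\mathsf{s}$ is definable by bounded recursion on notation with bound $\succ 1 x$, and $E_2$ by bounded recursion on notation where the bound is given by some iterate $\varepsilon^m$, appealing again to the growth-rate correspondence.

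The main obstacle is the nested recursive call in the definition of $\varepsilon$, namely $\varepsilon(\succ i x, y) = \varepsilon(x, \varepsilon(x,y))$: this is not itself a bounded recursion on notation in $\felementary$, so to place $\varepsilon$ in $\felementary$ I must first verify its closed-form bound, then rewrite it via a bounded recursion whose step function performs the inner call explicitly and is itself available in $\felementary$. Correspondingly, getting $E_2 \in \felementary_{0,1}$ requires witnessing the bound $\varepsilon^m$, which depends on the tight arithmetic comparison between $E_2^n$ and $\varepsilon^n$ under binary/unary identification of naturals; this comparison is the technical core and will rely on the preparatory lemmas above together with Proposition~\ref{prop:rose}.
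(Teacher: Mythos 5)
Your plan for part~\ref{enum:EsubseteqEbin2} (a direct induction on the definition of $f\in\felementary_{0,1}$, using the closed form $\varepsilon(x,y)=2^{2^{|x|}}\cdot y$ and monotonicity) is sound, and is in fact more self-contained than the paper's route, which instead derives part~\ref{enum:EsubseteqEbin2} \emph{from} part~\ref{enum:EsubseteqEbin1} together with the classical bound $f(\vec x)\le E_2^m(\max(\vec x))$ of Proposition~\ref{prop:rose} and the comparison $E_2^m(x)\le\varepsilon^{m+k}(x)$. Your direct scheme-by-scheme simulation for the inclusion $\felementary_{0,1}\subseteq\felementary$ is also workable; the paper instead proves the length bound $|f(\vec x)|\le 2_n(\sum|\vec x|)$ by induction and appeals to the machine characterisation of elementary time/space, so here your route is genuinely different but acceptable, provided you do carry out the flattening of $\varepsilon$ via its closed form as you indicate.

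There is, however, a genuine gap in the converse inclusion $\felementary\subseteq\felementary_{0,1}$. The schemes of $\felementary$ iterate their step function $x$ times (unary bounded recursion, and likewise $E_2(x)=E_1^x(2)$), whereas bounded recursion on notation over the binary representation of $x$ provides only $|x|\approx\log_2 x$ iterations. So ``$E_2$ by bounded recursion on notation'' cannot be taken at face value: recursing on the bits of $x$ yields roughly $E_1^{|x|}(2)$, which is about $2^x$ rather than the tower of height $x$ that $E_2(x)$ requires, and no elementary step function can absorb the missing $x-|x|$ iterations. The same mismatch arises, more importantly, for closure under unary bounded recursion, which your sketch does not address at all --- and this is the essential case, since every non-initial function of $\felementary$ is built with it. The missing idea is the tally-encoding invariant the paper uses: one proves by induction on $f\in\felementary$ that there is $\hat f\in\felementary_{0,1}$ with $f(\vec x)=|\hat f(\succ 1^{x_1}(0),\dots,\succ 1^{x_n}(0))|$, so that recursion on notation over the tally string $\succ 1^{x}(0)=2^{x}-1$ supplies exactly $x$ iterations; membership of $f$ itself then follows because $|\cdot|$ and $x\mapsto\succ 1^{x}(0)$ are in $\felementary_{0,1}$. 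Your remark about the ``binary/unary identification of naturals'' gestures at this, but unless the changed representation is carried through the entire induction (rather than invoked only for the initial functions $\mathsf{s}$ and $E_2$), the argument does not close.
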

\begin{proof}
 Let us first prove point~\ref{enum:EsubseteqEbin1}. For the $\supseteq$ direction we show that for any $f  \in \felementary_{0,1}$ there exists 
 $n$ such that:
\begin{equation}\label{eqn:feleementary01-bound}
    \s{f(\vec x)}\leq 2_n(\sum\s{\vec x}) 
\end{equation}
where $2_0(x)=x$ and  $2_{n+1}(x)=2^{2_n(x)}$. 
 Since $\s{x}=\lceil \log_2(x+1)\rceil$, from the above inequation we would have that, for some $m$:
\begin{equation*}
    f(\vec x)\leq 2_{n+m}(\sum \vec x)
\end{equation*}
 which allows us to conclude $f  \in \felementary$, as  the elementary time computable functions are exactly the elementary space  ones. The inequation~\eqref{eqn:feleementary01-bound} can be proved by induction on $f$,  noticing that $\s{\varepsilon(x, y)}= 2^{\s{x}}+ \s{y}$. Concerning the $\subseteq$ direction, we prove  by induction on  $f\in \felementary$ that there exists a function $\hat{f}\in \felementary_{0,1}$ such that, for all $\vec x= x_1, \ldots, x_n$:
\begin{equation*}
   f(\vec x)= \s{\hat{f}(\num{x_1}, \ldots, \num{x_n})}
\end{equation*}
where $\num{m}\dfn \succ 1 ^m(0)=\succ 1(\overset{m}{\ldots}  \succ 1(0))$.  Since the functions $\s{\cdot}$ and $x \mapsto \num{x}$ are both in $\felementary_{0,1}$, we are able to conclude $f \in \felementary_{0,1}$. The case $f=0$ is trivial. As for the cases $f=\mathsf{s}$ and $f= \pi^n_i$, we first notice that $\s{\num{x}}=x$. Then, we have: 
\begin{equation*}
\def\arraystretch{1.5}
\begin{array}{c}
\mathsf{s}(x)=\s{\num{s(x)}}
\\
  \pi^n_i(\vec x)=x_i=\s{\num{x_i}}=  \s{\pi^n_i(\num{x_1}, \ldots, \num{x_n})}
\end{array}
\end{equation*}
Concerning the case of $E_2(x)$, we first notice that the following property holds for any $m$ and some $k$:
 \begin{equation}\label{eqn:boundgrowthrate}
     E^m_2(x)\leq \varepsilon^{m+k}(x)
 \end{equation}
where $\varepsilon^{n}(x)$ is as in~\eqref{eqn:vaerpsilon}. Moreover, the  function $E_2(x)$ can be defined by two applications of bounded recursion proceeding from the successor and the projection functions, where each recursion can be bounded by $E_2(x)$, and hence by $\varepsilon^{k}(x)$ for some $k$. This means that the case of  $E_2(x)$ can be reduced to the case of bounded recursion. Suppose now that $f(\vec x)=h(\vec x, g(\vec x))$.  We define  $\hat{f}(\vec x)= \hat{h}(\vec x, \hat{g}(\vec x))$ so that, by induction hypothesis:
\begin{equation*}
\arraycolsep=2pt
    \begin{array}{rcl}
     f(\vec x)&=& h(\vec x, g(\vec x))\\
     &=&\s{\hat{h}(\num{x_1}, \ldots, \num{x_n}, \num{g(\vec x)})}\\
     &=&\s{\hat{h}(\num{x_1}, \ldots, \num{x_n}, \num{\s{\hat{g}(\num{x_1}, \ldots, \num{x_n})}})}\\
      &=&\s{\hat{h}(\num{x_1}, \ldots, \num{x_n}, \hat{g}(\num{x_1}, \ldots, \num{x_n}))}\\
      &=& \s{\hat{f}(\vec x)}= \s{\hat{h}(\vec x, \hat{g}(\vec x))}
    \end{array}
\end{equation*}
Last, suppose that $f$ has been obtained by  bounded recursion from $h, g, j$, i.e.
    \begin{equation*}
    \arraycolsep=2pt
        \begin{array}{rcl}
              f(0, \vec{x})&=&g(\vec{x})\\
              f(y+1, \vec{x})&=& h(y, \vec{x}, f(y, \vec{x}))
        \end{array}
    \end{equation*}
    provided that $f(y, \vec{x})\leq j(y, \vec{x})$. We define $\hat{f}$ as follows:
    \begin{equation*}
    \arraycolsep=2pt
        \begin{array}{rcl}
              \hat{f}(0, \vec{x})&=&\hat{g}(\vec{x})\\
              \hat{f}(\succ iy, \vec{x})&=& \hat{h}(y, \vec{x}, \hat{f}(y, \vec{x}))
        \end{array}
    \end{equation*}
We show by induction on $y$ that:     
    \begin{equation*}
   f(y, \vec x)= \s{\hat{f}(\num{y}, \num{x_1}, \ldots, \num{x_n})}
\end{equation*}
We have:
\begin{equation*}
\arraycolsep=2pt
    \begin{array}{rcl}
  f(0, \vec x)&=& g(\vec x)\\
  &=&\s{\hat{g}(\num{x_1}, \ldots, \num{x_n})}\\
  &=&
    \s{\hat{f}(\num{0}, \num{x_1}, \ldots, \num{x_n})}\\ \\
    f(y+1, \vec x)&=& h(y,\vec x, f(y, \vec x))\\
    &=& \s{\hat{h}(\num{y}, \num{x_1}, \ldots, \num{x_n}, \num{f(y, \vec x)})}\\
    &=& \s{\hat{h}(\num{y}, \num{x_1}, \ldots, \num{x_n}, \num{\s{\hat{f}(\num{y}, \num{x_1}, \ldots, \num{x_n})}})}\\
     &=& \s{\hat{h}(\num{y}, \num{x_1}, \ldots, \num{x_n}, \hat{f}(\num{y}, \num{x_1}, \ldots, \num{x_n})}\\
    &=& \s{\hat{f}(\num{y+1}, \num{x_1}, \ldots, \num{x_n})}
   \end{array}
\end{equation*}
since    $f(y, \vec{x})\leq j(y, \vec{x})$ and $ j(y, \vec x)= \s{\hat{j}(\num{y}, \num{x_1}, \ldots, \num{x_n})}$  by induction hypothesis, we are done.

     Point~\ref{enum:EsubseteqEbin2} follows by point~\ref{enum:EsubseteqEbin1},  Proposition~\ref{prop:rose} and~\eqref{eqn:boundgrowthrate}.
\end{proof}

Completeness for $\nbc$ is based on a standard technique (see~\cite{BellantoniCook}),  adapted to the case of $\felementary$  in~\cite{Wirz99characterizingthe}.

\begin{lem}\label{lem:padding}
For any $f(\vec{x}) \in \felementary_{0,1}$  there are a function $f^*(x; \vec{x})\in \nbc$ and a monotone function $t_f\in \felementary_{0,1}$ such that for all integers $\vec{x}$ and all $w \geq t_f(\vec{x})$ we have $f^*(w; \vec{x})=f(\vec{x})$.
\end{lem}
\begin{proof}
The proof is by induction on the definition of $f$.  If $f$ is the zero, successor or projection  function then $f^*\in \nbc$.  In this case we choose $t_f=0$. The function $\varepsilon$ has a definition by one application of bounded recursion on notation, proceeding from the successors and the projection functions, where each recursion is bounded by $\varepsilon$. Since the treatment of bounded recursion  does not make use of the induction hypothesis for the bounding function, we can use this method to get functions $\varepsilon^*\in \nbc$ and $t_{\varepsilon}\in \felementary_{0,1}$ with the required properties.  If $f(\vec x)= h(\vec x, g(\vec{x}))$ then we set  $f^*(w;\vec{x})=h^*(w; \vec x, g^*(w;\vec{x}))$, which is in $\nbc$.  Since the function $g^*$ is clearly bounded by a monotone function $b \in \felementary_{0,1}$, we set $ t_f(\vec{x})= t_h(\vec x, b(\vec{x}))+  t_{g}(\vec{x})$,  which is monotone. By applying the induction hypothesis,  if   $w \geq t_f(\vec{x})$ then:
\begin{equation*}
         f^*(w; \vec{x})=  h^*(w; \vec{x}, g^*(w; \vec{x})) 
         =h^*(w;\vec x, g(\vec{x})) 
         = h(\vec x, g(\vec{x}))
\end{equation*}
Let us finally suppose that $f(x, \vec{y})$ is defined by bounded recursion on notation from $g(\vec y)$, $h_i(x, \vec y, f(x, \vec y))$ and $j(x, \vec y)$. By applying the induction hypothesis we set:
\begin{equation*}
\arraycolsep=2pt
    \begin{array}{rcl}
     \hat{f}(0,w; x, \vec{y})&=& g^*(w; \vec{y})\\
      \hat{f}(\succ i(x),w; x, \vec{y})&=& \cnd(; W(\succ i(v),w;x), g^*(w; \vec{y}),  h^*_i(w; W(v,w;x), \vec{y}, f^*(v,w;x,\vec{y})) )\\
     f^*(w;x, \vec{y})&= & \hat{f}(w,w; x, \vec{y}) 
    \end{array}
\end{equation*}
where $ W(v,w;x)=\dot-( \dot-(v;w) ; x)$ and $\dot-(x;y)$ is the truncated subtraction, which is in $\bc$, and hence in  $\nbc$. We can easily show that $f^*(w;x, \vec y)\in \nbc$. We define $t_f(x, \vec{y})=t_{g}(\vec{y})+\sum_i t_{h_i}(x, \vec{y}, j(x, \vec{y}))$, where $j$ is the bounding function. Assuming $j$ to be monotone, $t_f$ is monotone too.  We now show by induction on $u$ that, whenever $w \geq t_f(x,\vec{y})$ and $w-x\leq u \leq w$:
\begin{equation}
\label{eqn:toshowpadding}
    \hat{f}(u,w;x,\vec{y})=f(x-(w-u), \vec{y})
\end{equation}
If $u=w-x$ then we have two cases:
\begin{itemize}
    \item if $u=0$ then $\hat{f}(0, w; x, \vec{y})=g^*(w; \vec{y})$;
    \item if $u=\succ i(v)$ then, since $W(\succ i(v),w;x)=0$, we have $\hat{f}(\succ i(v), w; x, \vec{y})=g^*(w; \vec{y})$.
\end{itemize}
Hence, in any case:
\begin{equation*}
\hat{f}(u, w; x, \vec{y})=g^*(w; \vec{y})=g(\vec y)=f(0; \vec{y})= f(x-(w-u), \vec{y})    
\end{equation*}
Let us now suppose that $w-x<u\leq w$. This means that $u=\succ i(v)$ and $W(\succ i(v),w;x)>0$. Moreover,  by monotonicity of $t_f$ and definition of $j$:
\begin{equation*}
\arraycolsep=2pt
\begin{array}{rcl}
     w &\geq&     t_f(x, \vec{y})\\ 
     &\geq& t_f(x-(w-\succ i(v)), \vec{y}) \\
     &\geq& t_f(x-(w-v), \vec{y}) \\
     &\geq& t_{h_i}(x-(w-v), \vec{y}, j(x-(w-v), \vec{y}))\\
      &\geq& t_{h_i}(x-(w-v), \vec{y}, f(x-(w-v), \vec{y}))
\end{array}
\end{equation*}
By applying the induction hypothesis:
\begin{equation*}
\arraycolsep=2pt
\begin{array}{rcl}
      \hat{f}(\succ i(v), w; x, \vec{y})&=&
        h^*_i(w; W(v, w;x), \vec{y}, \hat{f}(v,w;x, \vec{y}))\\
         &=&  h^*_i(w; W(v, w;x), \vec{y}, f(x-(w-v), \vec{y}))\\
         &=& h^*_i(w; x-(w-v), \vec{y}, f(x-(w-v), \vec{y}))\\
         &=& h_i(x-(w-v), \vec{y}, f(x-(w-v), \vec{y}))\\
           &=&f(\succ i((x-(w-v))), \vec{y})\\
           &=& f((x-(w-\succ i(v))), \vec{y})
\end{array}
\end{equation*}
Now, by~\eqref{eqn:toshowpadding}, for all $w \geq t_f(x, \vec{y})$ we have:
\begin{equation*}
  f^*(w; x, \vec{y})=  \hat{f}(w,w;x, \vec{y})= f(x, \vec{y})
\end{equation*}
and this concludes the proof.
\end{proof}

\begin{thm}
\label{thm:elementary-in-nbc}
If $f(\vec x) \in \felementary$ then $f(\vec x;) \in \nbc$.
\end{thm}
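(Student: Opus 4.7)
The plan is to reduce to a standard presentation of $\felementary$ and realise each closure condition inside $\nbc$. Specifically, I will use the characterisation $f \in \felementary$ iff $f$ is computable by a Turing machine in time $2_k(p(|\vec x|))$ for some $k \in \Nat$ and some polynomial $p$, where $2_k$ denotes the $k$-fold iterated exponential.

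First, I would record that $\bc \subseteq \nbc$: the scheme $\snrec$ subsumes $\srec$ whenever the step functions do not actually consult their oracle $a$. Thus all initial functions of $\bc$, every flavour of Safe Composition, and hence every polynomial-time function on normal-only inputs are immediately at our disposal inside $\nbc$.

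Second, I would construct, for each fixed $k$, an $\nbc$-definable `bounding' function $E_k(x;y)$ satisfying $|E_k(x;y)| \geq 2_k(|x|) + |y|$. For the base, $E_0$ comes from $\bc$. For the step, mimic the coderivation $\sexp$ of Example~\ref{exmp:safe-exponentiation}: define $F(x;y)$ by $\snrec$ with $F(0;y) \dfn \succ 0 y$ and $F(\succ i x;y) \dfn F(x; F(x;y))$, realised by the step function $h_i(a)(x;y) \dfn a(x; a(x;y))$, which uses only safe composition along safe parameters and so lives in $\nbc(a)$. Unwinding gives $|F(x;y)| = 2^{|x|} + |y|$. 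Each $E_{k+1}$ is then obtained from $F$ by pre-composing $E_k$ into the normal argument via Safe Composition along a Normal Parameter, which is legal precisely because the $E_k$ are themselves oracle-free by construction.

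Third, given an elementary $f(\vec x)$ with time bound $2_k(p(|\vec x|))$, I would simulate the Turing machine in $\nbc$. Encode machine configurations as numbers of length bounded by the time bound; the one-step transition $\mathsf{step}(c;)$, the initial-configuration function $c_0(\vec x;)$, and the output-extraction function are all $\bc$-definable by the usual bit-level manipulations using $\cnd$ and the initial arithmetic. The output $f(\vec x)$ is then obtained by iterating $\mathsf{step}$ starting from $c_0(\vec x;)$ a total of $2_k(p(|\vec x|))$ times; this iteration is a safe recursion on notation (hence in $\bc$) whose recursion parameter is a number of length $2_k(p(|\vec x|))$, supplied by $E_{k+1}(p^*(\vec x);)$ for a $\bc$-definition $p^*$ of $p$ and then promoted into normal position via Safe Composition along a Normal Parameter.

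The main obstacle is the safe/normal juggling at the end: the iteration bound produced by $E_{k+1}$ naturally lies in safe position, but the iteration needs it as a normal recursion parameter. The oracle-free restriction in Safe Composition along a Normal Parameter (\Cref{enum:safe-composition-condition} of \Cref{defn:oracles-functions}) is exactly what makes this promotion available, and it is crucial that the $E_k$ have been constructed to be oracle-free. A secondary routine technicality is to verify that $\mathsf{step}$, $c_0$, and the readout all fit inside $\bc$; this is standard, using the conditional $\cnd$ together with the initial arithmetic of $\bc$.
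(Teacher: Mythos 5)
Your overall architecture --- manufacture an oracle-free, normal-argument function of elementary growth by iterating the nested-recursion exponential of \Cref{exmp:safe-exponentiation}, then feed it as ``fuel'' to a simulation --- matches the paper's, which builds $\ex^m(x;)$ and then applies a padded simulation $f^*(w;\vec x)$. The genuine gap is in your third step, the machine simulation, and it is not the routine technicality you label it as. In safe recursion on notation the recursive call is handed to the step function in a \emph{safe} position: $f(\succ i t,\vec x;\vec y)=h_i(t,\vec x;\vec y, f(t,\vec x;\vec y))$. So when you iterate your transition function, the previous configuration arrives as a safe argument of $h_i$, whereas you have defined $\mathsf{step}(c;)$ with $c$ normal --- and the entire point of the two-sorted discipline is that a safe argument can never be promoted to a normal one (if it could, one could iterate length-squaring and violate the poly-max bound \Cref{eq:bc-polymax-eqn}). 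A transition function taking the whole configuration as a single safe input is not available ``by the usual bit-level manipulations'': a fixed $\bc$-derivation can branch on, and restructure, only a bounded number of low-order bits of a safe argument (via $\pred$ and $\cnd$), and cannot scan it to locate the head marker at an input-dependent position. Repairing this requires either (a) a head-at-the-boundary encoding, with the two tape halves and the state carried as separate safe arguments updated only at their low-order ends, plus an admissibility argument for \emph{simultaneous} safe recursion on notation; or (b) sidestepping the simulation entirely: the function $g(\vec x, w)$ returning the output of $M$ on $\vec x$ after $|w|$ steps is polynomial-time computable in $|\vec x|+|w|$, hence $g(\vec x,w;)\in\bc$ with \emph{all} arguments normal by \Cref{thm:bellantoni}, and one then substitutes your oracle-free $E_{k+1}$-term for $w$ by Safe Composition along a Normal Parameter. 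As written you do neither.

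Relatedly, you have identified the wrong ``main obstacle''. Putting the fuel into normal position is unproblematic: plug a constant into the safe slot of $E_{k+1}$ to obtain an oracle-free function with normal arguments only (exactly the paper's $\ex^1(x;)=\ex(x;1)$), which the composition scheme then accepts. The real difficulty is the direction of data flow \emph{inside} the iteration, and this is precisely what the paper's route is built to handle: it stays with a function-algebra presentation of $\felementary$ via bounded recursion on notation (\Cref{defn:E3}) and proves a padding lemma (\Cref{lem:padding}) in the Bellantoni--Cook/Wirz style, in which the only normal recursion is on the fuel $w$ and every inductive case exhibits a ``safe form'' $f^*(w;\vec x)$ with all data in safe position. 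If you wish to keep the Turing-machine route, option (b) above is the short way to complete your argument.
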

\begin{proof}
First, given the function $\ex(x; y)$ in~\eqref{eq:safe-exp-by-snrec}, we construct the function $\ex^m(x;)$ by induction on $m$:
\begin{equation*}
\arraycolsep=2pt
    \begin{array}{rcl}
     \ex^1(x;)&=& \ex(x;1)\\
       \ex^{m+1}(x;)&=&   \ex^{m}(  \ex^{1}(x;);)
    \end{array}
\end{equation*}
Hence  $\varepsilon^m(x)=\ex^m(x;)$, for all $m\geq 1$. Now, let $f(\vec x) \in \felementary$. By Proposition~\ref{prop:EsubseteqEbin}.\ref{enum:EsubseteqEbin1}, we have that $f(\vec x) \in \felementary_{0,1}$. By Lemma~\ref{lem:padding},   there exist $f^*(w; \vec x)\in\nbc$ and a monotone function $t_{f}\in \felementary_{0,1}$  such that, for all $w, \vec{x}$ with  $w \geq t_f(\vec{x})$, it holds that $f^*(w; \vec{x})= f(\vec{x})$. By Proposition~\ref{prop:EsubseteqEbin}.\ref{enum:EsubseteqEbin2}  there exists $m \geq 1$ such that:
\begin{equation*}
    t_f(\vec{x})\leq  \ex^{m}(\max_{\sharp \vec{x}}(\vec{x};);) 
\end{equation*}
where $\max_{\sharp \vec{x}}(\vec{x};)$ is the $k$-ary maximum function, which is in $\bc$ by Theorem~\ref{thm:bellantoni}, and hence in $\nbc$. Therefore:
\begin{equation*}
    f(\vec{x};)=f^*(\ex^{m}(\max_{\sharp \vec{x}}(\vec{x};););\vec{x})\in \nbc
\end{equation*}
\end{proof}

Now, by the same argument as for \Cref{cor:bcpp-bc-fptime}, only using \Cref{lem:padding} above instead of appealing to \cite{BellantoniCook}, we can give our main characterisation result for algebras for elementary computation:

\begin{cor}\label{cor:nb-elementary-characterization} The following are equivalent:
\begin{enumerate}
    \item\label{enum:elementary1} $f(\vec x;)\in \nbc$.
    \item \label{enum:elementary2}  $f(\vec x;)\in \nbcpp$. 
    \item \label{enum:elementary3}  $f(\vec x)\in \felementary$.
\end{enumerate}
\end{cor}
  \section{Characterisations for circular systems}
\label{sec:completeness}
We now return our attention to the circular systems $\cbc$ and $\ncbc$ that we introduced in \Cref{sec:two-tiered-circular-systems-on-notation}. 
We will address the complexity of their definable functions by `sandwiching' them between function algebras of \Cref{sec:some-variants}, given their characterisations that we have just established.


\subsection{Completeness}
To show that $\cbc$ contains all polynomial-time functions, we may simply simulate Bellantoni and Cook's algebra:

\begin{thm}
\label{thm:bc-in-cbc}
If $f(\vec x;\vec y) \in \bc$ then $f(\vec x; \vec y)\in \cbc$.
\end{thm}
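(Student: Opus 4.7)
The plan is to invoke \Cref{prop:bc-type-system-characterisation} and argue by induction on a $\bc$-derivation $\der$ of $f$, producing a $\cbc$-coderivation $\der^\circ$ with $\denot{\der^\circ} = \denot\der$. Every rule of $\bc$ other than $\saferec$ is also a rule of $\bcnorec$, so these inductive steps amount to re-applying the same rule to $\cbc$-coderivations supplied by the inductive hypothesis. Since at most one new rule instance is added at the root and every infinite branch, after crossing that step, descends into one of the immediate sub-coderivations (already in $\cbc$ by hypothesis), the four properties defining $\cbc$ (regularity, progressiveness, safety, left-leaningness) propagate trivially.

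Only $\saferec$ requires a genuine construction. From the inductive hypothesis we get $\gfunction^\circ: \sq\vec\n,\vec\n\seqar\n$ and $\hfunction_i^\circ: \sn,\sq\vec\n,\vec\n,\n\seqar\n$ (for $i=0,1$) in $\cbc$. I then define $\der^\circ$ to be the regular coderivation
\[
\vlderivation{
\vliiin{\cnd_\sq}{\bullet}{\underline{\blue{\sn}}, \sq\vec\n, \vec\n \seqar \n}
{\vltr{\gfunction^\circ}{\sq\vec\n, \vec\n \seqar \n}{\vlhy{\ }}{\vlhy{\ }}{\vlhy{\ }}}
{\vliin{\cut_\n}{}{\blue{\sn}, \sq\vec\n, \vec\n \seqar \n}
    {\vlin{\cnd_\sq}{\bullet}{\blue{\sn}, \sq\vec\n, \vec\n \seqar \n}{\vlhy{\vdots}}}
    {\vltr{\hfunction_0^\circ}{\blue{\sn}, \sq\vec\n, \vec\n, \n \seqar \n}{\vlhy{\ }}{\vlhy{\ }}{\vlhy{\ }}}
}
{\vliin{\cut_\n}{}{\blue{\sn}, \sq\vec\n, \vec\n \seqar \n}
    {\vlin{\cnd_\sq}{\bullet}{\blue{\sn}, \sq\vec\n, \vec\n \seqar \n}{\vlhy{\vdots}}}
    {\vltr{\hfunction_1^\circ}{\blue{\sn}, \sq\vec\n, \vec\n, \n \seqar \n}{\vlhy{\ }}{\vlhy{\ }}{\vlhy{\ }}}
}
}
\]
Unfolding the semantics of $\cnd_\sq$ and $\cut_\n$ from \Cref{defn:semantics-coderivations} yields exactly the defining equations of safe recursion on notation in \Cref{defn:bc-function-algebra}, so $\denot{\der^\circ}(x,\vec x;\vec y)=f(x,\vec x;\vec y)$.

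The verification of the four $\cbc$ conditions on $\der^\circ$ is then routine. Regularity is immediate, the distinct sub-coderivations being $\bullet$, the two $\cut_\n$-rooted bodies, and the finitely many sub-coderivations of $\gfunction^\circ$ and the $\hfunction_i^\circ$. Any infinite branch either ultimately enters $\gfunction^\circ$ or some $\hfunction_i^\circ$ (inheriting the three non-regularity conditions by the inductive hypothesis), or it stays in the outer loop and crosses $\cnd_\sq$ infinitely often; in the latter case the underlined $\blue{\sn}$ carries an infinite thread principal for $\cnd_\sq$ infinitely often (progressing), no $\cut_\sq$ lies in the loop (safety), and each $\cut_\n$ in the loop is traversed to its left premise (left-leaningness). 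The main conceptual point — the crux of the construction — is the deliberate placement of the recursive call in the \emph{left} premise of $\cut_\n$ and the use of $\cnd_\sq$ (rather than $\cut_\sq$) to drive the iteration; these two choices are exactly what allow safety and left-leaningness to coexist within one loop.
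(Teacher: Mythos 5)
Your proposal is correct and follows essentially the same route as the paper's own proof: reduce via \Cref{prop:bc-type-system-characterisation}, induct on the $\bc$-derivation, and for $\saferec$ build the loop with $\cnd_\sq$ at the companion and the recursive back-pointer placed in the \emph{left} premise of $\cut_\n$, which is exactly the paper's construction (the paper merely identifies the two successor branches notationally where you display them separately). The verification of regularity, progressiveness, safety and left-leaningness matches the paper's argument as well.
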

\begin{proof}
By Proposition~\ref{prop:bc-type-system-characterisation} it suffices to show that for any $\bc$-derivation $\der$  there is a $\cbc$-coderivation $\der^*$  such that $\denot\der(\vec x; \vec y)=\denot{\der^*}(\vec x,  \vec y)$. The proof is by induction on $\der$. The only non-trivial case is when   $\der$ is the following derivation:
\[
\small
 \vlderivation{
    \vliiin{\srec}{}{\sn, \Gamma \seqar \n}{
        \vltr{\der_0}{\Gamma \seqar \n}{\vlhy{\ }}{\vlhy{\ }}{\vlhy{\ }}
    }{
        \vltr{\der_1}{\sn, \Gamma, \n \seqar \n}{\vlhy{\ }}{\vlhy{\ }}{\vlhy{\ }}
    }{
        \vltr{\der_2}{\sn, \Gamma, \n \seqar \n}{\vlhy{\ }}{\vlhy{\ }}{\vlhy{\ }}
    }
    }
\]
We define $\der^*$ as follows:
\[
\small
\vlderivation{
\vliin{\cnd_\sq}{\bullet}{\blue{\underline{\sn}},\Gamma \seqar \n}{
\vltr{\der_0^*}{\Gamma \seqar \n}{\vlhy{\ }}{\vlhy{\ }}{\vlhy{\ }}
}{
\vliin{\cut_{\n}}{{\scriptstyle i=1,2}}{\blue{\sn}, \Gamma  \seqar \n}{
   \vlin{\cnd_{\sq}}{\bullet}{
   \blue{\sn}, \Gamma  \seqar \red{\n}
   }{
   \vlhy{\vdots}
   }
}{
   \vltr{\der^*_i}{\blue{\sn}, \Gamma , \red{\n} \seqar \n}{\vlhy{\ }}{\vlhy{\ }}{\vlhy{\ }}
}
}
}
\]
where $\Gamma= \sq \vec \n, \vec \n$ and we identify the coderivations corresponding to the second and the third premise of the conditional rule, as they only differ on the sub-coderivation $\der^*_{i}$ ($i=1$ for the former and $i=2$ for the latter). 

The above coderivation is clearly safe and left-leaning, by the inductive hypotheses for $\der_0,\der_1,\der_2$. 
{To see that it is progressing, note that any infinite branch is either eventually entirely in $\der_0^*$, $\der_1^*$ or $\der_2^*$, in which case it is progressing by the inductive hypotheses, or it simply loops on $\bullet$ forever, in which case there is a progressing thread along the blue $\blue{\sn}$.} 

Moreover, the equational program associated with $\der^*$  is equivalent to:
 \[
 \arraycolsep=2pt
    \begin{array}{rcll}
         \denot{\der^*_\epsilon} (0, \vec x;\vec y) & \dfn &  \denot{\der^*_0} (\vec x;\vec y) \\
         \denot{\der^*_\epsilon} (\succ 0 x, \vec x; \vec y) & \dfn & \denot{\der^*_1} (x, \vec x; \vec y, \denot{\der^*_\epsilon}(x, \vec x; \vec y)) & \text{if $x \neq 0$} \\
         \denot{\der^*_\epsilon} (\succ 1 x , \vec x; \vec y) & \dfn & \denot{\der^*_2} (x, \vec x; \vec y,  \denot{\der_\epsilon}(x, \vec x; \vec y)) 
    \end{array}
    \]
so that $\denot\der(\vec x; \vec y)=\denot{\der^*}(\vec x,  \vec y)$.
\end{proof}

We can also show that $\ncbc$ is complete for elementary functions by  simulating  our nested algebra $\nbc$. 
First, we need to introduce the notion of \emph{oracle} for coderivations.\footnote{Notice that a similar notational convention discussed in~\Cref{rem:variables-vs-functions} applies \emph{mutatis mutandis} to~\Cref{defn:derivation-oracles}.}

\begin{defn}
[Oracles for coderivations]
\label{defn:derivation-oracles}
Let $\vec{a}=a_1,\ldots, a_n$ be a set of safe-normal functions. A $\bcnorec(\vec a)$-coderivation is just a usual $\bcnorec$-coderivation that may use initial sequents of the form
$\vlinf{a_i}{}{\sn^{n_i},\n^{m_i}\seqar \n }{}$, when $a_i $ takes $n_i$ normal and $m_i$ safe inputs. 
We write:
\begin{equation*}
    \toks0={0.5}
    \vlderivation{
    \small
\vltrf{\mathcal{D}(\vec a)}{\Gamma\seqar A}{\vlhy{}}{\scriptsize \vlin{a_i}{i}{\sn^{n_i},\n^{m_i}\seqar \n }{\vlhy{\ }}}{\vlhy{\ }}{\the\toks0}
}
\end{equation*}
for a coderivation $\der$ whose initial sequents are among the initial sequents  $\vlinf{a_i}{}{\sn^{n_i},\n^{m_i}\seqar \n }{}$, with $i=1,\dots, n$.  We write $ \ncbc(\vec{a})$ for the set of $\ncbc$-coderivations  with initial functions $\vec{a}$. 
We may sometimes omit indicating some oracles $\vec a$ if it is clear from context.

The semantics of such coderivations  and the notion of $\ncbc(\vec a)$-definability are as expected, with coderivations representing functions over the oracles $\vec a$, and   $\denot{\der(\vec a)}\in \ncbc(\vec{a})$ denoting the induced interpretation of $\der(\vec a)$. 
\end{defn}

Before giving our main completeness result for $\ncbc$, we need the following lemma allowing us to `pass' parameters to oracle calls.
It is similar to the notion $\der^{\vec \rho}$ from \cite[Lemma 42]{Das2021}, only we must give a more refined argument due to the unavailability of contraction in our system.

\begin{lem}\label{lem:to-prove-ncbc-complete-for-nbc}
Let $\der(a)$ be a regular coderivation over initial sequents $\vec a, a$ of the form:
\[
\toks0={0.5}
\vlderivation{
\small
\vltrf{\der(a)}{\sn, \overset{k}{\ldots},\sn, \Gamma  \seqar \n}{\scriptsize \vlin{a}{}{    \Delta \seqar \n}{\vlhy{\ }}}{\vlhy{}}{\scriptsize \vlin{a_i}{i}{ \Delta_i \seqar \n}{\vlhy{\ }}}{\the\toks0}
}
\]
where $\Gamma$ and $\Delta$ are lists of non-modal formulas, and   the path from the conclusion to each initial sequent $a$ does not contain $\cut_\sq$-steps, $\sql$-steps and the leftmost premise of a $\cnd_\sq$-step. 
Then, there exists an $a^*$ and a regular coderivation   $\der^*( a^*)$  with shape:
\[
\toks0={0.5}
\vlderivation{
\small
\vltrf{\der^*( a^*)}{\sn, \overset{k}{\ldots},\sn, \Gamma \seqar \n}{\scriptsize \vlin{a^*}{}{ \sn, \overset{k}{\ldots},\sn,\Delta  \seqar \n}{\vlhy{\ }}}{\vlhy{}}{\scriptsize \vlin{a_i}{i}{  \Delta_i \seqar \n}{\vlhy{\ }}}{\the\toks0}
}
\]
such that:
\begin{itemize}
\item  $\denot{\der^*( a^*)}( \vec x; \vec y)=\denot 
     {\der( a( \vec x))}(\vec x; \vec y) $;
    \item there exists a $\sn$-thread from the $j$\textsuperscript{th} $\sn$ in the LHS of the end-sequent to the $j$\textsuperscript{th} $\sn$ in the context of any occurrence of the initial sequent $a^*$ in $\der^*( a^*)$, for $1\leq j\leq k$.
    \end{itemize}
    Moreover, if $\der(a)$ is progressing, safe or left-leaning, then $\der^*(a^*)$ is also progressing, safe or left-leaning, respectively.
\end{lem}
\begin{proof}[Proof sketch]
Let us consider a path $B$ from the root of $\der( a)$ to an occurrence of the initial sequent $a$. 
Since the conclusion of   $\der( a)$ contains $k$ modal formulas, and $B$ cannot cross $\cut_\sq$-steps, $B$ contains exactly $k$ $\sn$-threads, and all such threads  start from the root. 
Moreover, since $a$ has only non-modal formulas and $B$ cannot cross $\sql$-steps or the leftmost premise of a $\cnd_\sq$-step, we conclude that each of the $k$  $\sn$-threads must end in the principal formula of a $\wk_\sq$-step. 
For each $j$ we remove the corresponding  $\wk_\sq$-step in $B$  we add an extra $\sn$ to the antecedent of all higher sequents in $B$ (this operation may require us to introduce weakening steps for other branches of the proof).  
By repeatedly applying the above procedure for each possible path  from the root of $\der(a)$ to an initial sequent $a$ we obtain a coderivation with the desired properties.
\end{proof}

Finally we can give our main simulation result for $\ncbc$:

\begin{thm}
\label{thm:nbc-in-ncbc}
If $f(\vec x;\vec y) \in \nbc$ then $f(\vec x; \vec y)\in \ncbc$.
\end{thm}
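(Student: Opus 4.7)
The plan is to proceed by induction on the generation of $f(\vec x;\vec y)$ in $\nbc$, extending the strategy of \Cref{thm:bc-in-cbc} to handle the new scheme $\snrec$. The crux is the $\snrec$ case, where the recursive call now appears as an oracle and must be `absorbed' into a backpointer of a circular coderivation; for this I shall rely critically on \Cref{lem:to-prove-ncbc-complete-for-nbc}.

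To make the induction go through, I would strengthen the hypothesis as follows: for every $f(\vec x;\vec y) \in \nbc(\vec a)$ there is an $\ncbc(\vec a)$-coderivation $\der(\vec a)$ with $\denot{\der(\vec a)} = f$ such that for every oracle $a_i \in \vec a$, every path from the root of $\der(\vec a)$ to an occurrence of the initial sequent $a_i$ avoids $\cut_\sq$-steps, $\sql$-steps, and the leftmost premise of any $\cnd_\sq$-step. This structural invariant is exactly the precondition of \Cref{lem:to-prove-ncbc-complete-for-nbc}. The base cases (initial functions and oracle calls) satisfy it trivially, and the two variants of Safe Composition preserve it: composition along a safe parameter uses $\cut_\n$ at the root, which is allowed, while composition along a normal parameter uses $\cut_\sq$ but only between oracle-free coderivations, so no oracle sites can lie beneath the new $\cut_\sq$.

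For the $\snrec$ case, let $\gfunction \in \ncbc(\emptyset)$ and $\hfunction_i(a) \in \ncbc(a,\vec a)$, $i=0,1$, be the coderivations provided by the IH for $g$ and for $h_i$. The strengthened IH applied to $a$ in each $\hfunction_i(a)$ allows us to invoke \Cref{lem:to-prove-ncbc-complete-for-nbc} with $k = 1 + |\vec x|$, yielding coderivations $\hfunction_i^*(a^*) \in \ncbc(a^*,\vec a)$ in which the oracle $a^*$ has the enriched conclusion $\sn, \sq \vec \n, \vec \n \seqar \n$, exactly matching the shape of the recursive call. We then build
\[
\vlderivation{
\vliiin{\cnd_\sq}{\bullet}{\sn, \sq \vec \n, \vec \n \seqar \n}{
\vltr{\gfunction}{\sq \vec \n, \vec \n \seqar \n}{\vlhy{\ }}{\vlhy{\ }}{\vlhy{\ }}
}{
\vltr{\hfunction_0^*}{\sn, \sq \vec \n, \vec \n \seqar \n}{\vlhy{\ }}{\vlhy{\ }}{\vlhy{\ }}
}{
\vltr{\hfunction_1^*}{\sn, \sq \vec \n, \vec \n \seqar \n}{\vlhy{\ }}{\vlhy{\ }}{\vlhy{\ }}
}
}
\]
identifying every occurrence of $a^*$ inside each $\hfunction_i^*$ with $\bullet$. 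Regularity is immediate, and by inspection the induced equational program coincides with the $\snrec$ equations defining $f$.

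It remains to verify the global criteria and to re-establish the strengthened IH for the inherited oracles $\vec a$. Progressiveness: any infinite branch either stays eventually inside one of $\gfunction$ or $\hfunction_i^*$ (progressing by IH) or loops through $\bullet$ infinitely often, in which case the thread-preservation clause of \Cref{lem:to-prove-ncbc-complete-for-nbc} supplies an $\sn$-thread that is principal for $\cnd_\sq$ at each loop, so the branch is progressing. Safety: by the structural condition, any path inside $\hfunction_i^*$ from its root to an $a^*$-site contains no $\cut_\sq$; thus an infinite branch that loops through $\bullet$ crosses no $\cut_\sq$ at all, while a branch that eventually remains inside $\hfunction_i^*$ crosses only finitely many by IH. The strengthened IH for $\vec a$ is preserved because paths from the new root to any $\vec a$-site must enter through a non-leftmost premise of the root $\cnd_\sq$ and then follow a path inside $\hfunction_i^*$ that already satisfies the invariant. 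The main obstacle in this proof is formulating and maintaining the strengthened IH so that \Cref{lem:to-prove-ncbc-complete-for-nbc} can be invoked at each $\snrec$ step; verifying that its hypotheses survive each inductive case is the delicate part.
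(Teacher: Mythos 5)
Your proposal is correct and follows essentially the same route as the paper's own proof: the same strengthened induction hypothesis (paths to oracle sequents avoiding $\cut_\sq$, $\sql$, and leftmost premises of $\cnd_\sq$), the same invocation of \Cref{lem:to-prove-ncbc-complete-for-nbc} to enrich the oracle's sequent before tying it back as a cycle, and the same verification of regularity, safety, and progressiveness via the thread-preservation clause of that lemma. The only differences are cosmetic (e.g.\ displaying the two step premises separately rather than identifying them, and spelling out the re-establishment of the invariant for the inherited oracles, which the paper leaves implicit).
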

\begin{proof}
We show by induction on $f(\vec x;\vec y)\in \nbc(\vec a)$ that there is a $ \ncbc(\vec{a})$-coderivation $\der_f$ such that:
\begin{enumerate}
    \item $\denot{\der_f} (\vec x;\vec y) = f(\vec x;\vec y)$;
    \item \label{enum:construction} the path from the conclusion of $\der_{f}$ to each initial sequent $a_i$ does not  contain $\cut_\sq$-steps, $\sql$-steps and the leftmost premise of a $\cnd_\sq$-step.
\end{enumerate}
When $f(\vec x;\vec y)$ is an initial function the definition of $\der_f$ is  straightforward, as $\vec a= \emptyset$ and $\der_f$ can be constructed as in~\Cref{prop:bc-type-system-characterisation} . 
If $f(\vec x;\vec y)=a_i(\vec x;\vec y)$ then $\der_f$ is the  initial sequent  $
 \vlinf{a_i}{}{\sq \vec \n , \vec \n\seqar \n }{}
$. 

Suppose that
 $f(\vec x; \vec y)= h(\vec x, g(\vec x;); \vec y)$ with $g(\vec x;)\in \nbc(\emptyset)$
 and
 $h(\vec x, z; \vec y)\in \nbc(\emptyset)$.  Then $f$ can be  $\ncbc(\emptyset)$-defined by:
\[
\vlderivation{
\small
\vliin{\cut_\sq}{}
{\sq \vec \n, \vec \n \seqar \n}
{\vlin{\sqr}{}{\sq \vec \n\seqar \red{\sn}}{\vltr{\der_{g}}{\sq \vec \n\seqar \n}{\vlhy{\ }}{\scriptsize  \vlhy{\ \ \  \emptyset\ \ \ }  }{\vlhy{\ }}}}
{\vltr{\der_{h}}{\red{\sn}, \sq \vec \n, \vec \n \seqar \n}{\vlhy{\ }}{\scriptsize  \vlhy{\ \ \  \emptyset\ \ \ } }{\vlhy{\ }}}
}
\]
Note that, while we introduce a $\cut_\sq$ here, there crucially remains no $\cut_\sq$ between the conclusion and an oracle sequent, thanks to the condition that $g$ and $h$ are over no oracles.

Suppose $f(\vec x; \vec y)= h(\vec x; \vec y, g(\vec x; \vec y))$. 
Then $f$ is $\ncbc(\vec a)$-defined by:
\[
\vlderivation{
\small
\vliin{\cut_\n}{}
{\sq \vec \n, \vec \n \seqar \n}
{\vltr{\der_g}{\sq \vec \n, \vec \n \seqar \red{\n}}{\vlhy{\ }}{\scriptsize \vlin{a_i}{i}{\sq \vec \n, \vec \n \seqar \n}{\vlhy{\ }}}{\vlhy{\ }}}
{\vltr{\der_h}{ \sq \vec \n, \vec \n, \red{\n} \seqar \n}{\vlhy{\ }}{\scriptsize  \vlin{a_i}{i}{\sq \vec \n,  \vec \n \seqar \n}{\vlhy{\ }}}{\vlhy{\ }}}
}
\]

Finally, suppose that $f(x, \vec x; \vec y)$ is obtained from $g(\vec x; \vec y)$ {over $\emptyset$}, and  $h_j( a)(x,\vec x;\vec y)$ ($j=0,1$) over $a,\vec a$ by $\snrec$. Then,  $f(0, \vec x; \vec y) = g(\vec x;\vec y)$ and $f(\succ j x, \vec x; \vec y) = h_j(\lambda \vec v . f(x,\vec x;\vec v))(x,\vec x;\vec y)$. 
Note that $a$ has same type as $\lambda \vec v . f(x,\vec x;\vec v)$, so it is a function taking safe arguments only. 
Thus by induction hypothesis, $h_j(a)(x,\vec x;\vec y)$ is $\ncbc(\vec a, a)$-defined by:
\[
\vlderivation{
\small
\vltr{\der_{h_j}(a)}{ \sn, \sq \vec \n, \vec \n \seqar \n}{\scriptsize \vlin{a}{}{ \vec \n \seqar \n}{\vlhy{\ }}}{\vlhy{}}{\scriptsize \vlin{a_i}{i}{   \sq \vec \n , \vec \n \seqar \n}{\vlhy{\ }}}
}
\]
where the path from the conclusion of $\der_{h_j}( a)$ to each initial sequent $a_i$ does not  contain $\cut_\sq$-steps, $\sql$-steps and the leftmost premise of a $\cnd_\sq$-step. By Lemma~\ref{lem:to-prove-ncbc-complete-for-nbc} we obtain the following coderivation:
\[
\vlderivation{
\small
\vltr{\der_{h_j}^*( a^*)}{\sn, \sq \vec \n, \vec \n \seqar \n}{\scriptsize \vlin{a^*}{}{\sn, \sq \vec \n , \vec \n \seqar \n}{\vlhy{\ }}}{\vlhy{}}{\scriptsize \vlin{a_i}{i}{\sn, \sq \vec \n, \vec \n \seqar \n}{\vlhy{\ }}}
}
\]
where $\denot{\der_{h_j}^*( a^*)}(x, \vec x; \vec y)=\denot {\der_{h_j}( a(x, \vec x))}(x, \vec x; \vec y) $ and there exists a $\sn$-thread from the $k$-th modal formula in the antecedent of the conclusion to the $k$-th modal formula in the antecedent of any $a^*$ initial sequent in $\der^*(a^*)$.  
We define $\der_f$ as follows: 
\[
\vlderivation{
\small
\vliin{\cnd_{\sq}}{\bullet}{\underline{\blue{\sn}},  \sq \vec \n, \vec \n \seqar \n}
{
\vltr{\der_g}{ \sq \vec \n, \vec \n \seqar \n}{\vlhy{\ }}{\vlhy{\scriptsize  \ \emptyset\ \ \ }}{\vlhy{}}
}
{\vltr{\der_{h_j}^*{(\der_f)}}{\blue{\sn},  \sq \vec \n, \vec \n \seqar \n}{\scriptsize \vlin{\cnd_{\sq}}{\bullet }{\blue{\sn},  \sq \vec \n, \vec \n \seqar \n}{\vlhy{\vdots }}}{\vlhy{}}{\scriptsize \vlin{a_i}{i}{\sn,  \sq \vec \n, \vec \n \seqar \n}{\vlhy{\ }}}}
}
\]
where we identify the sub-coderivations corresponding to the second and the third premises of the conditional rule. 
By construction and induction hypothesis, the above coderivation is regular and safe. 
{To see that it is progressing, note that any infinite branch $B$ either hits $\bullet$ infinitely often, in which case there is a progressing thread along the blue $\blue{\sn}$, by the properties of $\der_{h_j}^*$ inherited from \Cref{lem:to-prove-ncbc-complete-for-nbc}, or $B$ shares a tail with an infinite branch of $\der_g$ or $\der_{h_j}^*(a^*)$, which are progressing by the inductive hypotheses.}

We show that $\der_f$ $\ncbc(\vec a)$-defines $f$  by induction on $x$. For the base case, $ \denot{\der^*_f}(0, \vec x; \vec y)=   \denot{\der_g}(\vec x; \vec y)= g(\vec x; \vec y)= f(0, \vec x; \vec y)$. For the inductive step:
\[
\renewcommand{\arraystretch}{1.5}
\arraycolsep=2pt
\begin{array}{rcl}
        \denot{\der^*_f}(\succ jx, \vec x; \vec y)&=&  \denot{\der^*_{h_j}(\lambda \vec v.  \denot{\der^*_f}(x, \vec x; \vec v) )}(x, \vec x; \vec y)  \\ 
     & =& \denot{\der^*_{h_j}( \lambda \vec v.  f(x, \vec x; \vec v) )}(x, \vec x; \vec y) \\
       & =& \denot{\der_{h_j}( \lambda \vec v.  f(x, \vec x; \vec v) )}(x, \vec x; \vec y) \\
       & =& h_j( \lambda \vec v.  f(x, \vec x; \vec v) )(x, \vec x; \vec y)\\
       &=& f(\succ j x, \vec x; \vec y) .
\end{array}
\]
This completes the proof.
\end{proof}

\subsection{The Translation Lemma} \label{sec:translation}

 
The goal of this section is to prove the following theorem, which shows  a translation of  $\ncbc$-coderivations into   functions of $\nbcpp$   mapping, in particular,    $\cbc$-derivations into functions of $\bcpp$:

\begin{thm}[Translation Lemma]\label{lem:short-translation-lemma}
Let $\der$ be a  $\ncbc$-coderivation. Then, there exists a set of $n$ functions $(f_i)_{1 \leq i \leq n}$ such that $f_{1}= \model{\der}$ and, for all $i$:
\begin{equation}\label{eqn:naive1}
\arraycolsep=2pt
\begin{array}{rcl}
 f_i(\vec{x}; \vec{y})   & = & h_i\left( \lambda \vec{u}\subset \vec{x},\lambda \vec v. f_j(\vec{u};\vec v)\right)_{1 \leq j \leq n}(\vec{x}; \vec{y})
\end{array}
\end{equation}
where  $h_i \in \nbcpp (a_i)_{1 \leq i \leq n}$. Moreover, if    $\der$ is a $\cbc$-coderivation  then, for all $i$:
\begin{equation}\label{eqn:naive2}
\arraycolsep=2pt
\begin{array}{rcl}
 f_i(\vec{x}; \vec{y})   & = & h_i\left( \lambda \vec{u}\subset \vec{x},\lambda \vec v\subseteq \vec y. f_j(\vec{u};\vec v)\right)_{1 \leq j \leq n}(\vec{x}; \vec{y})
\end{array}
\end{equation}
and  $h_i \in \bcpp  (a_i)_{1 \leq i \leq n}$.
\end{thm}

This would conclude  our characterisation of $\cbc $ and $\ncbc$ in terms of computational complexity. 

A formal proof of the translation lemma requires a more general statement (i.e.,~\Cref{lem:translation}) as well as a few technical notions and intermediate results. 
First, we observe that a regular coderivation can be naturally seen as a finite tree with `backpointers', a representation known as \emph{cycle normal form}, cf.~\cite{Brotherston05,brotherston2011sequent}.

\begin{defn}[Cycle normal form]
Let $\der$ be a regular $\bcnorec$-coderivation. 
The  \emph{cycle normal form} (or simply \emph{cycle nf}) of $\der$ is a pair $\derrd$, where  $\rd$ is a partial self-mapping on the nodes of $\der$ whose domain of definition is denoted $\bud(\der)$ and:
\begin{enumerate}[(i)]
    \item every infinite branch of $\der$ contains some (unique) $\nu \in \bud(\der)$;
    \item if $\nu \in \bud(\der)$ then both $\rd(\nu) \sqsubset \nu$ and  $\der_{\rd(\nu)} = \der_\nu$;
    \item for any two distinct nodes $\mu \sqsubset \nu$ strictly below $\bud(\der)$, $\der_\mu \neq \der_\nu$
\end{enumerate}

We call any $\nu \in \bud(\der)$ a \emph{bud}, and $\rd(\nu)$ its \emph{companion}. A \emph{terminal} node is either a leaf of $\der$ or a bud. The set of nodes of $\der$ bounded above by a terminal node is denoted $T_\der$.  
Given a node $\nu\in T_\der$, we  define $\bud_\nu(\der)$ as the restriction of buds to those above $\nu$.
\end{defn}

\begin{rem}\label{rem:canonicalnodesfinite}
The cycle normal form of a regular coderivation $\der$ always exists, as  by definition any infinite branch contains a node $\nu$ such that $\pder \nu=\pder \mu$ for some node $\mu$ below $\nu$. 
$\bud(\der)$ is designed to consist of just the \emph{least} such nodes, so that by construction the  cycle normal form is unique. 
Note that $\bud(\der)$ must form an $\sqsubset$-antichain: if $\mu, \nu \in \bud(\der)$ with $\mu\sqsubset \nu$, then $\rd(\mu)\sqsubset \mu $ are below $\bud(\der)$ but we have $\der_{\rd(\mu)} = \der_\mu$ by point (ii) above, contradicting point (iii). 

Also, notice that any branch of $\der$ contains a leaf of $T_\der$. Moreover, since $\bud(\der)$ is an antichain, the leaves of $T_\der$ defines a ‘bar’ across  $\der$,
and so $T_\der$ is a finite tree. 
\end{rem}


The following proposition allows us to reformulate   progressiveness, safety and left-leaning conditions for cycle normal forms. 
\begin{prop}\label{prop:structureofcycles}
Let $\der$ be a regular   $\bcnorec$-coderivation with cycle nf $\derrd$. 
For any  $\nu \in \bud(\der)$, the (finite) path $\pi$ from $\rd(\nu)$ to $\nu$ satisfies:
\begin{enumerate}
\item \label{enum:budcompanion0} if  $\der$ is progressing, $\pi$ must contain the conclusion of an instance of $\cnd_{\sn}$;
    \item \label{enum:budcompanion1} if  $\der$ is a $ \ncbc$-coderivation, $\pi$ cannot  contain  the conclusion of  $\cut_{\sq}$, $\sql$,  $\wk_{\sq}$, and  the   leftmost premise of  $\cnd_{\sq}$;
    \item \label{enum:budcompanion3} if $\der$ is a $\cbc$-coderivation,  $\pi$  cannot contain the conclusion of $\wk_{\n}$, the leftmost premise of  $\cnd_{\n}$, and the rightmost premise of $\cut_{\n}$.
\end{enumerate}
\end{prop}
\begin{proof}
By definition of cycle nf, each path from $\rd(\nu)$ to $\nu$ in $\derrd$ is contained in a branch of $\der$ such that each rule instance in the former appears  infinitely many times in the latter. Hence:
\begin{enumerate}
\item[(i)]  if $\der$ is progressing,  the path contains the conclusion of an instance of $\cnd_{\sq}$;
\item[(ii)]  if $\der$ is safe, the path   cannot contain the conclusion of a $\cut_{\sq}$ rule;
\item[(iii)] if $\der$ is left-leaning, the path    cannot contain the rightmost premise of a $\cut_{\n}$ rule.
\end{enumerate}
This shows point~\ref{enum:budcompanion0}. Let us consider  point~\ref{enum:budcompanion1}.  By point~(ii), if $\der$ is safe then,  going from a node $\mu$ of the path to each of its children $\mu'$,  the number of modal formulas in the context of the corresponding sequents cannot increase.  Moreover, the only cases where this number strictly decreases is when $\mu$ is the conclusion of $\sql$,  $\wk_{\sn}$, or  when $\mu'$ is  the leftmost premise of $\cnd_{\sq}$. Since $\rd(\nu)$ and $\nu$ must be labelled with the same sequent, all such cases are impossible. As for  point~\ref{enum:budcompanion3} we notice that,   by point~(iii) and the above reasoning, if $\der$ is safe and left-leaning then,  going from a node $\mu$ of the path to each of its children $\mu'$,  the number of non-modal formulas in the context of the corresponding sequents cannot increase.  Moreover, the only cases where this number strictly decreases is when $\mu$ is the conclusion of $\wk_{\n}$, or  when $\mu'$ is  the leftmost premise of $\cnd_{\n}$. Since $\rd(\nu)$ and $\nu$ must be labelled with the same sequent, all such cases are impossible. 
\end{proof} 

In what follows we shall indicate circularities in cycle nfs explicitly by extending both $\ncbc$ and  $\cbc$ with a new inference rule called $\com$: 
\[
\vlinf{\com}{X}{\Gamma \Rightarrow A}{\Gamma \Rightarrow A}
\]
where $X$ is a finite set of nodes. 
In this presentation, we insist that each companion $\nu$ of the cycle nf $\derrd$ is always the conclusion of an instance of $\com$, where $X$ denotes the set of buds $\nu'$ such that $\rd(\nu')=\nu$.  
{This expedient will allow us to treat the case of companions separately, avoiding repeating the same argument for each rule.}


To facilitate the translation, we shall define two disjoint  sets  $\close{\nu}$ and $\open{\nu}$. Intuitively, given a cycle nf $\derrd$ 
 and  $\nu\in T_\der$,  $\close{\nu}$ is the set of  companions above  $\nu$, while   $\open{\nu}$  is the set of  buds whose companion is strictly below   $\nu$.

\begin{defn} 
Let $\derrd$ be the  cycle normal form of a $\bcnorec$-coderivation $\der$. We define the following two sets for any $\nu\in T_\der$:
\begin{equation*}
\begin{aligned}
\close{\nu}&\dfn \{ \mu \in \rd(\bud_\nu(\der))  \ \vert \ \nu \sqsubseteq \mu \}\\
\open{\nu}&\dfn \{ \mu \in \bud_\nu(\der) \ \vert \ \rd(\mu) \sqsubset \nu \}
\end{aligned}
\end{equation*}
\end{defn}

\newcommand{\amodel}[1]{a_{#1}}

We now state a generalised version of the translation lemma and prove it in  detail. The idea, here, is to associate with each node $\hnu\in T_\der$  an instance of the scheme  $\ssnrecpp$ such that:
\begin{itemize}
    \item the scheme  simultaneously defines the  functions in  $\mathcal{C}_{\hnu}=\{ \model{\nu} \ \vert \ \nu\in \close {\hnu} \cup \{\hnu \}\}$,   with the help of an additional set of oracles $\mathcal{O}_{\hnu}=\{ \amodel{\mu} \ \vert \ \mu\in \open{\hnu}\}$
    \item if  $\mathcal{O}_{\hnu}=\emptyset$ then,    for every $f_\nu\in \mathcal{C}_{\hnu}$   it holds that $\model{\nu}= \model{\pder \nu}$, i.e.,    $f_\nu$ defines  the interpretation of $\der_{\nu}$ via a set of equations (see~\Cref{defn:semantics-coderivations}). 
\end{itemize}
In particular, when  $\hnu$ is the root of $\der$ then $\open{\hnu} = \emptyset$, and so $\model{\hnu}=\model{\pder \hnu}=\model{\der}$. Moreover, since  $\model{\hnu}$ is defined by an instance of   $\ssnrecpp$ then $\model{\der}\in \nbcpp$ by Proposition~\ref{prop:simultaneous-recursion-admissible}. As a special case,  if $\der$ is a $\cbc$-coderivation then $\model{\hnu}$ can be defined by an instance of   $\ssrecpp$, and so  $\model{\der}\in \bcpp$.

\begin{lem}[Translation Lemma, general version]\label{lem:translation}
If $\derrd$ is the cycle nf of a $\ncbc$-coderivation $\der$ and  $\hnu\in T_\der$, 
    then $\forall \nu \in \close  \hnu \cup \{ \hnu\}$:
    \begin{equation*}
     \label{eqn:translationlemmaequation1}
    \begin{aligned}
     \model{ \nu}(\vec{x}; \vec{y})&=h_{\nu}\left(( \lambda \vec{u}\subset \vec{x},\lambda \vec v. \model{ \mu}(\vec{u};\vec v))_{\mu \in \close \hnu },( \lambda \vec{u}\subseteq \vec{x},\lambda \vec v. \amodel{ \mu}(\vec{u};\vec v))_{\mu \in  \open \hnu}\right)(\vec{x}; \vec{y})
    \end{aligned}
    \end{equation*}
    where:
    \begin{enumerate}
    \item \label{enum:a} $h_\nu \in \nbcpp( (\model{\mu})_{\mu \in \close \hnu}, (\amodel \mu)_{\mu \in \open \hnu})$, and so    $\model{ \nu}\in  \nbcpp( (\amodel{ \mu})_{\mu \in \open \hnu})$;
    \item \label{enum:b}  for all $\mu \in \open \hnu$, the order $\vec u \subseteq \vec x$ in $\lambda \vec{u}\subseteq \vec{x},\lambda \vec v. \amodel{ \mu}(\vec{u};\vec v)$ is strict  if the path from $\nu$ to $\mu$ in $T_\der$ contains the conclusion of an instance of  $\cnd_{\sq}$;
    \item \label{enum:d}  if moreover   $\der$ is a $\cbc$-coderivation  then 
    \begin{equation*}
     \begin{aligned}
     \model{ \nu}(\vec{x}; \vec{y})&=h_{\nu}\left(( \lambda \vec{u}\subset \vec{x},\lambda \vec v\subseteq \vec y. \model{ \mu}(\vec{u};\vec v))_{\mu \in \close \hnu },( \lambda \vec{u}\subseteq \vec{x},\lambda \vec v\subseteq \vec y. \amodel{ \mu}(\vec{u};\vec v))_{\mu \in  \open \hnu}\right)(\vec{x}; \vec{y})
    \end{aligned}
    \end{equation*}
    with $h_\nu \in \bcpp( (\model{\mu})_{\mu \in \close \hnu}, (\amodel \mu)_{\mu \in \open \hnu})$, and so    $\model{ \nu}\in  \bcpp( (\amodel{ \mu})_{\mu \in \open \hnu})$.
    \end{enumerate}
\end{lem}

\begin{proof}
By induction on the longest distance of $\nu_0$ from a leaf of  $T_\der$. Concerning the cases where   $\nu_0$ is the conclusion of an instance of $\id$ or $\zero$ we have $\close{\hnu}= \open{\hnu}= \emptyset$, and we simply set $h_{\hnu}\dfn \id$ and $h_{\hnu}\dfn 0$ respectively. If $\hnu$ is the conclusion of a bud then $\close{\hnu}= \emptyset$ and $\open {\hnu}= \{\hnu\}$, and we simply set $h_{\hnu}\dfn \amodel{\hnu}$. The cases where   $\nu$ is the conclusion of a rule in $\{\wk_{\n},\wk_{\sq}, \sql , \exch_\n, \exch_\sq, \sqr,  \succ 0, \succ 1\}$ are straightforward. Notice that for the cases $\wk_{\n}, \wk_{\sq},\sql $ we have $\open{\hnu}=\emptyset$ by Proposition~\ref{prop:structureofcycles} points~\ref{enum:budcompanion1} and \ref{enum:budcompanion3}.

Let us now consider the case where $\hnu$ is the conclusion of an instance of   $\cut_{\sq}$ with premises $\nu_1$ and $\nu_2$. By Proposition~\ref{prop:structureofcycles}.\ref{enum:budcompanion1} we have $\open \hnu= \emptyset$. By induction hypothesis we have  $\model{ {\nu_1}}(\vec x; \vec y), \model{{\nu_2}}(\vec x, x; \vec y)\in \nbcpp$. Since the conclusion of $\pder{\nu_1}$  has modal succedent, by Proposition~\ref{prop:cutbox}  there must be a coderivation $\der^*$ such that $\model{\der^*}(\vec x;)= \model{{\nu_1}}(\vec x; \vec y)\in \nbcpp$. Moreover, by Proposition~\ref{prop:cutbox-for-left-leaning},  if $\pder{\nu_1}$ is a $\cbc$-coderivation then so too is $\der^*$.  Hence, we define $h_{ \hnu}(\vec x; \vec y)\dfn \model{{\nu_2}}(\vec x, \model{\der^*}(\vec x;); \vec y)$. It is easy to see that points~\ref{enum:a}-\ref{enum:d} hold.

Suppose now that $\hnu$ is the conclusion of an instance of $\cut_{\n}$ with premises $\nu_1$ and $\nu_2$.  Then, $\open{\hnu}=\open{\nu_1}\cup \open{\nu_2}$ and  $\close{\hnu}=\close{\nu_1}\cup \close{\nu_2}$.
By induction hypothesis on $\nu_1$ and $\nu_2$ we have:
\[
\arraycolsep=2pt
\begin{array}{rcl}
     \model{{\nu_1}}(\vec{x}; \vec{y})
    & =&
     h_{\nu_1}\left(( \lambda \vec{u}\subset \vec{x}, \lambda \vec v. \model{{\mu}}(\vec{u};\vec v))_{\mu \in \close{\nu_1}}, ( \lambda \vec{u}\subseteq \vec{x}, \lambda \vec v. \amodel{{\mu}}(\vec{u};\vec v))_{\mu \in  \open{\nu_1}}\right)(\vec{x}; \vec{y})
\\
       \model{{\nu_2}}(\vec{x};y, \vec{y})  &=& 
       h_{\nu_2}\left(( \lambda \vec{u}\subset \vec{x}, \lambda v,\vec v. \model{{\mu}}(\vec{u};v,\vec v))_{\mu \in \close{\nu_2}}, ( \lambda \vec{u}\subseteq \vec{x}, \lambda v,\vec v. \amodel{{\mu}}(\vec{u};v,\vec v))_{\mu \in \open{\nu_2}}\right)(\vec{x}; y,\vec{y})
\end{array}
\]
So that we set
\begin{equation*}
    h_{{\hnu}}(\vec{x}; \vec{y})\dfn \model{{\nu_2}}(\vec{x}; \model{{\nu_1}}(\vec{x}; \vec{y}), \vec{y}) 
\end{equation*}
Points~\ref{enum:a} and~\ref{enum:b}  hold by applying the induction hypothesis. Concerning point~\ref{enum:d}, notice that   if $\der$ is a $\cbc$-coderivation then   $\open{\nu_2}= \emptyset$ by Proposition~\ref{prop:structureofcycles}.\ref{enum:budcompanion3}.  By applying the induction hypothesis on $\nu_1$ and $\nu_2$, we have $ \model{{\nu_2}}(\vec{x};y, \vec{y})\in \bcpp$ and   $ \model{{\nu_1}}(\vec{x}; \vec{y})=h_{\nu_1}\left(( \lambda \vec{u}\subset \vec{x}, \lambda \vec v\subseteq \vec y. \model{{\mu}}(\vec{u};\vec v))_{\mu \in \close{\nu_1}},( \lambda \vec{u}\subseteq \vec{x}, \lambda \vec v\subseteq \vec y. \amodel{{\mu}}(\vec{u};\vec v))_{\mu \in  \open{\nu_1}}\right)(\vec{x}; \vec{y})$, so that:
\begin{equation*}
     \model{{\hnu}}(\vec{x}; \vec{y})=h_{\hnu}\left(( \lambda \vec{u}\subset \vec{x}, \lambda \vec v\subseteq \vec y. \model{{\mu}}(\vec{u};\vec v))_{\mu \in \close{\hnu}},( \lambda \vec{u}\subseteq \vec{x}, \lambda \vec v\subseteq \vec y. \amodel{{\mu}}(\vec{u};\vec v))_{\mu \in \open{\hnu}}\right)(\vec{x}; \vec{y})
\end{equation*}

Suppose that $\hnu$ is the conclusion of a $\cnd_\sq$ step with premises $\nu'$, $\nu_1$, and $\nu_2$. By Proposition~\ref{prop:structureofcycles}.\ref{enum:budcompanion1} we have $\open{\nu'}=\emptyset$, so that $\model{{\nu'}}\in \nbcpp$ (resp., $\model{{\nu'}}\in \bcpp$) by induction hypothesis.  By definition,  $\open \hnu=\open{\nu_1}\cup \open{\nu_2}$ and  $\close \hnu=\close{\nu_1}\cup \close{\nu_2}$. Then, we set:
\[
\arraycolsep=2pt
    \begin{array}{c}
    h_{ \hnu}(x, \vec{x}; \vec{y})
    = \cnd(;x,  \model{{\nu'}}(\vec{x}; \vec{y}), \model{{\nu_1}}(\pred(x;),\vec{x}; \vec{y}) , \model{{\nu_2}}(\pred(x;), \vec{x}; \vec{y}))
  \end{array}
  \]
  where $\pred(x;)$ can be defined from $\pred(;x)$ and projections.  By induction hypothesis on $\nu_i$:
  \[
  \arraycolsep=2pt
      \begin{array}{rcl}
       \model{{\nu_i} }(\pred(x;), \vec{x}; \vec{y})
       &=&
       h_{\nu_i}\left(( \lambda u,\vec{u}\subset \pred(x;),\vec{x}, \lambda \vec{v}. \model{{\mu}}(u,\vec{u};\vec{v}))_{\mu \in \close{\nu_i}}, ( \lambda u,\vec{u}\subseteq \pred(x;),\vec{x}, \lambda \vec{v}. \amodel{{\mu}}(u,\vec{u};\vec{v}))_{\mu \in\open{\nu_i}} \right)(\pred(x;),\vec{x}; \vec{y})\\
       &=&
       h_{\nu_i}\left(( \lambda u,\vec{u}\subset x,\vec{x}, \lambda \vec{v}. \model{{\mu}}(u,\vec{u};\vec{v}))_{\mu \in \close{\nu_i}}, ( \lambda u,\vec{u}\subset x,\vec{x}, \lambda \vec{v}. \amodel{{\mu}}(u,\vec{u};\vec{v}))_{\mu \in\open{\nu_i}} \right)(\pred(x;),\vec{x}; \vec{y})\\
      \end{array}
\]
  hence 
  \[
  \arraycolsep=2pt
  \begin{array}{c}
        \model{ \hnu}(x, \vec{x}; \vec{y})
        =
        h_{\hnu}\left(( \lambda u,\vec{u}\subset x,\vec{x}, \lambda \vec{v}. \model{{\mu}}(u,\vec{u};\vec{v}))_{\mu \in \close{\hnu}}, ( \lambda u,\vec{u}\subset x,\vec{x}, \lambda \vec{v}. \amodel{{\mu}}(u,\vec{u};\vec{v}))_{\mu \in\open{\hnu}} \right)(x,\vec{x}; \vec{y})\\
  \end{array}
  \]
  which shows in particular point~\ref{enum:b}. Points~\ref{enum:a} and~\ref{enum:d} are straightforward.

Let us now consider the case where $\hnu$ is an instance of $\cnd_\n$. By Proposition~\ref{prop:structureofcycles}.\ref{enum:budcompanion3} we have $\open{\nu'}=\emptyset$, so that $\model{{\nu'}}\in \nbcpp$ (resp., $\model{{\nu'}}\in \bcpp$) by induction hypothesis.  By definition,  $\open \hnu=\open{\nu_1}\cup \open{\nu_2}$ and  $\close \hnu=\close{\nu_1}\cup \close{\nu_2}$. Then, we set:
\[
\arraycolsep=2pt
\begin{array}{c}
 h_{ \hnu}(\vec{x};y,  \vec{y})
 = 
 \cnd(;y, \model{{\nu'}}(\vec{x}; \vec{y}), \model{{\nu_1}}(\vec{x};\pred(;y), \vec{y}) , \model{{\nu_2}}( \vec{x};\pred(;y), \vec{y}))    
\end{array}
\]
By induction hypothesis on $\nu_i$:
 \[
\arraycolsep=2pt
\begin{array}{rcl}
\model{{\nu_i} }(\vec{x}; \pred(;y), \vec{y}) 
&=&
h_{\nu_i}\left(( \lambda \vec{u}\subset x,\vec{x}, \lambda v,\vec{v}. \model{{\mu}}(\vec{u};v,\vec{v}))_{\mu \in \close{\nu_i}},( \lambda \vec{u}\subseteq x,\vec{x}, \lambda v,\vec{v}. \amodel{{\mu}}(\vec{u};v,\vec{v}))_{\mu \in  \open{\nu_i}}\right)(\vec{x};\pred(;y), \vec{y})
\end{array}
\]
and   hence 
   \[
\arraycolsep=2pt
\begin{array}{c}
\model{ \hnu}( \vec{x};y, \vec{y})
=
h_{\hnu}\left((\lambda \vec{u}\subset \vec{x}, \lambda v,\vec v. \model{{\mu}}(\vec{u};v,\vec v))_{\mu \in \close{\hnu}},(\lambda \vec{u}\subseteq \vec{x}, \lambda v,\vec v. \amodel{{\mu}}(\vec{u};v,\vec v))_{\mu \in \open{\hnu}}\right)(\vec{x};y, \vec{y})
\end{array}
\]
Points~\ref{enum:a} and~\ref{enum:b} are straightforward. Concerning point~\ref{enum:d} notice that by induction hypothesis on $\nu_i$ we have:
 \[
\arraycolsep=2pt
\begin{array}{rcl}
\model{{\nu_i} }(\vec{x}; \pred(;y), \vec{y}) 
&=&
h_{\nu_i}\left(( \lambda \vec{u}\subset x,\vec{x}, \lambda v,\vec{v}\subseteq \pred(;y), \vec{y}. \model{{\mu}}(\vec{u};v,\vec{v}))_{\mu \in \close{\nu_i}},( \lambda \vec{u}\subseteq x,\vec{x}, \lambda v,\vec{v}\subseteq \pred(;y), \vec{y}. \amodel{{\mu}}(\vec{u};v,\vec{v}))_{\mu \in  \open{\nu_i}}\right)(\vec{x};\pred(;y), \vec{y})
\end{array}
\]
and so we have:
 \[
\arraycolsep=2pt
\begin{array}{c}
\model{ \hnu}( \vec{x};y, \vec{y})
=
h_{\hnu}\left((\lambda \vec{u}\subset \vec{x}, \lambda v,\vec v\subset y, \vec{y}. \model{{\mu}}(\vec{u};v,\vec v))_{\mu \in \close{\hnu}},(\lambda \vec{u}\subseteq \vec{x}, \lambda v,\vec v\subset y, \vec{y}. \amodel{{\mu}}(\vec{u};v,\vec v))_{\mu \in \open{\hnu}}\right)(\vec{x};y, \vec{y})
\end{array}
\]

Let us finally consider the case where $\hnu$ is the conclusion of an instance of  $\com$  with premise $\nu'$, where  $X$ is the set of buds labelling the rule. We have  $\open{\hnu}= \open{\nu'} \setminus X$ and $\close{\hnu}=\close{\nu'}\cup\{\hnu \}$. We want to find $(h_\nu)_{\nu \in \close \nu \cup \{ \hnu\}}$ defining the equations for $(\model{ \nu})_{\nu \in \close \nu \cup \{ \hnu\}}$   in such a way that points~\ref{enum:a}-\ref{enum:d} hold.  First, note that, by definition of cycle nf, $\model{\pder{\hnu}}(\vec{x}; \vec{y}) =\model{\pder{\nu'}}(\vec{x}; \vec{y})=\model{\pder \mu}(\vec{x}; \vec{y})$ for all $\mu \in X$. Since $ \open{\nu'}=\open \hnu \cup X$ and  the path from $\nu'$ to any  $\mu\in X$ must cross an instance of $\cnd_{\sq}$ by  Proposition~\ref{prop:structureofcycles}.\ref{enum:budcompanion0}, by  induction hypothesis on $\nu'$ there exists a family  $(h'_{\nu})_{\nu \in \close{\nu'}\cup \{ \nu'\}}$ such that:
\begin{equation}\label{eqn:1}
\arraycolsep=2pt
\begin{array}{rcll}
     \model{{\nu'}}(\vec{x}; \vec{y})
     &=& 
      h'_{\nu'}\left(( \lambda \vec{u}\subset \vec{x}, \lambda \vec v. \model{{\mu}}(\vec{u};\vec{v}))_{\mu \in \close{\nu'}},( \lambda \vec{u}\subseteq \vec{x}, \lambda \vec v. \amodel{{\mu}}(\vec{u};\vec{v}))_{\mu \in  \open{\hnu}},( \lambda \vec{u}\subset \vec{x}, \lambda \vec v. \amodel{{\mu}}(\vec{u};\vec{v}))_{\mu \in  X}\right)(\vec{x}; \vec{y})
\end{array}
\end{equation}
and for all $  \nu \in \close{\nu'}$:
\begin{equation}\label{eqn:2}
\arraycolsep=2pt
\begin{array}{rcll}
    \model{{\nu}}(\vec{x}; \vec{y})
     &=& 
      h'_{\nu}\left(( \lambda \vec{u}\subset \vec{x}, \lambda \vec v. \model{{\mu}}(\vec{u};\vec{v}))_{\mu \in \close{\nu'}},( \lambda \vec{u}\subseteq \vec{x}, \lambda \vec v. \amodel{{\mu}}(\vec{u};\vec{v}))_{\mu \in  \open{\hnu}},( \lambda \vec{u}\subseteq \vec{x}, \lambda \vec v. \amodel{{\mu}}(\vec{u};\vec{v}))_{\mu \in  X}\right)(\vec{x}; \vec{y})
   \end{array}
\end{equation}
{we fix the oracles as $\amodel{\mu}\dfn \model{\hnu}$ for all $\mu \in X$} and set $h_{\hnu}\dfn h'_{\nu'}$ so that:
\begin{equation}\label{eqn:3}
\arraycolsep=2pt
\begin{array}{rcll}
     \model{{\hnu}}(\vec{x}; \vec{y})
     &=& 
      h_{\hnu}\left(( \lambda \vec{u}\subset \vec{x}, \lambda \vec v. \model{{\mu}}(\vec{u};\vec{v}))_{\mu \in \close{\nu'}},( \lambda \vec{u}\subseteq \vec{x}, \lambda \vec v. \amodel{{\mu}}(\vec{u};\vec{v}))_{\mu \in  \open{\hnu}}, \lambda \vec{u}\subset \vec{x}, \lambda \vec v. \model{{\hnu}}(\vec{u};\vec{v})\right)(\vec{x}; \vec{y})\\
      &=& 
      h_{\hnu}\left(( \lambda \vec{u}\subset \vec{x}, \lambda \vec v. \model{{\mu}}(\vec{u};\vec{v}))_{\mu \in \close{\nu'}\cup \{\hnu\}},( \lambda \vec{u}\subseteq \vec{x}, \lambda \vec v. \amodel{{\mu}}(\vec{u};\vec{v}))_{\mu \in  \open{\hnu}}\right)(\vec{x}; \vec{y})\\
       &=& 
      h_{\hnu}\left(( \lambda \vec{u}\subset \vec{x}, \lambda \vec v. \model{{\mu}}(\vec{u};\vec{v}))_{\mu \in \close{\hnu}},( \lambda \vec{u}\subseteq \vec{x}, \lambda \vec v. \amodel{{\mu}}(\vec{u};\vec{v}))_{\mu \in  \open{\hnu}}\right)(\vec{x}; \vec{y})
\end{array}
\end{equation}
and, for all $  \nu \in \close{\nu'}$:
\begin{equation}\label{eqn:4}
\arraycolsep=2pt
\begin{array}{rcll}
    \model{{\nu}}(\vec{x}; \vec{y})
     &=& 
      h'_{\nu}\left(( \lambda \vec{u}\subset \vec{x}, \lambda \vec v. \model{{\mu}}(\vec{u};\vec{v}))_{\mu \in \close{\nu'}},( \lambda \vec{u}\subseteq \vec{x}, \lambda \vec v. \amodel{{\mu}}(\vec{u};\vec{v}))_{\mu \in  \open{\hnu}}, \lambda \vec{u}\subseteq \vec{x}, \lambda \vec v. \model{{\hnu}}(\vec{u};\vec{v})\right)(\vec{x}; \vec{y})
   \end{array}
\end{equation}
Notice that $\vec{u}\subseteq \vec{x}$ might not be strict in $ \lambda \vec{u}\subseteq \vec{x}, \lambda \vec v. \model{{\hnu}}(\vec{u};\vec{v})$. However, by~\Cref{eqn:3} $f_{\hnu}$ can be rewritten to $  h_{\hnu}\left(( \lambda \vec{u}\subset \vec{x}, \lambda \vec v. \model{{\mu}}(\vec{u};\vec{v}))_{\mu \in \close{\hnu}},( \lambda \vec{u}\subseteq \vec{x}, \lambda \vec v. \amodel{{\mu}}(\vec{u};\vec{v}))_{\mu \in  \open{\hnu}}\right)$, and so $h'_\nu$ can be rewritten to a $h^*_\nu$ such that:
\begin{equation}\label{eqn:5}
\arraycolsep=2pt
\begin{array}{rcll}
    \model{{\nu}}(\vec{x}; \vec{y})
     &=& 
      h^*_{\nu}\left(( \lambda \vec{u}\subset \vec{x}, \lambda \vec v. \model{{\mu}}(\vec{u};\vec{v}))_{\mu \in \close{\hnu}},( \lambda \vec{u}\subseteq \vec{x}, \lambda \vec v. \amodel{{\mu}}(\vec{u};\vec{v}))_{\mu \in  \open{\hnu}}\right)(\vec{x}; \vec{y})
   \end{array}
\end{equation}
we set $h_\nu\dfn h^*_\nu$. Points~\ref{enum:a} and~\ref{enum:b} are straightforward.  Point~\ref{enum:d} follows  by noticing that the construction does not affect the safe arguments.

\end{proof}

Finally, we can establish the main result of this paper:
\begin{cor}\label{cor:main-theorem} 
We have the following:
\begin{itemize}
    \item $f(\vec x; )\in \cbc$ if and only if $f(\vec x)\in \fptime$;
    \item $f(\vec x; )\in \ncbc$ if and only if $f(\vec x)\in \felementary$.
\end{itemize}
\end{cor}
\begin{proof}
[Proof sketch]
Soundness ($\Rightarrow$) follows from \Cref{lem:short-translation-lemma}, by showing that $\bcpp$ and $\nbcpp$ are closed under \emph{simultaneous} versions of their recursion schemes.
Completeness ($\Leftarrow$) follows  from~\Cref{thm:bc-in-cbc} and \Cref{thm:nbc-in-ncbc}. 
\end{proof}

\section{Conclusions and further remarks}\label{sec:conclusion}

In this work we presented two-tiered circular type systems $\cbc$ and $\ncbc$ and showed that they capture polynomial-time and elementary computation, respectively.
This is the first time that methods of circular proof theory have been applied in implicit computational complexity (ICC).
Along the way we gave novel relativised algebras for these classes based on safe (nested) recursion on well-founded relations.

Since the conference version \cite{CurziDas:lics22} of this work was published, other works in Cyclic Implicit Complexity have appeared, building on the present work, in particular in the setting of \emph{non-uniform computation} \cite{CurziDas:csl23,AcclavioCurziGuerrieri:csl24}.

\subsection{Unary notation and linear space}
It is well-known that  $\flinspace$, i.e.~the class of functions computable in linear space, can be captured  by reformulating $\bc$ in  \emph{unary} notation (see~\cite{bellantoniTESIS}). A similar result can be obtained for  $\cbc$ by just defining a unary version of the conditional in $\bcpp$ (similarly to the ones in \cite{Das2021,Kuperberg-Pous21}) and by adapting the proofs  of~\Cref{lem:boundinglemma},~\Cref{thm:bc-in-cbc}  and~\Cref{lem:short-translation-lemma}.  On the other hand, $\ncbc$ is (unsurprisingly) not sensitive to such choice of notation.

 \subsection{On unnested recursion with compositions}
  We believe that \Cref{lem:boundinglemma} can be adapted to establish a polynomial bound on the growth rate  for the function algebra extending $\bcnorec$ by `Safe Composition During Safe Recursion', cf.~\Cref{rem:errata}. 
  We conjecture that this  function algebra and its extension to recursion on permutation of prefixes (cf.~\Cref{defn:saferec-pp}) capture precisely the class $\fpspace$. 
  Indeed, several function algebras for $\fpspace$ have been proposed in the literature, many of which involve variants of~\eqref{eq:safe-rec-w-comp-dur-rec-no-oracles} (see~\cite{Leivant-pspace,Oitavem-pspace}).
%
  These recursion schemes reflect the parallel nature of polynomial space functions, which can be defined in terms of alternating polynomial time computation. 
  We suspect that a circular proof theoretic characterisation of this class can thus be achieved by extending $\cbc$ with a `parallel' version of the  cut rule and by adapting the left-leaning criterion appropriately.
Parallel cuts might also play a fundamental role for potential circular proof theoretic characterisations of circuit complexity classes, like $\alogtime$ or $\nc$.

\subsection{Towards higher-order cyclic implicit complexity}
It would be pertinent to pursue higher-order versions of both $\ncbc$ and $\cbc$, in light of precursory works \cite{Das2021,Kuperberg-Pous21} in circular proof theory as well as ICC {\cite{Hofmann97,Leivant99,BellantoniNS00}}.
 In the case of polynomial-time, for instance,
 a soundness result for some higher-order version of $\cbc$ might follow by translation to (a sequent-style formulation of) Hofmann's $\slr$~\cite{Hofmann97}. 
 Analogous translations might be defined for a higher-order version of $\ncbc$ once the linearity restrictions on the recursion operator of $\slr$ are dropped. 
 Finally, as $\slr$ is essentially a subsystem of G\"{o}del's system $\tgodel$, such translations could refine the results on the  abstraction complexity (i.e.~type level) of the circular version of system $\tgodel$  in~\cite{Das2021,Das2021-preprint}. 
 
\paragraph{Acknowledgements}

We would like to thank {the anonymous reviewers for their valuable comments and suggestions. We are also indebted to} Dominik Wehr for pointing out a minor error in the conference version of this paper~\cite{CurziDas:lics22} (see~\Cref{rem:errata}).
This work was supported by a UKRI Future Leaders Fellowship, ‘Structure vs Invariants in Proofs’ (project reference MR/S035540/1),   by the Wallenberg Academy Fellowship Prolongation 
project `Taming Jörmungandr: The Logical Foundations of Circularity' (project reference 251080003), and by the
VR starting grant “Proofs with Cycles in Computation” (project reference 251088801).

\bibliographystyle{ACM-Reference-Format}
\bibliography{main}


\clearpage
\appendix

\section{Computational expressivity of (progressing) coderivations}
\label{sec:computational-strength}
\subsection{Examples of coderivations for~\Cref{prop:computational-props-coderivations}}

In light of~\Cref{prop:modal-nonmodal-distinction}, we shall simply omit modalities in (regular) (progressing) coderivations in what follows, i.e.~we shall regard any formula in the antecedent of a sequent as modal and we shall omit applications of $\sqr$. Consequently,  we shall write e.g.~$\cut$ instead of $\cut_{\sq}$ (and so on) and avoid writing semicolons in the semantics of a coderivation.

From here, \Cref{item:type1-complete,item:turing-complete} from \Cref{prop:computational-props-coderivations} are proved by way of the following examples. 

\begin{exmp}[Extensional completeness at type 1]\label{exmp:extensional-completeness-type-1}   For any function  $f: \Nat^k \to \Nat$ there is a progressing coderivation $\completeness$ such that $\denot\completeness=f$. Proceeding by induction on $k$, if $k=0$ then we use the rules $\zero, \succ 0, \succ 1 , \cut$ to construct $\completeness$ defining the natural number $f$. Otherwise, suppose $f:\Nat \times \Nat^k \to \Nat$ and define $f_n$ as $f_n(\vec x)=f(n, \vec x)$. We construct the coderivation defining $f$ as follows:
\[
\tiny
\vlderivation{
\vliin{\cut}{}{\n, \vec \n \seqar \n}
{
   \vltr{\binarytounary}{\n,  \vec \n \seqar \red{\n}}{\vlhy{\ }}{\vlhy{\ }}{\vlhy{\ }}
}
{
 \vliin{\cnd}{}{\red{\n}, \vec \n, \seqar \n}
  {
        \vltr{f_0}{\vec \n \seqar \n}{\vlhy{\ }}{\vlhy{\ }}{\vlhy{\ }}
  }
   {
        \vliin{\cnd}{}{\red{\n}, \vec \n \seqar \n}
  {
        \vltr{f_1}{\vec \n \seqar \n}{\vlhy{\ }}{\vlhy{\ }}{\vlhy{\ }}
  }
  {
        \vliin{\cnd}{}{\red{\n}, \vec \n \seqar \n}
  {
        \vltr{f_2}{\vec \n \seqar \n}{\vlhy{\ }}{\vlhy{\ }}{\vlhy{\ }}
  }
  {
        \vlin{\cnd}{}{\red{\n}, \vec \n \seqar \n}{\vlhy{\vdots}}
  }
    }
  }
}
}
\]
where $\binarytounary$ is a coderivation converting $n$ into $\succ 1 \overset{n}{\ldots} \succ 1\zero$, and we omit the second premise of $\cnd$ as it is never selected. Notice that $\completeness$  is progressing, as red formulas $\red{\n}$ (which are modal) form a progressing thread. 

\end{exmp}

The above example illustrates the role of regularity as a \emph{uniformity} condition:  regular coderivations admit a finite description  (e.g.~a finite tree with backpointers), so that they define  computable functions. In fact, it turns out that any (partial) recursive function  is $\bcnorec$-definable by a regular coderivation. This can be easily inferred from  the next two examples using Proposition~\ref{prop:modal-nonmodal-distinction}.

\begin{exmp}[Primitive recursion] \label{exmp:primitive-recursion} Let  $f(x,\vec x)$ be defined by primitive recursion from  $g(\vec x)$ and $h(x, \vec x,y)$. Given  coderivations $\gfunction$ and $\hfunction$ defining $g$ and $h$ respectively, we construct the following coderivation $\primrec$:
\[
\tiny
\vlderivation{
\vliin{\cut}{}{\red{\n}, \vec \n \seqar \n}
{\vltr{\binarytounary}{\red{\n} \seqar \blue{\n}}{\vlhy{\ }}{\vlhy{\ }}{\vlhy{\ }}}{ 
     \vliin{\cnd}{\bullet}{\underline{\blue{\n}},\red{\n},   \vec \n  \seqar \n}
     {
    \vltr{\gfunction}{ \vec \n \seqar \n}{\vlhy{\ }}{\vlhy{\ }}{\vlhy{\ }}
     }
     {
        \vliin{\cut}{}{\blue{\n},\red{\n},  \vec \n \seqar \n}{
                  \vltr{\predecessor}{\red{\n} \seqar  \orange{\n}}{\vlhy{\ }}{\vlhy{\ }}{\vlhy{\ }}
                 }
                 {
            \vliin{\cut}{}{\blue{\n},\orange{\n},  \vec \n \seqar \n}{
                  \vlin{\cnd}{\bullet}{\blue{\n},\orange{\n},  \vec \n \seqar \green{\n}}{\vlhy{\vdots}}
                 }{
             \vltr{\hfunction}{\orange{\n},   \vec \n, \green{\n} \seqar \n}{\vlhy{\ }}{\vlhy{\ }}{\vlhy{\ }}
                 }
                 }
    }
}
}
\]
where $\binarytounary$ is the coderivation converting $n$ into $\succ 1\overset{n}{\ldots} \succ 1 \zero$, $\predecessor$ is a coderivation defining unary predecessor, and we avoid writing the second premise of $\cnd$ as it is never selected.  Notice that $\primrec$  is progressing, as blue formulas $\blue{\n}$ (which are modal) form a progressing thread contained in the infinite branch that loops on $\bullet$. From the associated equational program we obtain:
\[
\arraycolsep=2pt
\begin{array}{rcll}
\denot{\primrec_{\epsilon}}  (x, \vec x) & = &  \denot{\primrec{1}}(\denot{\binarytounary}(x),x, \vec x)\\
    \denot{\primrec_{1}}  (0,x, \vec x) & = & g(\vec x)\\
     \denot{\primrec_{1}}  (\succ 1z, x , \vec x ) & = & h(\denot{\predecessor}(x), \vec x,  \denot{\primrec_{1}}(z,\denot{\predecessor}(x), \vec x))
\end{array}
\]
so that $\denot\primrec=\denot{\primrec{\epsilon}}=f$.
\end{exmp}

\begin{exmp}
[Unbounded search] Let $g(x,\vec{x})$ be a function, and let $f(\vec x)\dfn \mu x. (g(x, \vec x)=0)$  be the unbounded search function obtained by applying the minimisation operation on $g(\vec x)$. Given  a coderivation $\gfunction$  defining $g$, we construct the following coderivation $\unbounded$:
\[
\tiny
\vlderivation{
\vliin{\cut}{}{ \vec \n \seqar \n}{
    \vlin{\zero}{}{\seqar\n }{\vlhy{\ }}
}{
\vliin{\cut}{\bullet}{\blue{\n},  \vec \n \seqar \n}
{
\vltr{\gfunction}{\blue{\n},  \vec \n \seqar \red{\n}}{\vlhy{\ }}{\vlhy{\ }}{\vlhy{\ }}
}
{
     \vliin{\cnd}{}{\red{\underline{\n}},\blue{\n},  \vec \n \seqar \n}{
    \vlin{\id}{}{\blue{\n} \seqar \n}{\vlhy{\ }}
     }{ 
         \vliin{\cut}{}{\blue{\n},  \vec \n \seqar \n}{
               \vltr{\successor}{
                \blue{\n} \seqar \orange{\n}
                 }{
                 \vlhy{\ }
                }{
                \vlhy{\ }
                }
                {\vlhy{\ }
                }
      }{
      \vlin{\cut}{\bullet}{\orange{\n},   \vec \n \seqar \n}{\vlhy{\vdots}}
      }
     }
}
{
}
}
}
\]
where the coderivation $\binarytounary$  computes the unary successor, and we identify the sub-coderivations corresponding to the second and the third premises of the conditional rule.   It is easy to check that the above coderivation is regular but not progressing, as  threads containing  principal formulas for $\cnd$ are finite.  From the associated equational program we obtain:
\[
\arraycolsep=2pt
\begin{array}{rcll}
 \denot{\unbounded_{\epsilon}}  (\vec x) & = & \denot{\unbounded_{1}}(0, \vec x) \\
  \denot{\unbounded_{1}}  (x,\vec x) & = & \denot{\unbounded_{11}}(\denot{\gfunction}(x,\vec{x}) ,x, \vec x) \\
   \denot{\unbounded_{11}}(0,x, \vec x) & = & x \\
   \denot{\unbounded_{11}}(\succ 0z,x, \vec x) & = &\denot{\unbounded_{1}}( \denot\successor(x), \vec x) & z \neq 0 \\
   \denot{\unbounded_{11}}(\succ 1z,x, \vec x) & = & \denot{\unbounded_{1}}( \denot\successor(x), \vec x)
\end{array}
\]
Which searches for the least $x \geq 0$ such that $g(x,\vec x)=0$. Hence, $\denot{\unbounded_\epsilon}(\vec x)=\denot{\unbounded}(\vec x)=f(\vec x)$.
\end{exmp}

\subsection{Completeness for type 1 primitive recursion}

\cite{Kuperberg-Pous21} shows that, in the absence of contraction rules, \emph{only} the primitive recursive functions are so definable (even when using arbitrary finite types).
It is tempting, therefore, to conjecture that the regular and progressing $\bcnorec$-coderivations define \emph{just} the primitive recursive functions.

However there is a crucial difference between our formulation of cut and that in \cite{Kuperberg-Pous21}, namely that ours is context-sharing and theirs is context-splitting.
Thus the former admits a quite controlled form of contraction that, perhaps surprisingly, is enough to simulate the type 0 fragment $\ct_0$ from \cite{Das2021} (which has explicit contraction).

\begin{proof}[Proof of \Cref{item:type1-prim-rec} from \Cref{prop:computational-props-coderivations}]
First, by~\Cref{prop:modal-nonmodal-distinction} we can neglect modalities in $\bcnorec$-coderivations (and semicolons in the corresponding semantics). The left-right implication thus follows from the natural inclusion of our system into $\ct_0$ and \cite[Corollary 80]{Das2021}. 

Concerning the right-left implication, 
we employ a formulation $\ntgodel 1 (\vec a)$ of the type 1 functions of $\ntgodel 1$ over oracles $\vec a$ as follows. 
$\ntgodel 1(\vec a)$ is defined just like the primitive recursive functions, including oracles $\vec a$ as initial functions, and by adding the following version of type $1$ recursion:
  \begin{itemize}
      \item if $g(\vec x) \in \ntgodel 1 (\vec a)$ and $h(a)(x,\vec x) \in \ntgodel 1 (a, \vec a)$, then the $f(x, \vec x)$ given by,
    \[
    \arraycolsep=2pt
    \begin{array}{rcl}
        f(0, \vec x) & = & g(\vec x) \\
        f( \succ i x , \vec x) & = & h_i (\lambda \vec u . f(x, \vec u)) (x, \vec x)
    \end{array}
    \]
    is in $\ntgodel 1 (\vec a)$.
  \end{itemize}
  It is not hard to see that the type 1 functions of $\ntgodel 1$ are precisely those of $\ntgodel 1 (\emptyset)$ (e.g.\ because of \cite[Appendix A]{Das2021-preprint}).  
  We now proceed by showing how to define the above scheme by regular and progressing $\bcnorec$-coderivations.




Given a function $f(\vec x) \in \ntgodel 1 (\vec a)$, we construct a regular progressing coderivation of $\bcnorec$,
\[
\vltreeder{\der_f (\vec a)}{ \vec \n \seqar \n}{}{ \left\{
\vlinf{a_i}{}{ \vec \n \seqar \n}{}
\right\}_i }{}
\]
computing $f(\vec x)$ over $\vec a$, by induction on the definition of $f(\vec x)$.

If $f(\vec x)$ is an initial function,  an oracle, or $f(\vec x) = h(g(\vec x), \vec x)$ then the construction is easy (see, e.g., the proof of~\Cref{thm:nbc-in-ncbc}).


Let us  consider the case of recursion (which subsumes usual primitive recursion at type 0).
Suppose $f(x, \vec x) \in \ntgodel 1 (\vec a)$ where,
\[
\arraycolsep=2pt
\begin{array}{rcl}
     f(0, \vec x) & = & g(\vec x) \\
     f(\succ 0 x , \vec x) & = & h_0(\lambda \vec u . f(x,\vec u) ) (x, \vec x) \\
     f(\succ 1 x , \vec x) & = & h_1(\lambda \vec u . f(x,\vec u) ) (x, \vec x)
\end{array}
\]
where $\der_g(\vec a)$ and $\der_h(a, \vec a)$ are already obtained by the inductive hypothesis.
We define $\der_f(\vec a)$ as follows:
\[
\vlderivation{
    \vliiin{\cnd}{\bullet}{\n,  \vec \n \seqar \n }{
        \vltr{\der_g (\vec a)}{ \vec \n \seqar \n}{\vlhy{\quad }}{\vlhy{\quad }}{\vlhy{\quad }}
    }{
        \vltr{\n, \der_{h_0}(\der_f(\vec a),\vec a)}{\n ,  \vec \n \seqar \n}{\vlhy{}}{
            \vlin{}{\bullet}{\n,  \vec \n \seqar \n}{\vlhy{\vdots}}
        }{\vlhy{}}
    }{
        \vltr{\n, \der_{h_1}(\der_f(\vec a),\vec a)}{\n ,  \vec \n \seqar \n}{\vlhy{}}{
            \vlin{}{\bullet}{\n,  \vec \n \seqar \n}{\vlhy{\vdots}}
        }{\vlhy{}}
    }
}
\]
Note that the existence of the second and third coderivations above the conditional is given by a similar construction to that of \Cref{lem:to-prove-ncbc-complete-for-nbc} (see also \cite[Lemma 42]{Das2021}).
\end{proof}

\subsection{On the `power' of contraction}
Given the computationally equivalent system $\ct_0$ with contraction from \cite{Das2021}, we can view the above result as a sort of `contraction admissibility' for regular progressing $\bcnorec$-coderivations.
Let us take a moment to make this formal.

Call $\bcnorec+ \{\cntr_\n, \cntr_\sn \}$ the extension of $\bcnorec$ with the rules $\cntr_\n$ and $\cntr_\sn$ below:
\begin{equation} \label{eqn:explicit-contraction}
  \vlinf{\cntr_\n}{}{\orange \Gamma, \blue \n\seqar B}{\orange \Gamma, \blue \n , \blue \n \seqar B}
\qquad 
\vlinf{\cntr_\sq}{}{\orange \Gamma, \blue \sn \seqar B}{\orange \Gamma, \blue \sn , \blue \sn \seqar B}
\end{equation}
 where  the semantics for the new system extends the one for $\bcnorec$ in the obvious way, and the notion of (progressing) thread is induced by the given colouring.\footnote{Note that the totality argument of \Cref{prop:prog-total} still applies in the presence of these rules, cf.~also \cite{Das2021}.} 
We have:
\begin{cor} $f(\vec x; )$ is definable by a regular progressing  $\bcnorec+ \{\cntr_\n, \cntr_\sn \}$-coderivation  iff it is definable by a regular progressing $\bcnorec$-coderivation.
\end{cor}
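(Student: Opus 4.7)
The plan is to derive the corollary directly from Theorem~\ref{prop:BNOREC=CT0}. The right-to-left direction is immediate, since any regular progressing $\bcnorec$-coderivation is, \emph{a fortiori}, a regular progressing $\bcnorec + \{\cntr_\n, \cntr_\sn\}$-coderivation.

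For the non-trivial left-to-right direction, my strategy is to prove the analogue of Theorem~\ref{prop:BNOREC=CT0} for the extended system: $f(\vec x;)$ is defined by a regular progressing $\bcnorec + \{\cntr_\n, \cntr_\sn\}$-coderivation iff $f(\vec x) \in \ntgodel 1$. Once this is established, the two systems coincide, both defining exactly $\ntgodel 1$, and the corollary follows. The backward (completeness) direction of this analogue is inherited for free from Theorem~\ref{prop:BNOREC=CT0}, since $\bcnorec$-coderivations are already $\bcnorec + \{\cntr_\n, \cntr_\sn\}$-coderivations. The forward (soundness) direction is where the real work lies: I would adapt the simulation of regular progressing $\bcnorec$-coderivations inside $\ct_0$ that underlies Theorem~\ref{prop:BNOREC=CT0}. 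The key observation is that $\ct_0$ (as formulated in \cite{Das2021}) already contains contraction as a primitive, so the translation of the remaining $\bcnorec$-rules can be reused verbatim, while the instances of $\cntr_\n, \cntr_\sn$ in \eqref{eqn:explicit-contraction} are mapped directly to their $\ct_0$-counterparts.

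The main technical subtlety is the interaction between contraction and threading: both occurrences of the contracted formula in the premiss are immediate ancestors of the single contractum (as per the colouring convention of \Cref{defn:ancestry}, extended to the new rules), so the immediate ancestry graph branches upwards at contraction nodes. This is, however, precisely the threading behaviour already built into $\ct_0$, so the progressing criterion is preserved by the translation without further modification, and regularity is trivially preserved since the translation is local. Combining the resulting soundness with the backward direction of Theorem~\ref{prop:BNOREC=CT0} yields the forward direction of the corollary.
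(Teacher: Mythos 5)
Your proposal is correct and follows essentially the same route as the paper: the paper's proof idea likewise embeds the contraction-augmented system into $\ct_0$, invokes \cite[Corollary 80]{Das2021} to identify its type-1 functions with $\ntgodel 1$, and then applies the completeness half of Theorem~\ref{prop:BNOREC=CT0} to land back in regular progressing $\bcnorec$-coderivations. Your additional remarks on how contraction interacts with the immediate-ancestry graph and why the translation preserves progressiveness merely flesh out the equivalence with $\ct_0$ that the paper asserts in one line.
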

\begin{proof}
[Proof idea]
The former system is equivalent to $\ct_0$ from \cite{Das2021}, whose type 1 functions are just those of $\ntgodel 1$ by \cite[Corollary 80]{Das2021}, which are all defined by regular progressing $\bcnorec$-coderivations by \Cref{prop:computational-props-coderivations} \Cref{item:type1-prim-rec}.
\end{proof}

\begin{rem}\label{rem:explicit-contraction}
The reader may at this point wonder if a  direct  ‘contraction-admissibility' argument exists for the rules in~\eqref{eqn:explicit-contraction}. First, notice that  $\cntr_\n$ can be derived in $\bcnorec$:
\[
     \vlderivation{
     \vliin{\cut_\n}{}{\Gamma, \blue \n  \seqar B}
{
   \vliq{\wk_\n}{}{\Gamma, \n \seqar \red \n}{
   \vlin{\id}{}{ \n \seqar \n}{\vlhy{}}}
}
{ 
 \vlhy{\Gamma, \blue \n, \red \n   \seqar B  }
}
}
\]
While a similar derivation exists for $\cntr_{\sq}$,
note that this crucially does not preserve the same notion of thread (cf.~colours above) and so does not, a priori, preserve progressiveness.
\end{rem}

\begin{exmp}
[Ackermann-P\'eter]
As suggested by \Cref{prop:computational-props-coderivations} \Cref{item:type1-prim-rec}, regular progressing $\bcnorec$ coderivations are able to define the Ackermann-P\'eter function $A(x,y)$,
\[
\arraycolsep=2pt
\begin{array}{rcl}
    A(0,y) & = & x+1 \\
    A(x+1,0) & = & A(x,1) \\
    A(x+1,y+1) & = & A(x,A(x+1,y))
\end{array}
\]
which is well-known to not be (type 0) primitive recursive, but is nevertheless type 1 primitive recursive.

The usual cyclic representation of $A(x,y)$ using only base types, e.g.\ as in \cite{Das2021,Kuperberg-Pous21}, mimics the equational program above, namely applying case analysis on the first input before applying case analysis on the second input.
It is important in this construction to be able to explicitly contract the first input, corresponding to $x$, due to the third line of the equational program above.

In our context-sharing setting, without explicit contraction, we may nonetheless represent $A(x,y)$ by conducting a case analysis on the second input $y$ first, and including some redundancy, as in the regular progressing coderivation in \Cref{fig:ack-pet}.
To facilitate readability, we again omit modalities and work purely in unary notation, representing a natural number $n$ by $\succ1 \overset{n}\ldots \succ 10$.

\begin{figure*}
\centering
\begin{equation*}
\begin{array}{c}
\hspace{8em}
\vlderivation{
\vliin{\cnd}{\bullet}{\red{\n},\underline{\blue{\n}}\seqar \n}{
    \vliin{\cnd}{}{\underline{\red{\n}}\seqar \n}{
        \vliq{1}{}{\seqar \n}{\vlhy{}}
    }{
        \vlin{1}{}{\red{\n} \seqar \n}{
        \vlin{\cnd}{\bullet}{\red{\n},\underline{\n}\seqar\n}{\vlhy{\vdots (1)}}
        }
    }
}{
    \vliin{\cut}{}{\red{\n},\blue{\n} \seqar \n}{
        \vlin{\cnd}{\bullet}{\red{\n},\underline{\blue{\n}}\seqar \n}{\vlhy{\vdots(2)}}
    }{
        \vliin{\cnd}{}{\underline{\red{\n}},\blue{\n},\n \seqar \n}{
            \vliq{+2}{}{\blue{\n} \seqar \n}{\vlhy{}}
        }{
            \vlin{\wk}{}{\red{\n},\underline{\blue{\n}},\n \seqar \n}{
            \vlin{\cnd}{\bullet}{\red{\n},\underline{\n} \seqar \n}{\vlhy{\vdots (3)}}
            }
        }
    }
}
}
\end{array}
\end{equation*}
\caption{Ackermann-P\'eter function computed by a regular progressing $\bcnorec$-coderivation (without explicit contractions).}
\label{fig:ack-pet}
\end{figure*}

The verification of the semantics of this coderivation is routine. 
To see that it is progressing, we may conduct a case analysis on the set $L$ of infinitely often visited loops, among $(1),(2),(3)$, in an arbitrary branch:
\begin{itemize}
    \item if $L = \emptyset$ then the branch is finite;
    \item if $L=\{(1)\}$ then there is a progressing thread on the red $\red\n$;
    \item if $L=\{(2)\}$ then there is a progressing thread on the blue $\blue \n$ (note that the red $\red \n $ thread does not progress);
    \item if $L=\{(3)\}$ then there is a progressing thread on the red $\red \n$;
    \item if $L=\{(1),(2)\}$ then there is a progressing thread on the red $\red\n$: on each iteration of $(1)$ the thread progresses, and remains intact (but does not progress) on each iteration of $(2)$;
    \item if $L=\{(1),(3)\}$ then there is a progressing thread on the red $\red \n$: the thread progresses on each iteration of either $(1)$ or $(3)$;
    \item if $L=\{(2),(3)\}$ then there is a progressing thread on the red $\red\n$: on each iteration of $(2)$ the thread remains intact (but does not progress), and on each iteration of $(3)$ the thread progresses;
    \item if $L=\{(1),(2),(3)\}$ then there is a progressing thread on the red $\red\n$: the thread progresses on each iteration of $(1)$ or $(3)$, and remains intact (but does not progress) on each iteration of $(3)$.
\end{itemize}
\end{exmp}

\end{document}